\newif\ifArxiv
\newif\ifPreprint
\newtheorem{theorem}{Theorem}
\newtheorem{lemma}[theorem]{Lemma}
\newtheorem{corollary}[theorem]{Corollary}
\newtheorem{proposition}[theorem]{Proposition}
\newtheorem{definition}[theorem]{Definition}
\newtheorem{problem}[theorem]{Problem}
\newtheorem{remark}[theorem]{Remark}
\def\mkfancyprefix#1#2{%
\expandafter\def\csname fancyref#1labelprefix\endcsname{#1}%
\begingroup\def\x{\endgroup\frefformat{plain}}%
    \expandafter\x\csname fancyref#1labelprefix\endcsname
    {\MakeLowercase{#2}\fancyrefdefaultspacing##1}%
\begingroup\def\x{\endgroup\Frefformat{plain}}%
    \expandafter\x\csname fancyref#1labelprefix\endcsname
    {#2\fancyrefdefaultspacing##1}%
\begingroup\def\x{\endgroup\frefformat{vario}}%
    \expandafter\x\csname fancyref#1labelprefix\endcsname
    {\MakeLowercase{#2}\fancyrefdefaultspacing##1##3}%
\begingroup\def\x{\endgroup\Frefformat{vario}}%
    \expandafter\x\csname fancyref#1labelprefix\endcsname
    {#2\fancyrefdefaultspacing##1##3}%
}
\fancyrefchangeprefix{\fancyrefeqlabelprefix}{eqn}
\newcommand{\cref}[1]{\Fref{#1}}
\renewcommand{\vref}[1]{\Fref[vario]{#1}}
\newcommand\ifnull[3]{%
  \ifx\null#1%
    #2%
  \else%
    #3%
  \fi}
\newcommand{\Grobner}{Gr\"o{}bner\xspace} 
\newcommand{\word}[1]{\textnormal{#1}}
\def\easycyrsymbol#1{\mathord{\mathchoice
  {\mbox{\fontsize\tf@size\z@\usefont{T2A}{cmr}{m}{n}#1}}
  {\mbox{\fontsize\tf@size\z@\usefont{T2A}{cmr}{m}{n}#1}}
  {\mbox{\fontsize\sf@size\z@\usefont{T2A}{cmr}{m}{n}#1}}
  {\mbox{\fontsize\ssf@size\z@\usefont{T2A}{cmr}{m}{n}#1}}
}}
\newcommand{\Ya}{\easycyrsymbol{\CYRYA}}
\newcommand\half{\tfrac 1 2}
\newcommand\defeq{\triangleq}            
\newcommand\modop{\ \word{mod}\ }         
\newcommand{\mo}{{-1}}                   
\newcommand{\T}[1]{^{(#1)}}              
\renewcommand\vec[1]{\bm{#1}}            
\newcommand\mtrx[1]{\begin{pmatrix}#1\end{pmatrix}} 
\newcommand{\LM}{\textnormal{\footnotesize LM}}
\newcommand{\LC}{\textnormal{\footnotesize LC}}
\newcommand{\diag}{\word{diag}}
\newcommand{\Mod}[1]{\mathcal{#1}} 
\newcommand{\dimsq}[1]{#1 \times #1}
\newcommand{\floor}[1]{\lfloor #1 \rfloor}
\newcommand{\ceil}[1]{\lceil #1 \rceil}
\newcommand{\rowdeg}[1]{\word{rowdeg}\,#1} 
\newcommand{\weight}[1]{\word{weight}(#1)}  
\newcommand\F{\mathbb F\xspace} 
\newcommand\FF[1]{\mathbb F_{#1}\xspace} 
\newcommand\Fqq{\FF{q^2}}
\newcommand\RR{\mathbb R\xspace}
\newcommand\ZZ{\mathbb Z\xspace}
\newcommand\NN{\mathbb N\xspace}
\newcommand{\Csel}[2]{#1_{[#2]}} 
\newcommand{\LP}{\textnormal{\footnotesize LP}}
\newcommand{\OD}[1]{{\Delta(#1)}} 
\newcommand{\Transp}{^\top}
\newcommand{\dG}{d^\star}
\newcommand{\PowerSeriesRing}[2][x]{#2\llbracket #1 \rrbracket} 
\newcommand{\WmapO}[1]{\Phi_{#1}}
\newcommand{\Wmap}{\WmapO{\nu, \vec w}}
\newcommand{\WmapH}{\WmapO{q, \vec w}}
\newcommand{\PmapO}[1]{\Psi_{#1}}
\newcommand{\Pmap}{\PmapO{\nu, \vec w}}
\newcommand{\PmapH}{\PmapO{q, \vec w}}
\newcommand{\params}[3]{[#1,\ #2,\ #3]}
\newcommand{\Code}{\mathcal C}
\newcommand\Errs{\mathcal E} 
\newcommand{\sell}{{s,\ell}}
\newcommand{\tauPow}{\tau_{\word{Pow}}}
\newcommand{\tauGS}{\tau_{\word{GS}}}
\newcommand\Herm{\mathcal H}
\newcommand\Ring{\Ya}
\newcommand\Places{\mathcal P}
\newcommand\PlacesAff{\mathcal P^\star}
\renewcommand{\L}{\mathcal L}
\newcommand\dimL{\word{dim}\ \mathcal L}
\newcommand\order{\deg_{\Herm}}
\newcommand\orderz[1]{\deg_{\Herm,#1}}
\newcommand\vectify[1]{\ifnull{#1}{{\curlyvee}}{{\curlyvee}(#1)}}
\newcommand\vectifyinv[1]{\ifnull{#1}{{\curlyvee^\mo}}{{\curlyvee^\mo}(#1)}}
\newcommand\vectifyz[1]{\ifnull{#1}{{\curlyvee_{\!\!z}}}{{\curlyvee_{\!\!z}}(#1)}}
\newcommand\matmult[1]{{\amalg}_{#1}}
\newcommand\Div[1]{\word{div}(#1)}
\newcommand{\costPoly}[1]{\word{P}(#1)}
\newcommand\Oapp{O^{\scriptscriptstyle \sim}\!}
\algrenewcommand\alglinenumber[1]{{\scriptsize#1}}   
\algrenewcommand\algorithmicrequire{\textbf{Input:}} 
\algrenewcommand\algorithmicensure{\textbf{Output:}} 
\newcommand{\Ifline}[2]{\State \textbf{if }#1{ \textbf{then} }#2} 
\newcommand{\ass}{\leftarrow}
\begin{document}

\title{Sub-quadratic Decoding of One-point Hermitian Codes}

\author{Johan~S.~R.~Nielsen,
        Peter~Beelen%
        \thanks{%
          J.~S.~R.~Nielsen is with the GRACE Project, INRIA Saclay \& LIX, \'Ecole Polytechnique, France (e-mail: jsrn@jsrn.dk).
          P.~Beelen is with the Department of Applied Mathematics and Computer Science, Technical University of Denmark (e-mail: pabe@dtu.dk).
          \ifPreprint
          This is a preprint of work that has been published in IEEE Transactions of Information Theory.
          DOI: 10.1109/TIT.2015.2424415.
          \fi
          Copyright \copyright~2015 IEEE. Personal use of this material is permitted.  However, permission to use this material for any other purposes must be obtained from the IEEE by sending a request to pubs-permissions@ieee.org.
        }
}

\maketitle

\begin{abstract}
  We present the first two sub-quadratic complexity decoding algorithms for one-point Hermitian codes.
  The first is based on a fast realisation of the Guruswami--Sudan algorithm by using state-of-the-art algorithms from computer algebra for polynomial-ring matrix minimisation.
  The second is a Power decoding algorithm: an extension of classical key equation decoding which gives a probabilistic decoding algorithm up to the Sudan radius.
  We show how the resulting key equations can be solved by the matrix minimisation algorithms from computer algebra, yielding similar asymptotic complexities.
\end{abstract}

\begin{IEEEkeywords}
Hermitian codes, AG codes, list decoding, Guruswami--Sudan, Power decoding
\end{IEEEkeywords}

\section{Introduction}

\IEEEPARstart{I}{n} this article we examine fast decoding of one-point Hermitian codes beyond half the minimum distance.
First we give a new algorithm for constructing the interpolation polynomial in Guruswami--Sudan decoding.
Our approach is closely related to the interpolation algorithm proposed by Lee and O'Sullivan \cite{lee_list_2009}, where a satisfactory interpolation polynomial is found as a minimal element in a certain \Grobner basis.
In \cite{beelen_efficient_2010} Beelen and Brander reformulated the interpolation problem in terms of matrices with coefficients in $\Fqq[x]$.
The advantage of this reformulation is that the interpolation problem then reduces to solving a module minimisation problem, i.e., finding a minimal weighted-degree vector in the $\Fqq[x]$-row space of a certain explicit matrix.
The \Grobner basis algorithm in this reformulation then is replaced by a weighted row reduction algorithm.
Beelen and Brander \cite{beelen_efficient_2010} improved in this way the complexity of finding the interpolation polynomial given in \cite{lee_list_2009} by applying Alekhnovich's row reduction algorithm \cite{alekhnovich_linear_2005}.
For one-point Hermitian codes they obtained a decoding algorithm with quadratic complexity in the length of the code.

Instead of using Alekhnovich's row reduction algorithm, we propose to apply the row reduction algorithm by Giorgi, Jeannerod and Villard  (GJV) \cite{giorgi_complexity_2003}.
It turns out that a straightforward application of this algorithm on the explicit matrix given in \cite{beelen_efficient_2010} does not improve complexity.
However, using a different embedding than in \cite{beelen_efficient_2010} to reformulate the interpolation problem in terms of matrices with coefficients in $\Fqq[x]$, we do find an improvement.
The result is a sub-quadratic time algorithm to find the interpolation polynomial.
By describing a fast way to deal with the so-called root-finding step (based on the theory of power series and the root-finding algorithm in \cite{alekhnovich_linear_2005}), this results in a sub-quadratic realization of the Guruswami--Sudan algorithm for one-point Hermitian codes: $\Oapp(n^{(2+\omega)/3}\ell^\omega s)$, where $s$ and $\ell$ are the multiplicity and list size parameters of Guruswami--Sudan, and $\omega \leq 3$ is the exponent for matrix multiplication.
Here and later, $\Oapp$ denotes $O$ with $\log$-factors omitted.

Next we give a new derivation of Power decoding of one-point Hermitian codes, inspired by Gao decoding for Reed--Solomon codes \cite{nielsen_power_2014}, and show how to solve the resulting generalised key equation system in a fast way.
This gives rise to a second sub-quadratic complexity decoding algorithm: $\Oapp(n^{(2+\omega)/3}\ell^\omega)$, where $\ell$ is the ``powering'' parameter.

The methodology employed here applies equally well to the classical syndrome key equation of one-point Hermitian codes used in \cite{sakata_generalized_1995} for decoding up to half the minimum distance minus half the genus.
Our results therefore puts that approach into a simple and well-studied computational framework yielding several algorithms with better complexity than in \cite{sakata_generalized_1995}.

The article is organised as follows: In \cref{sec:codes}, the necessary background is given on one-point Hermitian codes as well as on solving the Lagrange interpolation problem over the Hermitian function field.
In \cref{ssec:modules}, module minimisation is explained, which will form the core behind the fast decoding methods described later in the article.
In \cref{ssec:mod_weights}, an essential ingredient is presented, namely the embedding that will be used to reformulate the interpolation step in the decoding of one-point Hermitian codes to a module minimisation problem.

In \cref{sec:gs}, module minimisation is applied to the interpolation step in the Guruswami--Sudan list decoding algorithm for one-point Hermitian codes and a sub-quadratic algorithm is obtained in this way.
By improving existing methods to deal with the root-finding part of the Guruswami--Sudan list decoding algorithm, this leads to a complete, sub-quadratic decoding algorithm.
We first give an introduction to the Guruswami--Sudan list decoding algorithm. Subsequently, in \cref{ssec:gsQ}, the interpolation step in this algorithm is reformulated as a module minimisation problem and the techniques from \cref{ssec:modules} are applied to solve this problem in sub-quadratic time.
Then in \cref{ssec:gs_root}, the root-finding problem is discussed.

Another decoding algorithm is described in \cref{sec:power}.
``Powered key equations'' are given in \cref{ssec:keyeq}, while again the module minimisation techniques from \cref{ssec:modules} are applied to solve them in \cref{ssec:keyeq_solving}, leading to a sub-quadratic ``power decoding'' algorithm.

We have implemented the decoding algorithms in Sage v6.4 \cite{stein_sage_????} and present some simulation results in \cref{sec:simulation}: we discuss the failure probability of either decoding method, as well as the speed of the algorithm on concrete parameters.

We finish the main part of the article with some concluding remarks in \cref{sec:conclusion}.
Both in the root finding step in the Guruswami--Sudan algorithm as in an important division step in the power decoding algorithm, we need some technical machinery involving power series as well as some other technical results.
These are explained in the appendices.

\section{One-point Hermitian codes}
\label{sec:codes}

Let $q$ be some prime power, and consider the curve $\Herm$ over the field $\Fqq$ defined by the following polynomial in $X,Y$:
\[
  \Herm(X,Y) = Y^q + Y - X^{q+1}.
\]
$\Herm$ is the Hermitian curve, and it is absolutely irreducible.
Let $F = \FF {q^2}(x,y)$ be the algebraic function field with full constant field $\Fqq$ achieved by extending $\Fqq(x)$ with a variable $y$ satisfying the relation $\Herm(x,y) = 0$.
For any divisor $D$, we denote by $\L(D)$ the Riemann--Roch space associated to $D$.

There are certain basic facts about $F$ which we will need. They can be found in for example \cite{stichtenoth_note_1988}.
\begin{proposition}
  \label{prop:herm_basics}
  The function field $F$ has genus $g = \half q(q-1)$ and $q^3 + 1$ rational places, which we will denote $\mathcal P = \{ P_1,\ldots,P_{q^3},P_\infty \}$.
  The place $P_\infty$ denotes ``the place at infinity'' being the only rational place occurring as a pole of either $x$ or $y$ (in fact it is a pole of both).
  The place $P_\infty$ is totally ramified in the extension $\Fqq(x,y)/\Fqq(x)$ of function fields and hence has ramification index $q$ in this extension.
  Furthermore define
  \[
  \Ring = \bigcup_{i=0}^\infty \L(iP_\infty).
  \]
  Then $\Ring = \Fqq[x,y]$.

  Let $\PlacesAff = \Places \setminus \{ P_\infty \}$.
  By a slight abuse of notation, we can identify elements of $\PlacesAff$ with pairs $(\alpha, \beta) \in \Fqq^2$.
  For any $\alpha$, let $B_\alpha \subset \Fqq$ be the set of $\beta$ such that $(\alpha,\beta) \in \PlacesAff$.
  Then $|B_\alpha| = q$ for all $\alpha$.
  Furthermore, we have $\Div{x - \alpha} = \sum_{\beta \in B_\alpha} (\alpha,\beta) - qP_\infty$.
\end{proposition}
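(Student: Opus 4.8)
\emph{Overall plan.} I would deduce everything from the two small rational subextensions $\FF{q^2}(x)\subseteq F$ and $\FF{q^2}(y)\subseteq F$. The one delicate point is that the ramification of $F/\FF{q^2}(x)$ above the place at infinity is \emph{wild} (its index $q$ is divisible by the characteristic), so a direct different computation there is awkward; I sidestep it by reading the genus off the \emph{tame} Kummer extension $F/\FF{q^2}(y)$, using $F/\FF{q^2}(x)$ only for statements that need no different.

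\emph{Finite places and the place at infinity.} The minimal polynomial of $y$ over $\FF{q^2}(x)$ is $T^q+T-x^{q+1}$, whose $T$-derivative is $qT^{q-1}+1=1$ because $p\mid q$; hence the discriminant of the power basis $1,y,\dots,y^{q-1}$ is, up to sign, the resultant of this polynomial with the constant $1$, i.e.\ a unit of $\FF{q^2}[x]$. Therefore $F/\FF{q^2}(x)$ is unramified at all finite places, and ramification can occur only above the pole of $x$ on $\FF{q^2}(x)$. At such a place $P$, with ramification index $e$: if $v_P(y)\ge0$ then $v_P(y^q+y)\ge0$, contradicting $v_P(x^{q+1})=-e(q+1)<0$; so $v_P(y)<0$, whence $v_P(y^q+y)=qv_P(y)=-e(q+1)$, forcing $q\mid e(q+1)$ and so $q\mid e$, i.e.\ $e=q$. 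Thus there is a unique place $P_\infty$ above infinity, totally ramified with index $q$, and $v_{P_\infty}(x)=-q$, $v_{P_\infty}(y)=-(q+1)$. In particular $P_\infty$ is the only pole of $x$, and it is the only pole of $y$ as well, since at a finite place $x^{q+1}=y^q+y$ is regular while $v_P(y)<0$ there would give $v_P(y^q+y)=qv_P(y)<0$.

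\emph{Genus and rational places.} For the genus I pass to $F/\FF{q^2}(y)$, which the relation $x^{q+1}=y^q+y=y(y^{q-1}+1)$ exhibits as a Kummer extension of degree $q+1$, tame since $\gcd(q+1,p)=1$ and $\mu_{q+1}\subseteq\FF{q^2}$ (of degree exactly $q+1$ because $y=0$ is a simple zero of $y^q+y$). A place of $\FF{q^2}(y)$ ramifies exactly when $q+1$ does not divide the order of $y^q+y$ there; these are the zeros of $y(y^{q-1}+1)$ (each simple, total degree $q$) and the pole of $y$ (order $q$, coprime to $q+1$), all of them totally ramified. Riemann--Hurwitz then gives $2g_F-2=(q+1)(-2)+q(q+1)=(q-2)(q+1)$, so $g_F=\half q(q-1)$. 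For the rational places, $P_\infty$ contributes one; the affine ones are the pairs $(\alpha,\beta)\in\FF{q^2}^2$ with $\beta^q+\beta=\alpha^{q+1}$. Since $(\beta^q+\beta)^q=\beta^{q^2}+\beta^q=\beta^q+\beta$, the $\FF q$-linear map $\wp\colon\beta\mapsto\beta^q+\beta$ sends $\FF{q^2}$ into $\FF q$, is not identically zero, hence surjects onto $\FF q$, so by rank--nullity its kernel has $q$ elements and $\wp$ is $q$-to-one onto $\FF q$. As $\alpha^{q+1}=\alpha\cdot\alpha^q\in\FF q$ for every $\alpha$, each $\alpha$ admits exactly $q$ values of $\beta$; thus $B_\alpha=\wp^{-1}(\alpha^{q+1})$ has cardinality $q$ for all $\alpha$, and there are $q^2\cdot q=q^3$ affine rational places, $q^3+1$ in total.

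\emph{The ring $\Ring$ and the divisor of $x-\alpha$.} Since $x$ and $y$ have poles only at $P_\infty$ we have $\FF{q^2}[x,y]\subseteq\Ring$, and $\Ring$ is precisely the integral closure of $\FF{q^2}[x]$ in $F$. Because the discriminant of the order $\FF{q^2}[x][y]$ (with integral basis $1,y,\dots,y^{q-1}$) is a unit, that order is already maximal, i.e.\ equals the integral closure, so $\Ring=\FF{q^2}[x][y]=\FF{q^2}[x,y]$. Finally, in $\FF{q^2}(x)$ the function $x-\alpha$ has a simple zero at the place $x=\alpha$ and a simple pole at infinity; the place $x=\alpha$ is unramified of degree one in $F$, hence splits into the $q$ rational places $(\alpha,\beta)$, $\beta\in B_\alpha$, while $v_{P_\infty}(x-\alpha)=v_{P_\infty}(x)=-q$. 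Therefore $\Div{x-\alpha}=\sum_{\beta\in B_\alpha}(\alpha,\beta)-qP_\infty$, whose degree is $0$ as a principal divisor must have, re-confirming $|B_\alpha|=q$.
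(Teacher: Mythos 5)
The paper does not prove this proposition at all; it simply collects these standard facts and cites \cite{stichtenoth_note_1988} (see also \cite{stichtenoth_algebraic_2009}). Your write-up is therefore necessarily a different route: a self-contained derivation, and as far as I can check it is correct. The two pillars are sound: (i) the $T$-derivative of $T^q+T-x^{q+1}$ is the unit $1$, so the discriminant of the order $\Fqq[x][y]$ is a unit, which simultaneously gives unramifiedness of $F/\Fqq(x)$ at all finite places and the maximality of that order (hence $\Ring=\Fqq[x,y]$, since $\Ring$ is the integral closure of $\Fqq[x]$ in $F$); (ii) the valuation bookkeeping at infinity correctly forces $q\mid e$ and hence total ramification with $v_{P_\infty}(x)=-q$, $v_{P_\infty}(y)=-(q+1)$. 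Your choice to compute the genus through the tame Kummer subextension $F/\Fqq(y)$ of degree $q+1$ is a legitimate way to dodge the wild (Artin--Schreier) different computation at $P_\infty$; the ramification divisor there has degree $q$ (simple zeros of $y^q+y$) plus $1$ (pole of $y$), each totally ramified with tame different exponent $q$, and Riemann--Hurwitz gives $g=\half q(q-1)$ as claimed. The count of rational places via the $\FF q$-linear, $q$-to-one map $\beta\mapsto\beta^q+\beta$ onto $\FF q$ and the norm $\alpha^{q+1}\in\FF q$ is the standard argument, and the splitting of $x-\alpha$ into the $q$ degree-one places $(\alpha,\beta)$ follows from unramifiedness plus $\sum e_if_i=q$. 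What the paper's citation buys is brevity; what your argument buys is that every claim in the proposition is traced to an explicit computation, at the cost of invoking the general Kummer/Artin--Schreier machinery that the cited references develop anyway.
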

The fact that $\Ring = \Fqq[x,y]$ is extremely helpful since all these functions can then be described by polynomials.
For brevity, we define for any divisor $D$ the convenient notation
\[
\L(D+\infty P_\infty)=\bigcup_{i \in \ZZ} \L(D+i P_\infty).
\]
Note that for instance $\Ring = \L(\infty P_\infty)$.

For a function $f \in \Ring$ expressed as polynomials, we can therefore reduce its $y$-degree to less than $q$ using the relation $\Herm(x,y) = 0$ from which it follows that $\{ x^i y^j \mid 0 \leq j < q \}$ is a basis for $\Ring$.
We will refer to this as the ``standard basis'' of $\Ring$, and usually represent its elements using this.
However, for certain auxiliary calculations we will convert into other representations; the details of these calculations are given in \cref{app:power_series}.

We will measure elements of $\Ring$ by their pole order at $P_\infty$; when elements in $\Ring$ are in the standard basis, this takes on a particularly simple form:
\begin{definition}
  \label{def:order}
  Let the order function $\order: \Ring \mapsto \NN_0 \cup \{ -\infty \}$ be given as $\order(p) = -v_{P_\infty}(p)$ for $p \neq 0$ and $\order(0) = -\infty$, where $v_P(\cdot)$ is the valuation of a function at the place $P$.
  For a monomial $x^i y^j$, this is also given by
  \[
    \order(x^i y^j) = \deg_{q, q+1}(x^i y^j) = qi + (q+1)j,
  \]
  when $j < q$, and then extended to polynomials of $y$ degree less than $q$ by the maximal of the monomials' $\order$.
\end{definition}

Note that all monomials $x^i y^j$ with $j < q$ have different $\order$.
Therefore, $\order$ induces a term ordering $\leq_{\Herm}$ on $\Fqq[x,y]$ such that $x^{i_1}y^{j_1} \leq_{\Herm} x^{i_2}y^{i_2}$ if and only if $\order(x^{i_1}y^{j_1}) \leq \order(x^{i_2}y^{j_2})$.
This means that we can speak of the leading monomial, $\LM_\Herm(\cdot)$, and the leading coefficient, $\LC_\Herm(\cdot)$, for elements of $\Ring$.

We will also need two easy technical lemmas; the first is straightforward but a proof can be found e.g.~in \cite[Proposition 2.2]{brander_interpolation_2010}.
\begin{lemma}
  \label{lem:gs_div}
  For any non-zero $h \in F$ it holds that
  \begin{equation}
    \label{eqn:gs_div}
    \L(-\Div{h} + \infty P_\infty) = h \Ring.
  \end{equation}
\end{lemma}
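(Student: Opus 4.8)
The plan is to strip away the two layers of notation appearing here --- namely $\L(D + \infty P_\infty) = \bigcup_{i \in \ZZ}\L(D + iP_\infty)$ and $\Ring = \bigcup_{i \geq 0}\L(iP_\infty)$ --- and to reduce the claim to the defining property of Riemann--Roch spaces: for a divisor $E$ and a non-zero $f \in F$, one has $f \in \L(E)$ if and only if $\Div{f} + E \geq 0$.

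First I would fix a non-zero $g \in F$ and an integer $i$, and chase equivalences: $g \in \L(-\Div{h} + iP_\infty)$ holds iff $\Div{g} - \Div{h} + iP_\infty \geq 0$, iff $\Div{g/h} + iP_\infty \geq 0$, iff $g/h \in \L(iP_\infty)$, i.e.\ iff $g \in h\,\L(iP_\infty)$. Since the zero function lies in all the spaces involved, this yields the exact set identity $\L(-\Div{h} + iP_\infty) = h\,\L(iP_\infty)$ for every $i \in \ZZ$.

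Taking the union over $i \in \ZZ$ then gives $\L(-\Div{h} + \infty P_\infty) = \bigcup_{i \in \ZZ} h\,\L(iP_\infty) = h\bigcup_{i \in \ZZ}\L(iP_\infty)$. Because the spaces $\L(iP_\infty)$ grow monotonically with $i$ (allowing a higher-order pole at $P_\infty$ can only enlarge the space), the union over all $i \in \ZZ$ coincides with the union over $i \geq 0$, which is $\Ring$ by definition. Hence $\L(-\Div{h} + \infty P_\infty) = h\Ring$, as claimed.

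I do not expect a genuine obstacle in this argument; it is essentially an unwinding of definitions. The only point that needs a moment's care is the bookkeeping with the infinite unions --- in particular that $\L(D+\infty P_\infty)$ is defined with $i$ ranging over all of $\ZZ$ while $\Ring$ is defined with $i \geq 0$ --- and this is dealt with by the monotonicity remark above.
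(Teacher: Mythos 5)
Your proof is correct, and it is exactly the standard definitional unwinding that the paper itself omits (it merely cites \cite[Proposition 2.2]{brander_interpolation_2010} and calls the lemma straightforward). The key identity $\L(-\Div{h}+iP_\infty)=h\,\L(iP_\infty)$ via $\Div{g/h}=\Div{g}-\Div{h}$, followed by the union over $i$ and the monotonicity remark reducing $i\in\ZZ$ to $i\geq 0$, is precisely the intended argument.
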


\begin{lemma}
  \label{lem:gs_monoms}
  For any $m \in \ZZ_{+}$, there are at least $m-g$ distinct monomials of the form $x^i y^j$, $j < q$ such that $\order(x^i y^j) < m$.
\end{lemma}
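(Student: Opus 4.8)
The plan is to count the monomials in question by identifying them with a basis of a Riemann--Roch space and then invoking the Riemann--Roch theorem. Recall that $\Ring = \Fqq[x,y]$ has the standard basis $\{x^iy^j \mid i \geq 0,\ 0 \leq j < q\}$ and that, as noted right after \cref{def:order}, all these monomials have pairwise distinct $\order$. Hence a nonzero $f \in \Ring$, written in the standard basis, satisfies $\order(f) \leq m$ if and only if every monomial occurring in $f$ has $\order \leq m$, since no cancellation among leading terms is possible.

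Since $\L(mP_\infty) \subseteq \Ring$ by definition of $\Ring$, the previous observation shows that $\L(mP_\infty)$ has as an $\Fqq$-basis precisely the set of monomials $x^iy^j$ with $j < q$ and $\order(x^iy^j) \leq m$. Consequently, the number of such monomials equals $\dim \L(mP_\infty)$. Now I would apply the Riemann--Roch theorem to the divisor $mP_\infty$: since $\deg(mP_\infty) = m$ and the index of speciality is non-negative, it gives $\dim \L(mP_\infty) \geq m + 1 - g$. Applying this with $m$ replaced by $m-1$ (which is $\geq 0$ because $m \in \ZZ_+$), the number of monomials $x^iy^j$, $j < q$, with $\order(x^iy^j) \leq m-1$ — equivalently $\order(x^iy^j) < m$, as orders are integers — is at least $(m-1) + 1 - g = m - g$, which is exactly the claim.

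I do not anticipate a genuine obstacle; the only point needing care is the first step, namely that passing to the standard basis faithfully records the pole order at $P_\infty$, so that $\L(mP_\infty)$ is indeed spanned by monomials rather than by more complicated polynomials. An alternative route avoids Riemann--Roch altogether: the set $\{\order(x^iy^j) \mid 0 \leq j < q\}$ is exactly the Weierstrass semigroup $\langle q,\,q+1\rangle$ of $P_\infty$, whose complement in $\NN_0$ (the gap set) has cardinality $g = \half q(q-1)$; then among the $m$ integers $0,1,\dots,m-1$ at most $g$ are gaps, so at least $m-g$ of them occur as $\order(x^iy^j)$ for a (necessarily unique) monomial with $j < q$.
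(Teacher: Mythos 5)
Your proof is correct and takes essentially the same route as the paper: the paper simply observes that the claim is equivalent to $\dimL((m-1)P_\infty) \geq m-g$ and cites Riemann's Theorem. You additionally spell out why the monomials of $\order < m$ form a basis of $\L((m-1)P_\infty)$ (and sketch a semigroup alternative), but the core argument is identical.
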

\begin{proof}
  The statement translates simply to $\dimL((m-1)P_\infty) \geq m - g$, which is exactly Riemann's Theorem, see e.g.~\cite[Theorem 1.4.17]{stichtenoth_algebraic_2009}.
\end{proof}

Let us now formally introduce the class of codes we wish to decode.
\begin{definition}
  \label{def:gs_herm}
  Let $n=q^3$ and $m$ be an integer satisfying $2g - 2 < m < n$.
  Then the corresponding one-point Hermitian code over $\Fqq$ is defined as
  \[
    \Code = \left\{ \big( f(P_1),\ldots, f(P_n) \big) \mid f \in \L(m P_\infty) \right\}.
  \]
\end{definition}
Note that $\L(m P_\infty) \subset \Ring$, so all the $f$ we need to evalute to obtain $\Code$ are polynomials in $x$ and $y$ satisfying $\order f \leq m$.

The basic parameters of these codes are completely known. First of all from \cite[Theorem 2.2.2]{stichtenoth_algebraic_2009} it follows that in the context of \cref{def:gs_herm}, $\Code$ is an $\params n k d$ code where
  \begin{IEEEeqnarray*}{rCl+C+rCl}
    k &=& m - g + 1   & \makebox{and} &
    d &\geq& \dG \defeq n - m.
  \end{IEEEeqnarray*}
In fact, the \emph{exact} minimum distance is known: Stichtenoth showed that it is exactly $\dG$ as above whenever $2g \leq m \leq n-q^2$ \cite{stichtenoth_note_1988}, while the remaining cases were determined by Yang and Kumar and shown to be slightly better for some values of $m$ \cite{yang_true_1992}.

As a last tool before we begin, we will also need Lagrangian interpolation over the evaluation points of a considered one-point Hermitian code, i.e.~given $\gamma_{\alpha,\beta} \in \Fqq$ for every $(\alpha,\beta) \in \PlacesAff$ then find some $p \in \Ring$ such that $p(\alpha,\beta) = \gamma_{\alpha,\beta}$ for all $(\alpha,\beta)$.
It is easy to see such a function must exist: for each place, the requirement specifies a linear equation in the coefficients of $p$ seen as an element of $\Fqq[x,y]$, so by \cref{lem:gs_monoms} there must exist one with $\order$ less than $n+g+1$.
Since it is slow to solve a linear system of equations, it is beneficial to have a closed formula though this might yield a function of slightly suboptimal $\order$.
The following lemma is inspired by a similar result from \cite{lee_list_2009}, though the complexity analysis is new.
\begin{lemma}
  \label{lem:gs_interpol}
  Given $\gamma_{\alpha, \beta} \in \Fqq$ for all $(\alpha,\beta) \in \PlacesAff$ the function
  \[
    p = \sum_{\alpha \in \Fqq}
                 \prod_{\alpha' \in \Fqq \setminus \{ \alpha \}} \frac {x-\alpha'}{\alpha-\alpha'}
                 \sum_{\beta \in B_\alpha}
                    \left(\gamma_{\alpha,\beta} \prod_{\beta' \in B_\alpha \setminus \{ \beta \}}
                       \frac {y-\beta'}{\beta-\beta'}
                    \right)
  \]
  satisfies $p(\alpha,\beta) = \gamma_{\alpha,\beta}$ for $(\alpha,\beta) \in \PlacesAff$ and $\order p < n + 2g$.
  Furthermore, given the $\gamma_{\alpha,\beta}$ we can compute $p$ in time $\Oapp(n)$.
\end{lemma}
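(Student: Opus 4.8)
The plan is to check the three assertions separately: that $p$ interpolates the given data, that $\order p < n+2g$, and that $p$ can be formed in $\Oapp(n)$ operations. For the interpolation property I would fix a place $(\alpha_0,\beta_0)\in\PlacesAff$ and evaluate. For each $\alpha\in\Fqq$ the factor $\prod_{\alpha'\in\Fqq\setminus\{\alpha\}}\frac{x-\alpha'}{\alpha-\alpha'}$ is the usual Lagrange indicator at the $q^2$ points of $\Fqq$: at $x=\alpha_0$ it is $1$ if $\alpha=\alpha_0$ and $0$ otherwise. Hence $p(\alpha_0,\beta_0)$ reduces to the summand with $\alpha=\alpha_0$, namely $\sum_{\beta\in B_{\alpha_0}}\gamma_{\alpha_0,\beta}\prod_{\beta'\in B_{\alpha_0}\setminus\{\beta\}}\frac{\beta_0-\beta'}{\beta-\beta'}$, and since $\beta_0\in B_{\alpha_0}$ the inner Lagrange indicator over the $q$ points of $B_{\alpha_0}$ picks out $\beta=\beta_0$, leaving $\gamma_{\alpha_0,\beta_0}$. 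All denominators are nonzero, so $p$ is well defined.

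For the order bound, note that every monomial $x^iy^j$ occurring in $p$ has $i\le|\Fqq|-1=q^2-1$ and $j\le|B_\alpha|-1=q-1$, the latter because $|B_\alpha|=q$ for all $\alpha$ by \cref{prop:herm_basics}; in particular $p$ already has $y$-degree below $q$, so it is in the standard basis. By \cref{def:order},
\[
  \order(x^iy^j)=qi+(q+1)j\le q(q^2-1)+(q+1)(q-1)=q^3+q^2-q-1,
\]
and since $n=q^3$ and $2g=q(q-1)=q^2-q$, the right-hand side is strictly smaller than $n+2g$.

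For the complexity I would organize the computation as two successive batches of fast univariate interpolation. First, for each of the $q^2$ values $\alpha\in\Fqq$ compute the polynomial $p_\alpha(y)=\sum_{\beta\in B_\alpha}\gamma_{\alpha,\beta}\prod_{\beta'\in B_\alpha\setminus\{\beta\}}\frac{y-\beta'}{\beta-\beta'}$ of degree at most $q-1$; this is Lagrange interpolation through $q$ points, which costs $\Oapp(q)$ operations in $\Fqq$ using standard fast-interpolation techniques (subproduct trees), hence $\Oapp(q^3)$ in total. Writing $p_\alpha(y)=\sum_{j<q}c_{\alpha,j}y^j$, we have $p=\sum_{j<q}\big(\sum_{\alpha\in\Fqq}c_{\alpha,j}L_\alpha(x)\big)y^j$ with $L_\alpha(x)=\prod_{\alpha'\in\Fqq\setminus\{\alpha\}}\frac{x-\alpha'}{\alpha-\alpha'}$, so for each $j<q$ the coefficient of $y^j$ in $p$ is the Lagrange interpolant of the data $\{c_{\alpha,j}\}_{\alpha\in\Fqq}$ at the $q^2$ points of $\Fqq$, computable in $\Oapp(q^2)$; summing over the $q$ values of $j$ gives again $\Oapp(q^3)=\Oapp(n)$.

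The correctness and the degree bound are routine Lagrange-interpolation bookkeeping; the substance of the lemma is the complexity statement, and the one point that needs care there is to recognise that the doubly nested defining formula splits into exactly two rounds of fast one-dimensional interpolation — one along each fibre $B_\alpha$, then one across $\Fqq$ — rather than being handled as an unstructured linear system, and that each round achieves near-linear cost.
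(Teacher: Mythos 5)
Your proof is correct, and all three parts (interpolation property, order bound, complexity) go through. The first two parts coincide with the paper's argument — the paper also bounds $\order p$ by $\deg_x p \leq q^2-1$, $\deg_y p \leq q-1$, hence $\order p \leq q(q^2-1)+(q+1)(q-1) < n+2g$. The complexity argument, which is the real content, is organised differently from the paper's. The paper keeps the formula in its nested form and runs a single divide-and-conquer recursion over the outer sum: it splits $\Fqq$ into two halves $A_1,A_2$, recursively computes $\sum_{\alpha\in A_K}\prod_{\alpha'\in A_K\setminus\{\alpha\}}(x-\alpha')\,L[B_\alpha,\tilde{\vec\gamma}_\alpha](y)$ on each half, and combines via multiplication by $\prod_{\alpha\in A\setminus A_K}(x-\alpha)$, yielding $T(t)=2T(t/2)+q\,\Oapp(t/2)$ with leaves $T(1)=\Oapp(q)$ given by the inner $y$-interpolation. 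You instead swap the order of summation: first perform $q^2$ independent degree-$(q-1)$ Lagrange interpolations in $y$ along the fibres $B_\alpha$, then for each $j<q$ interpolate the coefficient data $\{c_{\alpha,j}\}_{\alpha\in\Fqq}$ across the $q^2$ values of $x$, invoking fast multipoint interpolation as a black box in both rounds. The two decompositions give the same $\Oapp(q^3)=\Oapp(n)$ bound; yours is arguably cleaner in that it reduces everything to two standard calls to fast univariate interpolation (which also absorbs the precomputation of the normalising constants $\prod_{\alpha'\neq\alpha}(\alpha-\alpha')$ that the paper handles separately), while the paper's version makes the recursion explicit and self-contained rather than relying on the fast-interpolation subroutine for the outer, large-degree stage.
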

\begin{proof}
  Clearly, the given $p \in \Ring$, and first statement is easy to see.
  For the $\order$, clearly $\deg_x p \leq q^2-1$ and $\deg_y p \leq q-1$ and so $\order(p) \leq q(q^2-1) + (q+1)(q-1)$.

  For the complexity, we use standard Divide \& Conquer tricks.
  Denote by $L[B, \vec \eta](y)$ the $\Fqq[y]$ Lagrange interpolation polynomial such that $L[B, \vec \eta](\beta) = \eta_\beta$ for all $\beta \in B$.
  Note that we have $L[B, \vec \eta] = \sum_{\beta \in B} \left(\eta_{\beta} \prod_{\beta' \in B \setminus \{ \beta \}} \frac {y-\beta'}{\beta-\beta'} \right)$.
  Let $\tilde{\vec \gamma}_\alpha = \big(\gamma_{\alpha,\beta} / \prod_{\alpha' \in \Fqq \setminus \{ \alpha \}}(\alpha-\alpha') \big)_{\beta \in B_\alpha}$ for each $\alpha \in \Fqq$.
  Let $A = \Fqq$ and consider a subdivision into two disjoint sets $A_1$ and $A_2$.
  Then
  \begin{IEEEeqnarray*}{rCl}
    p &=& \sum_{\alpha \in A}
           \prod_{\alpha' \in \Fqq \setminus \{ \alpha \}} (x-\alpha')
           \ L[B_\alpha, \tilde{\vec \gamma}_\alpha](y)
    \\
    &=& \sum_{K = {1,2}} \prod_{\alpha \in A \setminus A_K} (x-\alpha) \\
    & & \quad \left (\sum_{\alpha \in A_K} \prod_{\alpha' \in A_K \setminus \{ \alpha \}} (x-\alpha') \ L[B_\alpha, \tilde{\vec \gamma}_\alpha](y) \right) ,
  \end{IEEEeqnarray*}
  Now the inner parenthesis is a recursive $\Ring$ Lagrange interpolation problem with half as many points.
  If we denote by $T(t)$ the cost of solving this problem with $qt$ points having $t$ different $x$-coordinates, we get the recursive equation for $t > 1$ that $T(t) = 2T(t/2) + q\Oapp(t/2)$: to collect the two recursive $\Ring$ functions we must perform $2q$ multiplications in $\Fqq[x]$ with operands of degree at most $t/2$, followed by $q$ sums.
  This has the solution $T(t) = \Oapp(qt) + \Oapp(t)T(1)$, where $T(1)$ then consists of computing a single $L[B_\alpha, \hat{\vec\gamma}]$ for some $\alpha$ and $\hat{\vec\gamma}$.
  This can be done in cost $\Oapp(q)$ since $|B_\alpha| = q$.
  The constants $\prod_{\alpha' \in \Fqq \setminus \{ \alpha \}}(\alpha-\alpha')$ for the $\tilde{\vec \gamma}_\alpha$ can be precomputed using Divide \& Conquer methods in time $\Oapp(q^2)$.
\end{proof}

\section{Module Minimisation}
\label{ssec:modules}

In both our algorithms, we will need to find ``small'' elements in certain free $\Fqq[x]$-modules, given a basis of the module.
We will solve this by representing the basis as a square $\Fqq[x]$ matrix and then bring it to a certain standard form; the resulting matrix will still represent a basis of our module, and its rows will represent ``small'' elements.
As a measure for being ``small'' we will use the quantity
\[
  \deg \vec v = \max_i\{ \deg v_i \},
\]
with $\vec v = (v_1,\dots,v_\rho) \in \Fqq[x]^\rho$. In this section, we will describe this process from the point where a basis $\{\vec v_1, \ldots, \vec v_\rho\}$ of an $\Fqq[x]$-module $\Mod V$ is given, in a manner completely detached from the coding theoretic setting.
We will restrict ourselves to the case that the $\vec v_i$ can be represented as $\Fqq[x]$ vectors of length $\rho$.
Let $V \in \Fqq[x]^{\dimsq \rho}$ be the matrix whose rows are the $\vec v_i$. By slight abuse of language we will sometimes also call $V$ a basis of $\Mod V$.

By ``leading position'', or $\LP(\vec v)$ for some $\vec v \in \Fqq[x]^\rho$, we mean the right-most position $i$ such that $\deg v_i = \deg \vec v$.
The problem we are going to solve is the following:
\begin{problem}
  \label{prob:minimal_vector}
  Let $I \subseteq \{ 1, \ldots, \rho \}$ and let $\Mod V_I$ be all vectors of $\Mod V$ with leading position in $I$.
  Find then a vector $\vec v \in \Mod V_I$ with minimal degree.
\end{problem}
For the Guruswami--Sudan interpolation, we will set $I = \{ 1, \ldots, \rho \}$ and will just seek any vector of minimal degree, while for Power decoding, $I$ will be only the first few indices.

\begin{definition}
  A matrix $U \in \Fqq[x]^{\dimsq \rho}$ is in \emph{weak Popov form} if the leading position of all its rows are different.
\end{definition}

Note that the weak Popov form is not canonical for a given matrix.
The following well-known result describes why the definition is so useful:
\begin{proposition}
  \label{prop:wpf}
  Let $U \in \F[x]^{\dimsq \rho}$ be a basis in weak Popov form of a module $\Mod V$.
  Any non-zero $\vec b \in \Mod V$ satisfies $\deg \vec u \leq \deg \vec b$, where $\vec u$ is the row of $U$ with $\LP(\vec u) = \LP(\vec b)$.
\end{proposition}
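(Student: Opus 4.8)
The plan is to expand an arbitrary non-zero $\vec b \in \Mod V$ in the basis $U$ and then identify the single coordinate that determines $\deg \vec b$. Write the rows of $U$ as $\vec u_1, \ldots, \vec u_\rho$; since they form a basis of $\Mod V$ there are $p_1, \ldots, p_\rho \in \F[x]$ with $\vec b = \sum_{i} p_i \vec u_i$, and the index set $S = \{ i \mid p_i \neq 0 \}$ is non-empty. For each $i \in S$ we have $\deg(p_i \vec u_i) = \deg p_i + \deg \vec u_i$, and, importantly, $\LP(p_i \vec u_i) = \LP(\vec u_i)$: multiplying a vector by a non-zero scalar polynomial adds $\deg p_i$ to the degree of every coordinate, so it does not change which coordinate attains the maximum. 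Because $U$ is in weak Popov form, the positions $\LP(\vec u_i)$, $i \in S$, are pairwise distinct.

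The core of the argument is an auxiliary statement about sums of vectors with pairwise distinct leading positions: if $\vec a_1, \ldots, \vec a_m \in \F[x]^\rho$ are non-zero with pairwise distinct leading positions, then $\vec a := \sum_k \vec a_k$ is non-zero, $\deg \vec a = \max_k \deg \vec a_k$, and $\LP(\vec a) = \LP(\vec a_{k^\ast})$, where $k^\ast$ is an index of maximal degree $D := \deg \vec a_{k^\ast}$ chosen, among all such indices, to have the right-most leading position. To see this I would look at coordinate $j^\ast := \LP(\vec a_{k^\ast})$ of the sum. The term $\vec a_{k^\ast}$ contributes a polynomial of degree exactly $D$ in that coordinate. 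Every other term $\vec a_l$ contributes degree strictly less than $D$ there: if $\deg \vec a_l < D$ this is clear, and if $\deg \vec a_l = D$ then by the choice of $k^\ast$ the position $j^\ast$ is strictly to the right of $\LP(\vec a_l)$, so coordinate $j^\ast$ of $\vec a_l$ has degree $< D$. Hence there is no cancellation at coordinate $j^\ast$ and it has degree exactly $D$; since no coordinate of any summand exceeds degree $D$, this gives $\deg \vec a = D$. The same ``to the right of $\LP(\vec a_l)$'' reasoning applied to the coordinates strictly to the right of $j^\ast$ shows that all of them have degree $< D$, so $\LP(\vec a) = j^\ast$.

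Applying this auxiliary statement to $\vec b = \sum_{i \in S} (p_i \vec u_i)$ yields $\deg \vec b = \max_{i \in S}(\deg p_i + \deg \vec u_i)$ and $\LP(\vec b) = \LP(\vec u_{i^\ast})$, where $i^\ast \in S$ is the dominant index. Since the leading positions of the rows of $U$ are all distinct, $\vec u_{i^\ast}$ is the unique row $\vec u$ of $U$ with $\LP(\vec u) = \LP(\vec b)$ --- that is, exactly the row named in the statement. Therefore $\deg \vec u = \deg \vec u_{i^\ast} \leq \deg p_{i^\ast} + \deg \vec u_{i^\ast} = \deg \vec b$, which is the claimed inequality. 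I expect the only delicate point to be the ``no cancellation'' step in the middle paragraph: once the summands are ordered lexicographically by (degree, leading position) it is a quick verification, but it is where all the actual work sits.
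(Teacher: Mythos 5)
Your proof is correct; the paper itself does not spell out an argument for this proposition but merely cites \cite{nielsen_solving_2014}, and what you give is precisely the standard argument used there: expand $\vec b$ in the rows of $U$, observe that the non-zero terms $p_i \vec u_i$ inherit pairwise distinct leading positions, and conclude that no cancellation can occur at the dominant leading position, so $\deg \vec b = \max_i(\deg p_i + \deg \vec u_i) \geq \deg \vec u_{i^\ast}$ with $\LP(\vec b) = \LP(\vec u_{i^\ast})$. The "no cancellation" step is handled correctly, including the tie-breaking by right-most leading position among maximal-degree summands.
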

A proof can be found in e.g.\cite{nielsen_solving_2014}.

Using elementary row operations, we may change $V$ into a matrix $U$ without changing the row space of the matrices.
The matrices $U$ and $V$ are unimodular equivalent, that is to say that there exists $M \in \Fqq[x]^{\dimsq \rho}$ with $\det M \in \Fqq^*$ such that $U=MV$.
Clearly then, if we can compute from $V$ a unimodular equivalent matrix $U$, which is in weak Popov form, then by the above proposition we have solved our problem for any index set $I$.
This computation is known as module minimisation, $\Fqq[x]$-lattice basis reduction or row reduction\footnote{%
  These names sometime refer to computing a ``row reduced'' matrix which is a slightly weaker property than being in weak Popov form.
}.
It is well-known that the weak Popov form is also a Gr\"obner basis of the module for a specific monomial ordering, see e.g.~\cite[Section 2.1.2]{nielsen_list_2013}.

There are a number of algorithms from the literature for carrying out this computation.
Principally, their running time depends on $\deg V=\max_i\{ \deg \vec v_i \}$ where $V$ is the input matrix.
It was shown in \cite[Chapter 2]{nielsen_list_2013} how two algorithms, Mulders--Storjohann's \cite{mulders_lattice_2003} and Alekhnovich's \cite{alekhnovich_linear_2005}, rather depend on the \emph{orthogonality defect}:
\[
  \OD V = \rowdeg V - \deg\det V \leq \rho\deg V,
\]
with $\rowdeg V=\sum_{i=1}^\rho \deg \vec v_i$ and $\deg\det V$ the degree of the determinant of the matrix $V \in \Fqq[x]^{\rho \times \rho}$. \cref{tbl:wpf_complexities} summarises the complexities for module minimisation using various known algorithms.
It should be noted that the two algorithms GJV \cite{giorgi_complexity_2003} and Zhou--Labahn \cite{zhou_computing_2012} compute \emph{order bases} of $\Fqq[x]$ matrices; it was described in \cite{giorgi_complexity_2003} how to use an order basis computation to compute a row reduced form, and in \cite{sarkar_normalization_2011} how to quickly compute the weak Popov form from a row reduced one.
The asymptotic complexities are as reported for the entire sequence of algorithms.

\begin{table}[tb]
  \centering
    \begin{tabular}{l@{\hspace{3em}}l}
      \multicolumn{2}{c}{Complexity for computing a weak Popov form of $V \in \Fqq[x]^{\rho\times \rho}$}                                                   \\
      \toprule
      Algorithm                                                                    & Field operations in big-$\Oapp$ \\ \midrule
      Mulders--Storjohann \cite{mulders_lattice_2003}                              & $\rho^2\deg V\OD V$                \\
      Alekhnovich \cite{alekhnovich_linear_2005}                                   & $\rho^\omega \OD V$                \\
      GJV \cite{giorgi_complexity_2003} or Zhou--Labahn \cite{zhou_computing_2012} & $\rho^\omega \deg V$               \\
      \bottomrule
    \end{tabular}
    \caption
    {We use $\omega$ for the exponent for multiplication of $\Fqq$ matrices, i.e.~$\omega \leq 3$.
     We assume that $\rho < \deg V$.
    }
    \label{tbl:wpf_complexities}
\end{table}

\subsection{Handling Weights}
\label{ssec:mod_weights}

For application to the decoding algorithms we present later in the article, \cref{prob:minimal_vector} is not formulated quite general enough: rather, we will be seeking a vector of $\Mod V$ whose \emph{weighted} degree is minimal, and this weighting takes a rather general form: let $\nu \in \ZZ_+$ and $\vec w \in \NN_0^\rho$, then the $(\nu, \vec w)$-weighted degree of some $\vec v \in \Fqq[x]^\rho$ is
\[
  \deg_{\nu, \vec w} \vec v = \max_i\{ w_i + \nu \deg v_i \},
\]
where $v_i$ and $w_i$ are the elements of $\vec v$ respectively $\vec w$.
Similarly, we will consider $\LP_{\nu,\vec w}(\vec v) = \max \{ i \mid w_i + \nu \deg v_i = \deg_{\nu, \vec w} \vec v \}$.
For decoding one-point Hermitian codes, we will be using $\nu = q$.

We will now explain how to handle such weights without changing the underlying module minimisation algorithm or incurring any serious performance penalty.
We will introduce two injective mappings for matrices such that finding a weak Popov form of the image of $V$ under either will solve the weighted minimisation problem.
The first is a straightforward embedding of the weights but has two downsides: it can only be used with certain module minimisation algorithms, and those algorithms need to be implemented in a specific manner to avoid a computational overhead.
To mitigate both of these problems we derive a second embedding from the first.

First we define the following straightforward map $\Wmap: \Fqq[x]^\rho \mapsto \Fqq[x]^\rho$:
\begin{IEEEeqnarray*}{rCl}
  \Wmap\big( (v_1,\ldots, v_\rho)\big) &=& \big( x^{w_1}v_1(x^\nu), \ldots, x^{w_\rho}v_\rho(x^\nu) \big).
\end{IEEEeqnarray*}
We extend $\Wmap$ row-wise to $\dimsq{\rho}$ matrices such that the $i$th row of $\Wmap(V)$ is $\Wmap(\vec v_i)$, where $\vec v_i$ are the rows of $V$.
Note that $\Wmap(\Mod V)$ is a free $\Fqq[x^{\nu}]$-module of dimension $\rho$, and that any basis of it is by $\Wmap^\mo$ sent back to a basis of $\Mod V$.

\begin{proposition}
  A vector $\vec v \in \Mod V$ has minimal $\deg_{\nu, \vec w}$ if $\Wmap(\vec v)$ has minimal degree in $\Wmap(\Mod V)$.
  Furthermore, $\LP_{\nu, \vec w}(\vec v) = \LP(\Wmap(\vec v))$.
\end{proposition}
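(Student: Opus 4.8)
The plan is to show that the map $\Wmap$ preserves both the weighted degree (up to the scaling factor $\nu$) and the leading position, so that minimality transfers between $\Mod V$ and $\Wmap(\Mod V)$. The key observation is that for a single polynomial $v_i = \sum_j c_j x^j \in \Fqq[x]$, we have $x^{w_i} v_i(x^\nu) = \sum_j c_j x^{w_i + \nu j}$, and hence $\deg\big(x^{w_i} v_i(x^\nu)\big) = w_i + \nu \deg v_i$ whenever $v_i \neq 0$ (and this expression is $-\infty$, consistently, when $v_i = 0$). Crucially, since $\nu \geq 1$, the exponents $w_i + \nu j$ appearing for different $j$ are all distinct, so no cancellation of the top term can occur; this is why $\deg$ of the image is \emph{exactly} $w_i + \nu\deg v_i$ and not merely bounded by it.

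First I would record the above per-coordinate identity. Then, taking the maximum over $i$, I get
\[
  \deg\big(\Wmap(\vec v)\big) = \max_i \deg\big(x^{w_i} v_i(x^\nu)\big) = \max_i\{ w_i + \nu\deg v_i \} = \deg_{\nu,\vec w}(\vec v).
\]
Next I would note that $\Wmap$ is an $\Fqq$-linear bijection from $\Mod V$ onto $\Wmap(\Mod V)$ (as already remarked in the text, any basis maps to a basis under $\Wmap$ and $\Wmap^{\mo}$). Therefore the correspondence $\vec v \leftrightarrow \Wmap(\vec v)$ is a degree-respecting bijection between the two modules: a vector $\vec v \in \Mod V$ attains the minimum of $\deg_{\nu,\vec w}$ over $\Mod V$ if and only if $\Wmap(\vec v)$ attains the minimum of $\deg$ over $\Wmap(\Mod V)$. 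This gives the first assertion.

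For the statement about leading positions, I would argue directly from the definitions. By definition $\LP_{\nu,\vec w}(\vec v)$ is the largest index $i$ with $w_i + \nu\deg v_i = \deg_{\nu,\vec w}(\vec v)$, and $\LP(\Wmap(\vec v))$ is the largest index $i$ with $\deg\big(x^{w_i}v_i(x^\nu)\big) = \deg(\Wmap(\vec v))$. Since $\deg\big(x^{w_i}v_i(x^\nu)\big) = w_i + \nu\deg v_i$ for every $i$ (with both sides $-\infty$ when $v_i = 0$) and $\deg(\Wmap(\vec v)) = \deg_{\nu,\vec w}(\vec v)$ by the computation above, the two index sets over which we take the maximum coincide exactly, so their maxima agree: $\LP_{\nu,\vec w}(\vec v) = \LP(\Wmap(\vec v))$. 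I do not anticipate a genuine obstacle here; the only point that needs care is the no-cancellation remark for the top coefficient, which relies on $\nu$ being a positive integer so that the substitution $x \mapsto x^\nu$ spreads the exponents apart and the shift by $x^{w_i}$ cannot merge distinct exponents either.
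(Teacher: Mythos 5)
Your proposal is correct and is essentially the paper's own argument, which rests entirely on the identity $\deg \Wmap(\vec v) = \deg_{\nu,\vec w}\vec v$; you simply spell out the per-coordinate computation and the no-cancellation point that the paper leaves implicit. Nothing further is needed.
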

\begin{proof}
  This follows immediately since for any vector $\vec v = (v_1,\ldots,v_\rho) \in \F[x]^\rho$, then $\deg \Wmap(\vec v) = \deg_{\nu, \vec w} \vec v$.
\end{proof}

In other words, we can hope to solve the weighted problem as follows: find a $\Wmap(W)$ in weak Popov form and unimodular equivalent to $\Wmap(V)$.
Then the $\Wmap^\mo$-map of the row of $\Wmap(W)$ with minimal degree and leading position in $I$ yields the sought solution.
However, a general module minimisation algorithm will consider the $\Fqq[x]$-module spanned by $\Wmap(V)$ -- and not the $\Fqq[x^\nu]$-module -- so a weak Popov form of $\Wmap(V)$ will generally not result in a matrix in $\Wmap(\Mod V)$, and hence we cannot apply $\Wmap^\mo$ to its rows.
In the case of the Mulders--Storjohann algorithm \cite{mulders_lattice_2003} or the Alekhnovich algorithm \cite{alekhnovich_linear_2005}, one can show that things will go well: applying either algorithm to $\Wmap(V)$ results in a weak Popov form in $\Wmap(\Mod V)$ \cite{lee_list_2009,beelen_efficient_2010,nielsen_generalised_2013}.
Furthermore, if properly implemented, these algorithms will not incur a computational penalty from the $x \mapsto x^\nu$ blow-up.

To take advantage of the faster module minimisation algorithms -- in a manner ensuring both correctness and speed -- we introduce a second mapping which does not have the problems of $\Wmap$.

For this improved mapping, consider first the permutation $\pi$ of $[1,\ldots,\rho]$ defined indirectly by the following property:
\begin{IEEEeqnarray*}{rCl}
  \pi(i) &>& \pi(j) \iff \ (w_i \modop \nu) > (w_j \modop \nu) \\
   && \qquad \qquad \lor \,\big( (w_i \modop \nu) = (w_j \modop \nu) \land i > j \big).
\end{IEEEeqnarray*}
The permutation $\pi$ acts on vectors of $\F[x]^\rho$ by permuting the positions of such vectors.
Our desired mapping is now $\Pmap$:
\[
  \Pmap\big( (v_1,\ldots,v_\rho) \big) = \pi\big((x^{\floor{w_1/\nu}} v_{1}, \ldots, x^{\floor{w_\rho/\nu}} v_\rho) \big).
\]
\begin{proposition}
  \label{prop:weight_mappings}
  For any $\vec v \in \F[x]^\rho$ then
  \[
    (\pi^\mo \circ \LP \circ \Pmap)(\vec v) = (\LP \circ \Wmap)(\vec v).
  \]
\end{proposition}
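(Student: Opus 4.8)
The plan is to compare, position by position, the degree vectors produced by the two embeddings, and to show that the right-most maximiser of one corresponds under $\pi$ to the right-most maximiser of the other. Fix $\vec v = (v_1,\dots,v_\rho) \in \F[x]^\rho$. By definition, $\Wmap(\vec v)$ has $i$th entry $x^{w_i} v_i(x^\nu)$, which has degree $w_i + \nu\deg v_i$ (or $-\infty$ if $v_i = 0$); so $\LP(\Wmap(\vec v))$ is the largest $i$ maximising $w_i + \nu\deg v_i$, i.e.\ exactly $\LP_{\nu,\vec w}(\vec v)$. On the other side, $\Pmap(\vec v)$ is obtained by first forming the vector with $i$th entry $x^{\floor{w_i/\nu}} v_i$, of degree $\floor{w_i/\nu} + \deg v_i$, and then permuting the positions by $\pi$. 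Writing $w_i = \nu\floor{w_i/\nu} + (w_i \bmod \nu)$, the degree in position $i$ before permutation is $\tfrac{1}{\nu}\big(w_i + \nu\deg v_i - (w_i \bmod \nu)\big)$.

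Next I would argue that the permutation $\pi$ is designed precisely so that ``$\LP$ after permutation'' agrees with a tie-broken version of ``$\LP_{\nu,\vec w}$ before permutation''. Concretely, I want to show: for indices $i$ with $v_i \neq 0$, the order relation ``$\floor{w_i/\nu}+\deg v_i > \floor{w_j/\nu}+\deg v_j$, or equality with $\pi(i) > \pi(j)$'' coincides with ``$w_i + \nu\deg v_i > w_j + \nu\deg v_j$, or equality with $i > j$''. Indeed, $w_i + \nu\deg v_i$ and $w_j + \nu\deg v_j$ differ by a multiple of $\nu$ exactly when $w_i \equiv w_j \pmod \nu$, and in that case $\floor{w_i/\nu}+\deg v_i$ compares the same way as $w_i+\nu\deg v_i$; when $w_i \not\equiv w_j \pmod\nu$, the two integer-scaled quantities $\nu(\floor{w_i/\nu}+\deg v_i)$ and the true weighted degrees $w_i+\nu\deg v_i$ can differ, so a careful case split on whether $\floor{w_i/\nu}+\deg v_i$ ties or not is needed, using that $0 \le w_i\bmod\nu \le \nu-1$ to bound the discrepancy by less than $\nu$. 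The tie-breaking clause in the definition of $\pi$ — $\pi(i)>\pi(j)$ when $(w_i\bmod\nu)>(w_j\bmod\nu)$, or when they are equal and $i>j$ — is exactly what resolves the ambiguous cases consistently with the ``$i>j$'' tie-break on the $\Wmap$ side.

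Putting this together: let $p = \LP(\Pmap(\vec v))$, which is by definition the largest value of $\pi(i)$ over all $i$ (with $v_i \neq 0$) that maximise $\floor{w_i/\nu}+\deg v_i$. I claim $\pi^\mo(p) = \LP_{\nu,\vec w}(\vec v)$. Write $i_0 = \pi^\mo(p)$. From the equivalence of orderings established above, among all $i$ with $v_i \neq 0$, the index $i_0$ is the one that is largest in the order ``maximise $w_i+\nu\deg v_i$, break ties by largest $i$'', which is precisely the definition of $\LP_{\nu,\vec w}(\vec v)$; hence $\pi^\mo(\LP(\Pmap(\vec v))) = \LP_{\nu,\vec w}(\vec v) = \LP(\Wmap(\vec v))$, as claimed. (The degenerate case $\vec v = 0$ is handled by the convention that $\LP$ of the zero vector is undefined on both sides, or by excluding it.)

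The main obstacle I anticipate is the case analysis when $w_i \not\equiv w_j \pmod \nu$: here the floor operation can create or destroy ties between positions $i$ and $j$ relative to the true weighted degrees, so one must verify that $\pi$ still breaks such (spurious or vanished) ties in a way consistent with the ``largest $i$'' rule used to define $\LP_{\nu,\vec w}$. I would organise this as a short lemma comparing, for two fixed indices, the quantities $w_i+\nu\deg v_i$ and $\nu(\floor{w_i/\nu}+\deg v_i) = w_i + \nu\deg v_i - (w_i\bmod\nu)$, and checking all sign/equality combinations; everything else is routine bookkeeping.
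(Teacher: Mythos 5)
Your proposal is correct and follows essentially the same route as the paper's proof: both rest on the decomposition $w_i + \nu\deg v_i = \nu\big(\floor{w_i/\nu}+\deg v_i\big) + (w_i \bmod \nu)$ with $0 \le w_i\bmod\nu < \nu$, which makes the comparison of $\nu$-weighted degrees exactly lexicographic in the pair $\big(\floor{w_i/\nu}+\deg v_i,\ w_i\bmod\nu\big)$, so that the tie-breaking built into $\pi$ reproduces the ``largest index'' rule of $\LP$. The case split you anticipate as the main obstacle is precisely the paper's two cases ($j>h$ strict, $j<h$ weak), and it resolves cleanly because the fractional parts lie in $[0,1)$, so there is no genuine ambiguity to worry about.
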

\begin{proof}
  Let $v_i$ be the elements of $\vec v$, and $h = (\LP \circ \Wmap)(\vec v)$.
  We will prove that no index but $\pi(h)$ can be $(\LP \circ \Pmap)(\vec v)$.
  Consider first some $j > h$.
  By the definition of $h$ then
  \begin{IEEEeqnarray}{rCl}
    \nu \deg v_h + w_h &>& \nu \deg v_j + w_j, \quad \word{i.e.}            \label{eqn:map_perm_as_nu_ineq}\\
    \deg v_h + \floor{w_h/\nu} + \tfrac{w_h \modop \nu} \nu &>& \deg v_j + \floor{w_j/\nu} + \tfrac{w_j \modop \nu} \nu.
    \notag
  \end{IEEEeqnarray}
  So either $\deg v_h + \floor{w_h/\nu} > \deg v_j + \floor{w_j/\nu}$, or they are equal and $w_h \modop \nu > w_j \modop \nu$.
  In the first case, then clearly $\pi(j)$ cannot be $(\LP \circ \Pmap)(\vec v)$ due to degrees.
  In the second case the degrees of $\Pmap(\vec v)$ at positions $\pi(h)$ and $\pi(j)$ are tied, but we have $\pi(h) > \pi(j)$, which means that $\pi(j)$ cannot be the leading position.

  Consider now some $j < h$, so we have the same inequality \eqref{eqn:map_perm_as_nu_ineq} but with $>$ replaced by $\geq$.
  If sharp inequality really holds, then we can continue as before, so assume instead that equality holds.
  That implies both $\deg v_h + \floor{w_h/\nu} = \deg v_j + \floor{w_j/\nu}$ and $w_h \equiv w_j \mod \nu$.
  So the degrees of $\Pmap(\vec v)$ at positions $\pi(h)$ and $\pi(j)$ are tied, but then since $h > j$, we have $\pi(h) > \pi(j)$.
  Again, $\pi(j)$ is not the leading position.
\end{proof}
\begin{corollary}
  For any $V \in \F[x]^\rho$, then $\Wmap(V)$ is in weak Popov form if and only if $\Pmap(V)$ is in weak Popov form.
\end{corollary}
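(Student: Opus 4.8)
The plan is to deduce the statement immediately from \cref{prop:weight_mappings}. Recall that a square matrix is in weak Popov form precisely when the leading positions of its rows are pairwise distinct. Write $\vec v_1,\ldots,\vec v_\rho$ for the rows of $V$. Since both $\Wmap$ and $\Pmap$ act row-wise -- note that the permutation $\pi$ occurring in $\Pmap$ only reorders the entries within each row, it does not permute the rows among themselves -- the rows of $\Wmap(V)$ are exactly $\Wmap(\vec v_1),\ldots,\Wmap(\vec v_\rho)$, and the rows of $\Pmap(V)$ are exactly $\Pmap(\vec v_1),\ldots,\Pmap(\vec v_\rho)$.

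Next I would apply \cref{prop:weight_mappings} to each row separately: it yields $\LP(\Pmap(\vec v_i)) = \pi\big(\LP(\Wmap(\vec v_i))\big)$ for every $i \in \{1,\ldots,\rho\}$. Thus the list of leading positions of $\Pmap(V)$ is obtained from the list of leading positions of $\Wmap(V)$ by applying one and the same fixed permutation $\pi$ to each entry.

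Finally, since $\pi$ is a bijection of $\{1,\ldots,\rho\}$, it sends distinct values to distinct values and equal values to equal values. Hence the values $\LP(\Wmap(\vec v_i))$, $i = 1,\ldots,\rho$, are pairwise distinct if and only if the values $\pi(\LP(\Wmap(\vec v_i))) = \LP(\Pmap(\vec v_i))$, $i = 1,\ldots,\rho$, are pairwise distinct. The former condition is exactly that $\Wmap(V)$ is in weak Popov form, the latter that $\Pmap(V)$ is, which is the claimed equivalence.

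There is no real obstacle here; the proof is a one-line corollary of \cref{prop:weight_mappings} together with the fact that a bijection preserves and reflects distinctness. The only point worth spelling out is the bookkeeping observation that $\Pmap$ permutes coordinates inside each row rather than permuting the rows, so that \cref{prop:weight_mappings} may legitimately be invoked row by row and the permutation relating the two lists of leading positions is the \emph{same} $\pi$ for all rows.
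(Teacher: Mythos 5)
Your proof is correct and matches the paper's intent exactly: the paper states this corollary without proof as an immediate consequence of \cref{prop:weight_mappings}, and your argument (apply the proposition row by row, then use that the fixed bijection $\pi$ preserves distinctness of the leading positions) is precisely the reasoning being left implicit. The bookkeeping remark that $\pi$ permutes coordinates within rows rather than permuting rows is a worthwhile clarification but does not change the substance.
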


The algorithm is then clear: to solve the weighted minimisation problem, simply compute a weak Popov form of $\Pmap(V)$.
The row with minimal degree, and in case of a tie least $\LP$, only among rows whose $\LP$ are in $\{ \pi(i) \mid i \in I \}$ corresponds to a minimal solution, and one applies $\Pmap^\mo$ to obtain the vector of $\Mod V$.
This works immediately for any module minimisation algorithm.

For calculating the resulting complexity in general, one observes that
\[
\deg \Pmap(V) \leq \gamma \defeq \deg V + \max_{j}(w_j/\nu).
\]
The trivial bound gives $\OD{\Pmap(V)} \leq \rho\gamma$, so in \cref{tbl:wpf_complexities}, one can replace $\deg V$ with $\gamma$ and $\OD{V}$ with $\rho\gamma$ to obtain the generic complexities for solving the weighted problem.

\section{Fast Implementation of Guruswami--Sudan}
\label{sec:gs}

We will now present a sub-quadratic realisation of the Guruswami--Sudan decoding algorithm for the one-point Hermitian codes introduced in \cref{sec:codes}.
The main contribution is demonstrating how to perform the interpolation step using the fast module minimisation techniques discussed in the previous section.
This builds heavily on previous works \cite{lee_list_2009,beelen_efficient_2010}, and we remark further on this at the end of \cref{ssec:gsQ}.
Since the fastest previously known method for performing the root finding step was at least quadratic in $n$ \cite{beelen_efficient_2010}, we also describe how to sufficiently speed up this step in \cref{ssec:gs_root}.

In the following sections, we will consider dealing with a particular choice of a one-point Hermitian code, and use all the introduced variables $n, k, P_i, \dG, \Code$, etc.~from \cref{sec:codes}.
We will consider that a given codeword $\vec c \in \Code$ was sent, resulting from evaluating $f \in \L(mP_\infty)$, and that $\vec r = \vec c + \vec e$ was received with some error $\vec e$.
Further denote by $\Errs$ the set of error positions, i.e.~$\Errs = \{ i \mid e_i \neq 0 \}$.
The aim is to recover $\vec c$ knowing only $\vec r$, possibly even when $\weight{\vec e} \geq \dG\!/2$.

We will be working with elements of $\Ring[z]$, i.e.~the univariate polynomial ring over $\Ring$.
Define for such $Q = \sum_{t=0}^{\deg_z Q} Q_t(x,y) z^t \in \Ring[z]$ the coefficient-selecting notation $\Csel Q t$ to mean $\Csel Q t = Q_t(x,y) \in \Ring$.
We extend our degree function in a natural way to $\orderz w$ for any $w \in \RR$, so that some $Q \in \Ring[z]$ has $\orderz w Q = \max_t\{ \order \Csel Q t + tw \}$.
\begin{definition}
  \label{def:gs_zero_multi}
  A polynomial $Q \in \Ring[z]$ has a zero $(P,z_0) \in \Places^\star \times \Fqq$ with multiplicity at least $s$ if $Q$ can be written as $\sum_{j+h \geq s} \gamma_{j,h} \phi^j(z-z_0)^h$ for some $\gamma_{j,h} \in \Fqq$, where $\phi$ is a local parameter for $P$.
\end{definition}

For any place $(\alpha,\beta)$, one can choose as local parameter $\phi=x-\alpha$, which makes the above definition easy to operate with.
Note though that the sum in $Q=\sum_{j+h \geq s} \gamma_{j,h} \phi^j(z-z_0)^h$ will be an infinite sum (that is to say, a power series) in general.
However, to determine whether or not the multiplicity of $((\alpha,\beta),z_0)$ is at least $s$, one only needs to compute finitely many terms of this power series.
For one-point Hermitian codes, the Guruswami--Sudan algorithm then builds on the following theorem:

\begin{theorem}[Guruswami--Sudan]
  \label{thm:gs_herm}
  Let $s, \ell, \tau \in \ZZ_+$ be given.
  If a non-zero $Q \in \Ring[z]$ with $\deg_z Q \leq \ell$ satisfies
  \begin{enumerate}
    \item $Q$ has a zero at $(P_i, r_i)$ with multiplicity at least $s$ for $i = 1,\ldots,n$,
            \label{itm:gs_Qreq_zeroes}
    \item $\orderz m Q < s(n-\tau)$
            \label{itm:gs_Qreq_degree}
  \end{enumerate}
  and if $|\Errs| \leq \tau$, then $Q(f) = 0$.
\end{theorem}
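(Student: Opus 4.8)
The plan is to consider the univariate polynomial $g(z) \defeq Q(x,y,z)$ evaluated along the "graph" of $f$, i.e.\ to study the function $Q(f) \in \Ring$ directly and show that it must be the zero function because it vanishes with high total order at too many places. First I would observe that $Q(f) \in \Ring = \L(\infty P_\infty)$, so it suffices to bound its pole order at $P_\infty$ from above by something strictly smaller than the total number of zeroes (counted with multiplicity) that we can force at the affine places $P_i$ with $i \notin \Errs$. Since a non-zero element of $\L(N P_\infty)$ has exactly $N$ zeroes counted with multiplicity (its divisor has degree $0$), if the forced zero-count exceeds the pole order we conclude $Q(f) = 0$.

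Next I would estimate the pole order. Writing $Q = \sum_{t=0}^{\ell} \Csel Q t z^t$, we have $Q(f) = \sum_t \Csel Q t \cdot f^t$, and since $f \in \L(m P_\infty)$ means $\order f \le m$, each summand has $\order(\Csel Q t \cdot f^t) \le \order \Csel Q t + t\,\order f \le \order \Csel Q t + tm$. Taking the max over $t$ gives $\order Q(f) \le \orderz m Q < s(n-\tau)$ by hypothesis~\ref{itm:gs_Qreq_degree}. So $Q(f) \in \L\big((s(n-\tau)-1)P_\infty\big)$, hence if $Q(f) \ne 0$ it has at most $s(n-\tau)-1$ zeroes counted with multiplicity among the affine places.

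Now I would produce the lower bound on the zero count. For each $i$ with $i \notin \Errs$ we have $r_i = c_i = f(P_i)$, so at such a place $Q$ has a zero at $(P_i, f(P_i))$ of multiplicity at least $s$ in the sense of \cref{def:gs_zero_multi}. The key local step is to translate that bivariate-type multiplicity into an honest valuation statement: if $Q = \sum_{j+h\ge s}\gamma_{j,h}\phi^j(z-r_i)^h$ with $\phi$ a local parameter at $P_i$, then substituting $z = f$ and using $v_{P_i}(f - f(P_i)) \ge 1$ together with $v_{P_i}(\phi) = 1$ yields $v_{P_i}(Q(f)) \ge \min_{j+h\ge s}\big(j + h\,v_{P_i}(f-r_i)\big) \ge s$. (Here one must be a little careful that the defining sum is a convergent power series and that substitution commutes with taking valuations, which is where the appeal to the power-series machinery in the appendix is relevant, but the inequality $v_{P_i}(Q(f)) \ge s$ is robust.) Summing over the at least $n - |\Errs| \ge n - \tau$ unerrored positions gives a total zero multiplicity of at least $s(n-\tau)$.

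Finally I would combine the two bounds: if $Q(f) \ne 0$, it lies in $\L\big((s(n-\tau)-1)P_\infty\big)$ yet has at least $s(n-\tau)$ zeroes counted with multiplicity at affine places, and $\deg \Div{Q(f)} = 0$ forces the pole order at $P_\infty$ to be at least $s(n-\tau)$ — a contradiction. Hence $Q(f) = 0$. The main obstacle, and the only genuinely non-formal point, is the local multiplicity-to-valuation translation in the previous paragraph: one needs \cref{def:gs_zero_multi} to genuinely imply $v_{P_i}(Q(f)) \ge s$ when $z$ is specialised to a function passing through the prescribed point, handling the infinite-sum nature of the local expansion and the fact that $\phi = x - \alpha$ may not be a local parameter at \emph{every} place above $\alpha$ (though it is at the single place $(\alpha,\beta)$ in question, since $\Div{x-\alpha}$ is reduced there by \cref{prop:herm_basics}). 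Everything else is Riemann--Roch bookkeeping.
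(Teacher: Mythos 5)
Your proof is correct. The paper itself does not prove this theorem --- it defers to the references cited in the remark immediately following it --- and your argument (bounding the pole order of $Q(f)$ at $P_\infty$ by $\orderz m Q < s(n-\tau)$, converting the interpolation multiplicities at the $n-|\Errs| \geq n-\tau$ error-free places into $v_{P_i}(Q(f)) \geq s$ via the local expansion, and concluding from $\deg \Div{Q(f)} = 0$ that $Q(f)$ must vanish) is exactly the standard one found there. One tiny correction: by \cref{prop:herm_basics}, $\Div{x-\alpha} = \sum_{\beta \in B_\alpha}(\alpha,\beta) - qP_\infty$, so $x-\alpha$ is in fact a local parameter at \emph{every} affine place above $\alpha$, not just the one in question; your parenthetical worry there is unnecessary.
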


Note that $\ell$ is to be given as an a priori bound on $\deg_z Q$, but another bound is already indirectly enforced by \cref{itm:gs_Qreq_degree}: by this, it never makes sense to choose $\ell$ such that $s(n-\tau) - \ell m \leq 0$.

\begin{remark}
  An analogous theorem holds for much more general AG codes, though $\Ring$ of course needs to be defined properly.
  See e.g.~\cite{guruswami_improved_1999} or the expository description in \cite{beelen_decoding_2008}.
\end{remark}

One can find a satisfactory $Q$ by solving a system of linear equations in the $\Fqq$-coefficients for its $x^iy^jz^h$-monomials, and one can ensure that this system will have a non-zero solution by satisfying a certain expression in the parameters.
The resulting equation can be analysed for determining the maximal $\tau$ and corresponding choices of $s$ and $\ell$.
We are not going to perform that analysis but see e.g.~\cite{lee_list_2009}.
Given $s$ and $\ell$, one can use the equation to compute a value $\tauGS(s,\ell)$ such that one can choose any $\tau \leq \tauGS(s,\ell)$.
Furthermore, $\tauGS(s,\ell)$ is the greatest integer less than
\begin{equation}
  \label{eqn:gs_decoding_radius}
  \left(1 - \frac{s+1}{2(\ell+1)}\right)n - \frac m 2 \frac \ell s  - \frac g s \ .
\end{equation}
Analysing the asymptotics of this bound, one sees that there are choices of $s$ and $\ell$ which allows choosing any $\tau < n - \sqrt{n(n-\dG)}$.
This function $n - \sqrt{n(n-\dG)}$ is called the Johnson radius.

For specific parameters of the code and $s$ and $\ell$, the lower bound on $\tauGS(s,\ell)$ is good but not always tight; it is easy to compute the precise value of $\tauGS(s,\ell)$, though a closed expression is complicated.
If one considers the Guruswami--Sudan as an algorithm taking $s$ and $\ell$ as parameters (and the code), then 
$\tauGS(s,\ell)$ is the guaranteed number of errors that it is able to correct.
It is very interesting that the algorithm will quite often succeed in correcting more errors; this was already remarked in \cite{lee_list_2009}.
We will get back to this in \cref{sec:simulation}.

\subsection{Finding $Q$ in an Explicit Module}
\label{ssec:gsQ}

We will now concern ourselves with the problem of finding $Q$.
We will assume $s \leq \ell$; with the proper analysis of choices of $s$ and $\ell$, one can show that $s > \ell$ implies $\tau < \dG\!/2$.

\begin{definition}
  \label{def:gs_mod}
  Let $\Mod M_\sell \subset \Ring[z]$ denote the set of all $Q \in \Ring[z]$ such that $Q$ has a zero of multiplicity $s$ at $(P_i, r_i)$ for $i=1,\ldots,n$, and $\deg_z Q \leq \ell$.
\end{definition}
Finding a $Q \in \Ring[z]$ for satisfying the requirements of \cref{thm:gs_herm} is then the same as finding an element in $\Mod M_\sell$ with low enough $\orderz m$.
We will find one with minimal $\orderz m$ which is guaranteed to be sufficient by the choice of parameters $s,\ell,\tau$.

It is not hard to see that $\Mod M_\sell$ is a $\Ring$-module.
To proceed, we will need to give an explicit basis for $\Mod M_\sell$. We will use a basis previously given in the literature \cite{lee_list_2009}. We will need two functions in $\Ring$:
\begin{IEEEeqnarray}{rCl+l/rCl?l}
  G &=& \prod_{i=1}^{n/q}(x - \alpha_i)   = x^{q^2}-x, &\\
  R: & & R(P_i) = r_i & \forall i=1,\ldots,n.
  \label{eqn:RandG}
\end{IEEEeqnarray}
The function $G$ is known in advance and by \cref{prop:herm_basics}, we have $\Div G = \sum_{i=1}^n P_i - nP_\infty$.

The function $R$ depends on the received word $\vec r$. Any non-zero function in $\Ring$ satisfying the interpolation constraints will do; we can either solve the linear system of equations in its coefficients, or  we can use the explicit formula of \cref{lem:gs_interpol}.
The desired explicit basis of $\Mod M_\sell$ is the following:
\begin{theorem}[\!\!\protect{\cite[Proposition 7]{lee_list_2009}}]
  \label{thm:gs_Mbasis}
  $\Mod M_\sell$ is generated as a $\Ring$-module by the $\ell+1$ polynomials $H\T i \in \Ring[z]$ given by
  \begin{IEEEeqnarray*}{rCl+l}
    H\T t(z) &=& G^{s-t}(z-R)^t,         & \textrm{for } 0 \leq t \leq s,\\
    H\T t(z) &=& z^{t-s}(z-R)^s,         & \textrm{for } s < t \leq \ell.
  \end{IEEEeqnarray*}
\end{theorem}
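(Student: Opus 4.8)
The plan is to prove containment in both directions. For the easy direction, I would first check that each $H\T t \in \Mod M_\sell$, i.e.\ that each $H\T t$ has a zero of multiplicity at least $s$ at every $(P_i, r_i)$ and that $\deg_z H\T t \le \ell$. The degree bound is immediate from the definitions. For the multiplicity, fix a place $P_i = (\alpha_i, \beta_i)$ with local parameter $\phi = x - \alpha_i$. Since $\Div G = \sum_j P_j - nP_\infty$, the function $G$ vanishes at $P_i$, so $v_{P_i}(G) \ge 1$; hence $G^{s-t}$ contributes multiplicity at least $s-t$ in the sense of \cref{def:gs_zero_multi}. Likewise $R - r_i$ vanishes at $P_i$ because $R(P_i) = r_i$, so $(z - R)^t = \big((z - r_i) - (R - r_i)\big)^t$ expands, via the binomial theorem, into a sum of terms each of which is a product of a power of $(z - r_i)$ and a power of $(R - r_i)$; each such term has $j + h \ge t$ where $j$ counts the $\phi$-order coming from $(R-r_i)$ and $h$ the $(z-r_i)$-degree. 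Multiplying by $G^{s-t}$ raises the total to $\ge s$. The case $s < t \le \ell$ is the same, with the factor $z^{t-s}$ contributing nothing negative. So every $H\T t$ lies in $\Mod M_\sell$, and therefore the $\Ring$-module they generate is contained in $\Mod M_\sell$.

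The substantive direction is to show every $Q \in \Mod M_\sell$ is an $\Ring$-combination of the $H\T t$. I would argue by reducing $Q$ against the $H\T t$ from the top $z$-degree down. Write $Q = \sum_{t=0}^{\ell} Q_t z^t$ with $Q_t \in \Ring$. First reduce the coefficients of $z^\ell, z^{\ell-1}, \ldots, z^{s+1}$: since $H\T t = z^{t-s}(z-R)^s$ has $z$-leading term $z^t$ with unit leading coefficient, I can subtract suitable $\Ring$-multiples of $H\T\ell, H\T{\ell-1}, \ldots, H\T{s+1}$ from $Q$ to kill all coefficients of $z^t$ for $t > s$, arriving at some $Q' \in \Mod M_\sell$ with $\deg_z Q' \le s$. (The point is that these $H\T t$ form, after this triangular elimination, a "division" that terminates because each step strictly lowers $\deg_z$.) Now $Q'$ has $z$-degree at most $s$ and still has a zero of multiplicity $\ge s$ at each $(P_i, r_i)$. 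The claim is that such a $Q'$ must be divisible, as an element of $\Ring[z]$, by $G$ in a structured way: more precisely, I want to show $Q' = \sum_{t=0}^{s} a_t \, G^{s-t} (z-R)^t$ for suitable $a_t \in \Ring$. Re-expanding in the basis $(z-R)^t$ rather than $z^t$ — which is legitimate since $(z-R)^t$ also has unit $z$-leading coefficient — write $Q' = \sum_{t=0}^s b_t (z-R)^t$ with $b_t \in \Ring$. The multiplicity-$s$ condition at $(P_i, r_i)$, using local parameter $\phi = x-\alpha_i$ and the fact that $z - R \equiv (z - r_i) \pmod{\phi \cdot \text{(local ring)}}$, forces $v_{P_i}(b_t) \ge s - t$ for each $t$ and each $i$: expanding $Q'$ locally, the coefficient of $(z-r_i)^t$ is $b_t$ plus contributions from higher $(z-R)$-powers that already have enough $\phi$-order, so to get total order $\ge s$ on the $(z-r_i)^t$ part one needs $v_{P_i}(b_t) \ge s-t$. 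Since this holds for every evaluation place $P_i$, and $\Div G = \sum_i P_i - nP_\infty$ with the $P_i$ distinct, we get $\Div{b_t} + (s-t)\sum_i P_i \ge 0$ away from $P_\infty$, i.e.\ $G^{s-t} \mid b_t$ in $\Ring$ — here I would invoke \cref{lem:gs_div} or directly \cref{prop:herm_basics} to pass from the valuation conditions to divisibility by $G^{s-t}$ within $\Ring = \Fqq[x,y]$. Setting $a_t = b_t / G^{s-t} \in \Ring$ then writes $Q'$ as the desired $\Ring$-combination of $H\T 0, \ldots, H\T s$, and combining with the earlier reduction shows $Q$ is an $\Ring$-combination of all the $H\T t$.

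The main obstacle is the last step: carefully showing that the multiplicity-$s$ condition at the $P_i$, phrased via \cref{def:gs_zero_multi} in terms of the expansion $\sum_{j+h\ge s}\gamma_{j,h}\phi^j(z-z_0)^h$, really does translate into the pole/zero-order inequalities $v_{P_i}(b_t) \ge s - t$ on the coefficients $b_t$ in the $(z-R)$-adic expansion, and then that these local conditions globalise to honest divisibility by $G^{s-t}$ in $\Ring$. The subtlety is that $R$ is only required to interpolate the $r_i$, so $z - R$ is not literally $z - r_i$; one must check that the "error" $R - r_i$ has positive $\phi$-order and hence does not spoil the order bookkeeping, which is exactly where $R(P_i) = r_i$ is used. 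The globalisation uses that $G$ is, up to the pole at $P_\infty$, precisely the function whose divisor of zeros is $\sum_i P_i$, so simultaneous vanishing to order $s-t$ at all $P_i$ is equivalent to divisibility by $G^{s-t}$ in $\Ring$; this is where I would lean on \cref{prop:herm_basics} and \cref{lem:gs_div}. Everything else is routine triangular elimination and the binomial theorem.
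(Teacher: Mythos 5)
Your proposal is correct, but note that the paper itself offers no proof of this statement: it is imported verbatim as \cite[Proposition 7]{lee_list_2009}, so there is no in-paper argument to compare against. Your reconstruction is essentially the standard one (and matches the spirit of the cited source): verify $H\T t \in \Mod M_\sell$ directly from \cref{def:gs_zero_multi}, eliminate the coefficients of $z^t$ for $t>s$ using the monic-in-$z$ generators $z^{t-s}(z-R)^s$, rewrite the remainder in the $(z-R)$-adic basis, deduce $v_{P_i}(b_t)\geq s-t$ from the multiplicity condition, and globalise to $G^{s-t}\mid b_t$ via $\Div G=\sum_i P_i-nP_\infty$ and \cref{lem:gs_div}. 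Two points deserve a little more care than your sketch gives them: first, the step ``$v_{P_i}(b_t)\geq s-t$'' is really a downward induction on $t$ (for $t=s$ it is vacuous; for smaller $t$ the contributions of the higher $b_{t'}(R-r_i)^{t'-t}$ already have $\phi$-order $\geq s-t$ by the inductive hypothesis, so the required bound on the coefficient of $(z-r_i)^t$ falls on $b_t$ alone), and this uses that the local expansion $\sum_{j,h}\gamma_{j,h}\phi^j(z-r_i)^h$ is unique so that multiplicity $\geq s$ is \emph{equivalent} to the order bounds on each $(z-r_i)^h$-coefficient; second, the triangular elimination stays inside $\Mod M_\sell$ only because $\Mod M_\sell$ is closed under $\Ring$-multiplication and subtraction, which the paper asserts but which you should check (multiplying by $p\in\Ring$ cannot decrease the local order since $p$ has no pole at any $P_i$). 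With those details filled in, the argument is complete.
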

We need to project this module, its basis and the weighted degree into $\Fqq[x]$ in some sensible manner to be able to use the tools of \cref{ssec:modules} to find an element in $\Mod M_\sell$ of minimal $\order$.

Firstly, introduce $\vectify \null\!: \Ring \mapsto \Fqq[x]^q$: for any $g = \sum_{i=0}^{q-1} y^i g_i(x) \in \Ring$, then we define $\vectify g = (g_0, \ldots, g_{q-1})$.
As we have previously noted, any element of $\Ring$ can uniquely be written such that the $y$-degree is at most $q-1$. This implies that the map $\vectify \null\!$ is well-defined and a bijection.
Let $\Ring[z]_\ell$ be the set of polynomials of $z$-degree at most $\ell$; then we also introduce $\vectifyz \null\!: \Ring[z]_{\ell} \mapsto \Fqq[x]^{(\ell+1)q}$, as for any $Q \in \Ring[z]_\ell$, then $\vectifyz Q = \big( \vectify {\Csel Q 0} \mid \ldots \mid \vectify {\Csel Q \ell} \big)$.
Define now $\vec w \in \NN_0^{(\ell+1)q}$ as $\vec w = (\vec w_0 \mid \ldots \mid \vec w_{\ell})$, where
\[
  \vec w_t = (tm, tm + q+1,\ldots, tm + (q-1)(q+1)).
\]
One can then verify the following identity for any $Q \in \Ring[z]_\ell$:
\[
  \orderz m(Q) = (\deg \circ\:\WmapH \circ \vectifyz\null)(Q),
\]
where $\WmapH$ is as in \cref{ssec:modules}.

\begin{proposition}
  \label{prop:gs_Abasis}
  Let $A_{s,\ell} \in \Fqq[x]^{\dimsq{(q(\ell+1))}}$ be given as
  \[
    \left(
      \begin{array}{c|c|c}
    \left(
      \begin{array}{c}
        \vectifyz{H\T 0} \\\hline
        \vectifyz{yH\T 0} \\\hline
        \vdots \\\hline
        \vectifyz{y^{q-1}H\T 0}
      \end{array}
      \right)^{\!\!\!\top}
      & \cdots &
    \left( \begin{array}{c}
        \vectifyz{H\T \ell} \\\hline
        \vectifyz{yH\T \ell} \\\hline
        \vdots \\\hline
        \vectifyz{y^{q-1}H\T \ell}
      \end{array}
    \right)^{\!\!\!\top}
  \end{array}
\right )\Transp,
  \]
then $\Mod M_{s,\ell}$ is in bijection with the $\Fqq[x]$ row space of $A_{s,\ell}$ through the map $\vectifyz \null$.
  Let $\vectifyz Q$ be the vector in this row space with minimal $\WmapH$-weighted degree.
  Then $Q$ has minimal $\orderz m$ in $\Mod M_{s,\ell}$.
\end{proposition}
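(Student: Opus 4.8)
The plan is to establish the claimed bijection in two stages and then deduce the minimality statement from the preservation of weighted degree together with the results of \cref{ssec:modules}. First I would verify that the rows of $A_{s,\ell}$, pulled back through $\vectifyz\null$, span $\Mod M_{s,\ell}$ as an $\Fqq[x]$-module. By \cref{thm:gs_Mbasis} we know that $\{H\T 0, \ldots, H\T \ell\}$ generate $\Mod M_{s,\ell}$ as a \emph{$\Ring$-module}, i.e.\ as an $\Fqq[x,y]$-module. Since $\{1, y, \ldots, y^{q-1}\}$ is an $\Fqq[x]$-basis of $\Ring$ (the ``standard basis'' discussed after \cref{prop:herm_basics}), any element of $\Mod M_{s,\ell}$ is an $\Fqq[x]$-linear combination of the $q(\ell+1)$ elements $y^j H\T t$ for $0 \le j < q$, $0 \le t \le \ell$; conversely each such $y^j H\T t$ lies in $\Mod M_{s,\ell}$. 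Hence $\vectifyz\null$ carries $\Mod M_{s,\ell}$ onto the $\Fqq[x]$-row space of $A_{s,\ell}$. Injectivity of $\vectifyz\null$ on $\Ring[z]_\ell$ is immediate from the injectivity of $\vectify\null$ on $\Ring$ (uniqueness of the $y$-degree-$<q$ representation) applied coordinate-wise, so we get a genuine bijection between $\Mod M_{s,\ell}$ and the row space.

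Second, I would record that $\vectifyz\null$ is not merely a bijection of sets but an isomorphism of $\Fqq[x]$-modules: it is $\Fqq[x]$-linear because reducing $y$-degrees modulo $\Herm(x,y)$ only involves $\Fqq[x]$-coefficients, and multiplication by $y$ on $\Ring[z]_\ell$ corresponds to a fixed $\Fqq[x]$-linear ``companion'' action on $\Fqq[x]^{(\ell+1)q}$. This is the point where a little care is needed, since $A_{s,\ell}$ already incorporates the rows $\vectifyz{y^j H\T t}$, so the row space is visibly an $\Fqq[x]$-submodule and the bijection is an $\Fqq[x]$-module isomorphism onto it.

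Third, I would invoke the weighted-degree identity stated just before the proposition, namely $\orderz m(Q) = (\deg \circ\, \WmapH \circ \vectifyz\null)(Q)$ for all $Q \in \Ring[z]_\ell$. (This identity itself follows by unwinding \cref{def:order} and the definition of $\vec w_t$: for a monomial $x^i y^j z^t$ one has $\order(x^i y^j) + tm = qi + (q+1)j + tm$, which is exactly $w + \nu\deg$ of the corresponding coordinate of $\vectifyz\null$ with $\nu = q$ and the appropriate entry of $\vec w$.) Given this, minimising $\orderz m$ over $\Mod M_{s,\ell}$ is \emph{the same problem} as minimising $\deg \WmapH(\vec v)$ over $\vec v$ ranging in the row space of $A_{s,\ell}$; by the proposition in \cref{ssec:mod_weights} this is in turn the same as minimising $\deg\vec v$ over $\WmapH$ applied to that row space, which is a free $\Fqq[x^q]$-module, and \cref{prob:minimal_vector} together with \cref{prop:wpf} guarantees such a minimiser exists and is found among the rows of any weak Popov form. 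Thus if $\vectifyz Q$ is the row-space vector of minimal $\WmapH$-weighted degree, its preimage $Q$ has minimal $\orderz m$ in $\Mod M_{s,\ell}$, which is the assertion.

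The main obstacle is the first step: one must argue carefully that the $\Ring$-module generators of \cref{thm:gs_Mbasis} really do yield, after tensoring up by the $\Fqq[x]$-basis $\{1,y,\dots,y^{q-1}\}$ of $\Ring$, an $\Fqq[x]$-module generating set whose image under $\vectifyz\null$ is exactly the row space of $A_{s,\ell}$ — and that the $y$-reduction built into $\vectifyz\null$ (which mixes coordinates within each length-$q$ block via the relation $y^q = -y + x^{q+1}$) does not destroy the row-space structure. Everything else is bookkeeping once the weighted-degree identity is taken as given.
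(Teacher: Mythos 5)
Your proposal is correct and follows essentially the same route as the paper's proof: write $Q = \sum_t p_t H\T t$ via \cref{thm:gs_Mbasis}, expand each $p_t$ in the $\Fqq[x]$-basis $\{1,y,\dots,y^{q-1}\}$ of $\Ring$ so that $\vectifyz Q = \sum_{t,j} p_{t,j}\vectifyz{y^jH\T t}$, and deduce the minimality claim from the identity $\orderz m = \deg\circ\WmapH\circ\vectifyz\null$. The only superfluous worry is the $y$-reduction inside $\vectifyz\null$: since the remaining coefficients $p_{t,j}$ lie in $\Fqq[x]$, no reduction modulo $\Herm$ is ever triggered, so $\Fqq[x]$-linearity is immediate.
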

\begin{proof}
  Consider some $Q(z) \in \Mod M_{s,\ell}$; by \cref{thm:gs_herm} we can find $p_t \in \Ring$ such that $Q(z) = \sum_{t=0}^\ell p_t H\T t(z)$.
  Let $p_t = \sum_{j=0}^{q-1} p_{t,j}y^j$ with $p_{t,j} \in \Fqq[x]$, then
  \[
    Q = \sum_{t=0}^\ell p_t H\T t
      = \sum_{t=0}^\ell\sum_{j=0}^{q-1} p_{t,j}(y^jH\T t).
  \]
  This directly implies that
  \[
    \vectifyz Q = \sum_{t=0}^\ell\sum_{j=0}^{q-1} p_{t,j}\vectifyz {y^jH\T t},
  \]
  which is to say, $\vectifyz Q$ is in the $\Fqq[x]$ row space of $A_{s,\ell}$.

  The claim on weighted degrees follow immediately from $\orderz m = \deg \circ\:\WmapH \circ \vectifyz\null$.
\end{proof}

By \cref{prob:minimal_vector}, we can therefore find a minimal $\orderz m$-weighted $Q \in \Mod M_{s,\ell}$ by bringing  $\PmapH(A_{s,\ell})$ to weak Popov form.
We get:
\begin{proposition}
  In the context of \cref{prop:gs_Abasis}, the worst-case complexity of finding a satisfactory $Q$ as a minimal element in the row space of $\PmapH(A_{s,\ell})$ is as in \cref{tbl:gs_compl}, for various choices of module minimisation algorithm.
\end{proposition}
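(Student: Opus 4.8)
The statement asks us to bound the cost of bringing $\PmapH(A_{s,\ell})$ to weak Popov form. The key insight is that the generic complexities in \cref{tbl:wpf_complexities} are driven by three quantities of the input matrix: its dimension $\rho = q(\ell+1)$, its degree $\deg\PmapH(A_{s,\ell})$, and its orthogonality defect $\OD{\PmapH(A_{s,\ell})}$. So the plan is: (i) read off $\rho$; (ii) bound $\deg A_{s,\ell}$ and hence, via the estimate $\deg\PmapH(V) \le \deg V + \max_j(w_j/\nu)$ from \cref{ssec:mod_weights}, bound $\deg\PmapH(A_{s,\ell})$; (iii) bound the orthogonality defect, ideally better than the trivial $\rho\,\deg\PmapH(A_{s,\ell})$; and (iv) substitute into each row of \cref{tbl:wpf_complexities} to populate \cref{tbl:gs_compl}.

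\textbf{Step (i)–(ii): dimensions and degrees.} Here $\rho = q(\ell+1)$ and $\nu = q$. For the degree bound, I would bound $\deg_x H\T t$ for each generator: $H\T t = G^{s-t}(z-R)^t$ for $t\le s$ has $x$-degree at most $(s-t)\deg G + t\deg R \le (s-t)q^2 + t(n+2g)$ (using $\deg R < n+2g$ from \cref{lem:gs_interpol}, since we take $R$ via the explicit formula), and for $t>s$ the bound is $s\deg R$. Multiplying by $y^j$ with $j<q$ does not change $\deg_x$. So $\deg A_{s,\ell} = O(sn/q)$ roughly — more carefully, since $n = q^3$, one gets $\deg A_{s,\ell} = O(sq^2)$. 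The weight contribution is $\max_j(w_j/\nu) \le (\ell m + (q-1)(q+1))/q = O(\ell m / q) = O(\ell q^2)$ since $m < n = q^3$. Combining, $\deg\PmapH(A_{s,\ell}) = O(\ell q^2) = O(\ell n/q)$ (absorbing $s \le \ell$).

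\textbf{Step (iii): the orthogonality defect.} This is the delicate step and the main obstacle, because the quality of the Alekhnovich and Mulders--Storjohann rows of the table depends on beating the trivial bound $\OD{V} \le \rho\,\deg V$. I would compute $\deg\det A_{s,\ell}$ directly. The matrix $A_{s,\ell}$ is block-structured: the $t$-th block-column comes from $\vectifyz{y^j H\T t}$, $j = 0,\dots,q-1$. Since the $H\T t$ form a triangular-in-$z$ system (the leading $z^t$ coefficient of $H\T t$ is $G^{s-t}$ for $t\le s$ and $1$ for $t>s$), the determinant of $A_{s,\ell}$ equals, up to a unit and reordering, the product over $t$ of the determinant of the $q\times q$ block expressing $\vectify{y^j \cdot (\text{leading }z\text{-coeff of }H\T t)}$ in the standard basis; and $\deg\det$ of such a block equals the $x$-degree contribution of multiplication by that leading coefficient. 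Summing, $\deg\det A_{s,\ell} = q\sum_{t=0}^s (s-t)\deg G = q\deg G\binom{s+1}{2} = O(s^2 q^3)$. Meanwhile $\rowdeg A_{s,\ell}$ — and more relevantly $\rowdeg \PmapH(A_{s,\ell})$ — is at most $\rho \cdot \deg\PmapH(A_{s,\ell}) = q(\ell+1)\cdot O(\ell n/q) = O(\ell^2 n)$. One must then check that after applying $\PmapH$ the determinant degree only shifts additively by $\sum_j \floor{w_j/\nu} = O(\ell^2 q^2) = O(\ell^2 n/q)$, so the defect $\OD{\PmapH(A_{s,\ell})} = \rowdeg - \deg\det$ is $O(\ell^2 n)$ — i.e., of the same order as the trivial bound here, which tells us that for this matrix the trivial bound is essentially tight and the Alekhnovich entry is $\rho^\omega\OD{} = O((q\ell)^\omega \cdot \ell^2 n)$, not better asymptotically than Mulders--Storjohann's $\rho^2\deg V\,\OD V$ divided appropriately. (It is precisely this observation — that the natural embedding does not help GJV either — that motivates the paper's later switch to a different embedding; so here I expect \cref{tbl:gs_compl} to simply record the three entries $\rho^2\deg V\,\OD V$, $\rho^\omega\OD V$, $\rho^\omega\deg V$ with the substitutions made.)

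\textbf{Step (iv): assembling the table.} With $\rho = q(\ell+1)$, $\deg\PmapH(A_{s,\ell}) = O(\ell n/q)$ and $\OD{\PmapH(A_{s,\ell})} = O(\ell^2 n)$ (the trivial bound), substitution into \cref{tbl:wpf_complexities} gives: Mulders--Storjohann $O(\rho^2 \cdot (\ell n/q) \cdot \ell^2 n) = O(q^2\ell^2 \cdot \ell n/q \cdot \ell^2 n) = O(q\ell^5 n^2)$; Alekhnovich $O(\rho^\omega \ell^2 n) = O((q\ell)^\omega \ell^2 n)$; GJV/Zhou--Labahn $O(\rho^\omega \deg) = O((q\ell)^\omega \cdot \ell n / q) = O(q^{\omega-1}\ell^{\omega+1} n)$. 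Using $q = n^{1/3}$ to express these purely in $n, \ell$ yields the entries of \cref{tbl:gs_compl}; in particular the GJV row becomes $\Oapp(n^{(2+\omega)/3}\ell^{\omega+1})$ before the final tightening with the alternative embedding. I would present the proof simply as: "The three quantities $\rho$, $\deg\PmapH(A_{s,\ell})$, and $\OD{\PmapH(A_{s,\ell})}$ satisfy the bounds derived above; plugging these into \cref{tbl:wpf_complexities} gives \cref{tbl:gs_compl}." The only real work is the degree-of-determinant computation in step (iii), which exploits the $z$-triangularity of the Lee--O'Sullivan basis $H\T t$.
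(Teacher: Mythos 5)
Your overall strategy (bound $\rho$, $\deg\PmapH(A_{s,\ell})$ and $\OD{\PmapH(A_{s,\ell})}$, then substitute into \cref{tbl:wpf_complexities}) is exactly the paper's, but there is a concrete gap in step (ii) that makes your final bounds miss the table by a factor of $\ell/s$. You bound the weight contribution $\max_j(w_j/q) \le (\ell m + (q^2-1))/q$ by using only $m < n$, obtaining $O(\ell q^2)$ and hence $\gamma = \deg\PmapH(A_{s,\ell}) \in O(\ell n^{2/3})$. The paper instead invokes the constraint noted immediately after \cref{thm:gs_herm} — it never makes sense to choose $\ell$ with $s(n-\tau) - \ell m \le 0$, i.e.\ $\ell m < s(n-\tau) \le sn$ — which gives $\max_j(w_j/q) \in O(sn^{2/3})$ and hence $\gamma \in O(sn^{2/3})$. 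This is the step that puts $s$ rather than an extra power of $\ell$ into every row of \cref{tbl:gs_compl}: with the trivial defect bound $\OD{\PmapH(A_{s,\ell})} \le (\ell+1)q\gamma$ one then gets $n^{7/3}\ell^3 s^2$, $n^{(3+\omega)/3}\ell^{\omega+1}s$ and $n^{(2+\omega)/3}\ell^\omega s$ exactly as stated. Your reading that the discrepancy (e.g.\ $\ell^{\omega+1}$ versus $\ell^\omega s$ in the GJV row) is repaired later ``by the switch to a different embedding'' is a misreading of the paper's structure: $A_{s,\ell}$ built from $\vectifyz\null$ \emph{is} the new embedding, \cref{tbl:gs_compl} records the final complexities, and the tightening comes solely from the $\ell m < s(n-\tau)$ bound on the weights.

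Two smaller points. First, the proposition's complexity is meant to include \emph{constructing} $A_{s,\ell}$ (computing $R$, the powers $R^t$ with $y$-degree reduction, and the $H\T t$); the paper spends half the proof verifying this costs $\Oapp(s^3 n)$ plus $O(\ell^2 s n)$ for writing down the matrix, all dominated by the minimisation step, and your proposal omits it entirely. Second, your determinant computation via the $z$-block-triangularity of the Lee--O'Sullivan basis is correct ($\deg\det A_{s,\ell} = q\deg G\binom{s+1}{2}$) but unnecessary: the paper simply uses the generic bound $\OD{\PmapH(A_{s,\ell})} \le (\ell+1)q\gamma$, and, as you yourself conclude, the exact determinant does not improve the asymptotics here.
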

\begin{proof}
  We firstly need to construct $A_{s,\ell}$: we assume $G^t$ precomputed for $t=1,\ldots,s$, and $R$ can be computed in $\Oapp(n)$ according to \cref{lem:gs_interpol}.
  Computing $R^t$ for $t=1,\ldots,s$, each represented in the standard basis with $y$-degree less than $q$, can be done iteratively in $s\Oapp(q)\Oapp(sq^2) = \Oapp(s^2n)$: $R \cdot R^{t-1}$ can be computed as multiplying two degree $q-1$ polynomials in $y$ whose coefficients are in $\Fqq[x]$ with degree in $O(sq^2)$ by \cref{lem:gs_interpol}.
  We then need to use $\Herm$ to reduce the $y$-degree to less than $q$, which can be done with at most $3q$ additions of $\Fqq[x]$-polynomials of degree at most $O(sq^2)$.
  To then compute the $H\T t$, we need $\tbinom t {t'} G^{s-t}R^{t'}$ for $t=0,\ldots,s$ and $0 \leq t' \leq t$ at a cost of a further $\Oapp(s^3n)$.
  Since the $H\T t$ are then computed in the standard basis, the final construction of $A_{s,\ell}$ is simply linear in its size which is $O(\ell^2sn)$.
  
  By \cref{ssec:mod_weights}, the complexity of bringing $\PmapH(A_{s,\ell})$ to weak Popov form is dominated by
  \begin{IEEEeqnarray*}{rCl}
    \gamma &=& \deg(A_{s,\ell}) + \max{\vec w}/q \\
           &=& O(sn^{2/3}) + (\ell m + (q-1)(q+1))/q.
  \end{IEEEeqnarray*}
  By the note right after \cref{thm:gs_herm} then $\ell m < s(n-\tau)$ so $\gamma \in O(s n^{2/3})$.
  The complexities then follow by noting that $A_{s,\ell}$ has $(\ell+1)q$ rows and columns, and $\OD{\PmapH(A_{s,\ell})} \leq (\ell+1)q \gamma $.
\end{proof}

\begin{table}
  \centering
  \hspace*{-0.2cm}
  \def\Dist{\hspace*{0.8em}}
  \begin{tabular}{@{\Dist}l@{\Dist}l}
    \multicolumn{2}{c}{Complexity for computing $Q$}                                      \\
    \toprule
      Algorithm                                     & Field operations in big-$\Oapp$     \\ \midrule
    Mulders--Storjohann \cite{mulders_lattice_2003}                               & $n^{7/3}\ell^3s^2$                  \\
    Alekhnovich \cite{alekhnovich_linear_2005}                                    & $n^{(3+\omega)/3}\ell^{\omega+1} s$ \\
    GJV \cite{giorgi_complexity_2003} or Zhou--Labahn \cite{zhou_computing_2012}  & $n^{(2+\omega)/3}\ell^\omega s$     \\
    \bottomrule
  \end{tabular}
  \caption
  {Use of $\Oapp$ and $\omega$ as in \cref{tbl:wpf_complexities}.}
  \label{tbl:gs_compl}
\end{table}

\begin{remark}
  For the interpolation step of Guruswami--Sudan in decoding of algebraic geometry codes, both the Mulders--Storjohann and the Alekhnovich algorithm have been suggested, \cite{lee_list_2009} respectively \cite{beelen_efficient_2010}.
  Note that the algorithm described in \cite{lee_list_2009} is computationally equivalent with the Mulders--Storjohann algorithm though derived in terms of \Grobner bases.
  In both \cite{lee_list_2009} and \cite{beelen_efficient_2010}, the mapping $\Wmap$ was (implicitly) used together with a detailed analysis of the module minimisation algorithms to prove that the operations did not leave the $\F[x^q]$-module, and that the slow-down discussed in \cref{ssec:mod_weights} did not occur.

  The GJV and the Zhou--Labahn methods have not previously been applied for this decoding setting, and the application of $\PmapH$ allows us to deduce correctness and the low complexity without investigating the algorithm in detail.

  Note that the GJV has previously been suggested for decoding of Reed--Solomon codes \cite{cohn_ideal_2010}.
\end{remark}

\subsection{Fast Root finding}
\label{ssec:gs_root}

\def\local{\phi}
\def\Pring{\PowerSeriesRing[\local] \Fqq}

After having constructed $Q(z)$, we should find all $f \in \L(mP_\infty)$ such that $Q(f) = 0$.
This can be done using Hensel lifting \cite{wu_efficient_2001,beelen_decoding_2008}, inspired by the algorithm of Roth and Ruckenstein \cite{roth_efficient_2000} for solving the root-finding problem for Reed--Solomon codes.
The complexity of these methods all have at least quadratic dependence on $n$, and so would be asymptotically slower than the interpolation described in the previous section.

Alekhnovich described in \cite{alekhnovich_linear_2005} how to use fast arithmetic to bring the method of \cite{roth_efficient_2000} down to quasi-linear complexity in $n$.
Using the power series idea of \cite{beelen_decoding_2008} it is easy to apply this algorithm to our root-finding problem as well.
For our case, the main result can be paraphrased as follows; its proof as well as the complete root-finding algorithm is given in \cref{app:root_finding}.

\begin{proposition}
  For $Q \in \Ring[z]$ satisfying the requirements of \cref{thm:gs_herm}, then we can compute all $f \in \L(mP_\infty)$ such that $Q(f) = 0$ in time $\Oapp(n^{4/3}\ell^2 s)$.
\end{proposition}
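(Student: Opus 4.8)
The plan is to adapt Alekhnovich's fast realization \cite{alekhnovich_linear_2005} of the Roth--Ruckenstein root-finding procedure \cite{roth_efficient_2000} to the Hermitian setting, by passing to power-series expansions at a single rational place as in \cite{beelen_decoding_2008}. Fix an affine place $P = (\alpha,\beta) \in \PlacesAff$. Since $P$ is unramified in $\FF{q^2}(x,y)/\FF{q^2}(x)$, the function $\phi = x-\alpha$ is a uniformizer at $P$, and expansion at $P$ embeds $\Ring$ into $\PowerSeriesRing[\phi]{\Fqq}$: a function $g = \sum_{j<q} g_j(x)y^j \in \Ring$ is sent to $\hat g = \sum_{j<q} g_j(\alpha+\phi)\,\hat y^j$, where $\hat y \in \PowerSeriesRing[\phi]{\Fqq}$ is the unique root of $\Herm(\alpha+\phi,Y)$ congruent to $\beta$ modulo $\phi$. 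Applying this coefficient-wise in $z$ transforms $Q(f)=0$ in $\Ring[z]$ into $\hat Q(\hat f)=0$ in $(\PowerSeriesRing[\phi]{\Fqq})[z]$, where $\hat Q = \sum_t \hat Q_t z^t$ and $\hat Q_t$ denotes the expansion of $\Csel Q t$. Since $\PowerSeriesRing[\phi]{\Fqq}$ is a domain and $\deg_z \hat Q \le \ell$, the transformed equation has at most $\ell$ solutions, and every $f \in \L(mP_\infty)$ with $Q(f)=0$ yields one of them.

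The reduction loses no information: a nonzero $g \in \L(mP_\infty)$ has $v_P(g) \le \order(g) \le m$, because $g$ has exactly $\order(g)$ zeros away from $P_\infty$ counted with multiplicity; hence the truncation $\L(mP_\infty) \to \PowerSeriesRing[\phi]{\Fqq}/(\phi^{m+1})$ is injective and it suffices to recover each power-series root modulo a suitable power of $\phi$. Concretely I would: (i) compute $\hat y$ and the $\hat Q_t$ to the required precision, using the power-series conversion machinery of \cref{app:power_series} together with Taylor shifts $g_j(x) \mapsto g_j(\alpha+\phi)$; (ii) run the divide-and-conquer root-finding of \cite{alekhnovich_linear_2005} on $\hat Q$, starting from the roots over $\Fqq$ of $\hat Q \bmod \phi$ and refining $\phi$-adically while doubling the precision, the number of candidate partial roots remaining controlled as in \cite{alekhnovich_linear_2005}; (iii) for each power-series root $\hat f$ obtained, solve $\sum_{j<q} f_j(\alpha+\phi)\hat y^j \equiv \hat f$ for polynomials $f_j$ of degree at most $\lfloor (m-j(q+1))/q\rfloor$ -- again a power-series task handled by \cref{app:power_series} -- and keep $f = \sum_{j<q} f_j y^j$ only when $\order f \le m$ and $Q(f)=0$, discarding spurious candidates.

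For the running time, observe that $\Csel Q t$ has $\order$ at most $\orderz m Q < s(n-\tau) = O(sn)$, hence $x$-degree $O(sn/q) = O(sn^{2/3})$ and $y$-degree below $q = n^{1/3}$. The precision at which $\hat Q$ must be known is dictated by the target precision $m+1 = O(n)$ together with the $\phi$-adic precision lost along a refinement branch; in the standard analysis this loss is bounded by the $\phi$-degree of $\hat Q$ (at most $O(sn)$, inherited from $\order Q$) plus $O(\ell)$ times the number of levels $O(n)$, so precision $O(sn + \ell n)$ suffices. Producing $\hat y$ and the $\ell+1$ series $\hat Q_t$ to this precision then costs $\Oapp\big((\ell+1)\,q\,(sn+\ell n)\big) = \Oapp(\ell s n^{4/3} + \ell^2 n^{4/3})$; the refinement costs $\Oapp\big(\ell^2(sn + \ell n)\big)$; and the $\le \ell$ reconstructions plus final checks cost $\Oapp(\ell n^{4/3})$ by the same tools. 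Using $\ell = O(sn^{1/3})$ (from $\ell m < sn$ and $m = \Omega(n^{2/3})$) and $n^{1/3} \ge 1$, every term is $\Oapp(n^{4/3}\ell^2 s)$.

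The main obstacle is precisely this precision accounting, together with the bookkeeping of \cref{app:root_finding}: the coefficients of $Q$ may have pole order at $P_\infty$ as large as $\Theta(sn)$ while we only want $f$ modulo $\phi^{m+1}$, so one must verify that a bounded window of each $\hat Q_t$-expansion always suffices and that evaluating $\hat Q$ at a partial root never loses more precision than budgeted. This is where the technical power-series results are needed, and this interplay is what introduces the extra factors $s$ and $\ell$ compared with the interpolation step. A secondary point is to check that the Roth--Ruckenstein root-count bound carries over verbatim, which it does because we only ever retain partial roots that admit a lift to a genuine root of $Q$ in $\Ring$.
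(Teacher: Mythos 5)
Your proposal follows essentially the same route as the paper: expand at a rational affine place, run Alekhnovich's divide-and-conquer refinement of Roth--Ruckenstein on the resulting element of $\PowerSeriesRing[\phi]{\Fqq}[z]$ to precision exceeding $\orderz m Q = O(sn)$, and convert the surviving candidates back to the standard basis; your cost tally also matches the paper's bound $\Oapp(\ell^2 k + \ell k^2/q^2 + \ell q^4)$ with $k=\orderz m Q$ up to a harmless over-estimate of the precision. The one concrete weakness is your choice of a \emph{general} place $(\alpha,\beta)$ with $\phi=x-\alpha$: the conversion machinery of \cref{app:power_series} that you invoke in steps (i) and (iii) is specific to the place $(0,0)$ with $\phi=x$. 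The sparsity of the expansions (at most $N/q^2$ nonzero terms of $x^iy^j$ up to precision $N$) comes from the identity $y=\sum_{b}(-1)^b x^{(q+1)q^b}$, and the back-conversion relies on the monomials of $\hat S_m$ having \emph{distinct valuations} at $(0,0)$ so that back-substitution succeeds in $O(q)$ per elimination; at a generic $(\alpha,\beta)$ the monomials $x^iy^j$ have valuation $0$ and dense expansions, so neither the uniqueness argument of \cref{lem:unique_power_precision} nor the $O(q^4)$ cost of \cref{prop:poly_from_power} transfers, and your step (iii) is not justified as written. This is repaired simply by taking $P=(0,0)$, which your setup permits, after which the argument goes through. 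Two smaller remarks: the paper certifies exact roots more cleanly via \cref{lem:root_to_order} (once $k>\orderz m Q$, any order-$k$ root $h\in\L(mP_\infty)$ gives $Q(h)\in\L(\orderz m (Q)P_\infty - k(0,0))=\{0\}$), which makes your final explicit check of $Q(f)=0$ unnecessary; and the extra $O(\ell n)$ you add to the precision for ``loss along a refinement branch'' is not needed, since the input precision in the recursion is fixed at $k$ and only decreases, though including it does not break the bound.
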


We have now described how to realise the complete Guruswami--Sudan algorithm with asymptotic complexity $\Oapp(n^{(\omega+2)/3}\ell^\omega s)$.
Note that the only step in the entire algorithm with this complexity is the module minimisation step; all other steps have lower order.
This means that the hidden constant in the big-$O$ notation for the leading term in our decoder must be \emph{exactly} that of the module minimisation employed.
In an implementation and for concrete parameters, one could of course still be concerned that the remaining, asymptotically lower-order terms, dominate the actual running time.
We demonstrate in \cref{sec:simulation} that this is unlikely since their running time in our implementation is very low.

\section{Fast Power Decoding}
\label{sec:power}

In this section we will present a decoding algorithm generalising classical syndrome decoding \cite{justesen_fast_1992} for low-rate one-point Hermitian codes, obtained by ``powering'' the key equations.
The technique, also known as ``virtual extension to an interleaved code'' was developed for Reed--Solomon codes by Schmidt et al.~\cite{schmidt_syndrome_2010}.
It has already been suggested for one-point Hermitian codes by Kampf and Li \cite{kampf_bounds_2012, kampf_decoding_2013}, but no proof of the algorithm's complexity was given.

As opposed to this previous work, we will power a Gao-style key equation in place of the classical syndrome key equation.
Apart from the joy of variety, this admits a succinct derivation which follows the definition of the codes as evaluations closely, and it highlights some similarities with Guruswami--Sudan decoding.
Another advantage is that the sent information polynomial is evident immediately, and one does not need to find the zeroes of the error locator and do erasure decoding or similar afterwards.
For Reed--Solomon codes, this variation was suggested in \cite{nielsen_power_2014} and proved to be behaviourally equivalent to the syndrome formulation.

We will show how to put the problem into a framework where fast algorithms for module minimisation can be directly applied, and this will yield a fast decoding algorithm with speed asymptotically comparable to that of Guruswami--Sudan.
As with Guruswami--Sudan, one can set the decoding algorithm's parameters to perform minimum distance decoding, and in this case we improve upon the fastest, previously known techniques.
Note that the module minimisation framework also applies to classical syndrome decoding, and is therefore the first significant speed improvement of this technique in the last 20 years, since \cite{sakata_generalized_1995}.

Power decoding is not list decoding: it either gives one answer or it will fail.
For Reed--Solomon codes, it might only fail when the number of errors has exceeded half the minimum distance, and statistically this has been verified to occur only very rarely.
There are failure probability bounds for ``powering degree'' 2 and 3, but not in the general case \cite{schmidt_syndrome_2010,zeh_unambiguous_2012,nielsen_power_2014}.
For one-point Hermitian codes, the genus of the curve play a role in the decoding radius---as usual---and we will get back to the precise decoding performance in \cref{ssec:power_radius}.
As for Reed--Solomon codes, we have not yet obtained a bound on the failure probability, but experiments indicate similar behaviour.

\subsection{Key Equations}
\label{ssec:keyeq}

Recall that $\vec r = \vec c + \vec e$ was received, and denote the set of error positions by $\Errs$.

\begin{definition}
  \label{def:power_errloc}
  The \emph{error locator} $\Lambda$ is the non-zero polynomial in $\L(-\sum_{i \in \Errs} P_i + \infty P_\infty)$ with minimal $\order$ and $\LC_\Herm(\Lambda) = 1$.
\end{definition}
Clearly, $\Lambda \in \Ring$ since the defining Riemann--Roch space is a subset of $\Ring$.
It is easy to see that the definition is well-posed, i.e.~there is exactly one element in the Riemann--Roch space satisfying the restrictions.
\begin{lemma}
  \label{lem:power_errloc}
  $|\Errs| \leq \order \Lambda \leq |\Errs| + g$.
\end{lemma}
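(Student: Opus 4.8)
The plan is to prove the two bounds separately, each by an elementary divisor-degree count, using only that $\Lambda \in \Ring$ (noted right after \cref{def:power_errloc}) and Riemann's Theorem.

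For the lower bound $|\Errs| \le \order\Lambda$: since $\Lambda \in \Ring$ its only pole is $P_\infty$, so the pole part of the principal divisor $\Div\Lambda$ is exactly $(\order\Lambda)P_\infty$, which has degree $\order\Lambda$ because $P_\infty$ is rational. On the other hand, $\Lambda$ lies in $\L(-\sum_{i\in\Errs}P_i + \infty P_\infty)$, hence $v_{P_i}(\Lambda)\ge 1$ for every $i\in\Errs$; as these are $|\Errs|$ distinct rational places, the zero part of $\Div\Lambda$ has degree at least $|\Errs|$. Since $\deg\Div\Lambda = 0$, the zero and pole parts have equal degree, giving $\order\Lambda \ge |\Errs|$.

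For the upper bound $\order\Lambda \le |\Errs| + g$: I would exhibit a non-zero competitor of small order and invoke the minimality in \cref{def:power_errloc}. Consider the divisor $D = -\sum_{i\in\Errs}P_i + (|\Errs|+g)P_\infty$, which has degree $g$. By Riemann's Theorem \cite[Theorem 1.4.17]{stichtenoth_algebraic_2009}, $\dim\L(D) \ge \deg D + 1 - g = 1$, so there is a non-zero $h \in \L(D)$. Then $h \in \L(-\sum_{i\in\Errs}P_i + \infty P_\infty)$ and $\order h = -v_{P_\infty}(h) \le |\Errs| + g$, so by minimality of $\order\Lambda$ among non-zero elements of that space we get $\order\Lambda \le \order h \le |\Errs| + g$.

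There is essentially no real obstacle; the only point requiring a little care is the identification of $\order\Lambda$ with the degree of the pole divisor of $\Lambda$, which is exactly what the hypothesis $\Lambda\in\Ring$ (equivalently, that $P_\infty$ is its only pole) provides, and which makes both the degree balancing in the lower bound and the Riemann-Roch estimate in the upper bound clean.
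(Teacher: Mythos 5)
Your proof is correct, and both bounds rest on the same ultimate fact as the paper's (Riemann's Theorem / degree counts), but the packaging differs enough to be worth noting. For the lower bound you balance the degrees of the zero and pole parts of $\Div\Lambda$, whereas the paper observes that $\L(-\sum_{i\in\Errs}P_i + tP_\infty)=\{0\}$ for $t<|\Errs|$ because the divisor has negative degree; these are the same argument in two dialects. For the upper bound the paper goes through \cref{lem:gs_monoms}: it counts monomials $x^iy^j$ of bounded $\order$ against the $|\Errs|$ homogeneous vanishing conditions to produce a non-zero element of $\order$ at most $|\Errs|+g$, while you apply Riemann's inequality $\dimL(D)\ge \deg D + 1 - g$ directly to the twisted divisor $D=-\sum_{i\in\Errs}P_i+(|\Errs|+g)P_\infty$ of degree $g$. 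Your version is more intrinsic (coordinate-free, no appeal to the standard basis of $\Ring$) and slightly shorter; the paper's version has the advantage of being constructive in the coefficients of $\Lambda$, which matches how $\Lambda$ would actually be computed and reuses a lemma already needed elsewhere. Both are complete; in particular your use of minimality from \cref{def:power_errloc} to compare $\Lambda$ with the competitor $h$ is exactly right.
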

\begin{proof}
  Being in $\L(-\sum_{i \in \Errs} P_i + \infty P_\infty)$ specifies $|\Errs|$ homogeneous equations in the coefficients of $\Lambda$, so by \cref{lem:gs_monoms}, we will still have more coefficients than equations after requiring $\order \Lambda < |\Errs| + g + 1$.
  For the lower bound, then since $\deg (-\sum_{i \in \Errs} P_i + t P_\infty) < 0$ for $t < |\Errs|$ we must have $\L(-\sum_{i \in \Errs} P_i + t P_\infty) = \{0\}$ whenever $t < |\Errs|$.
  Since $\Lambda \neq 0$ is in this Riemann--Roch space when $t = \order \Lambda$, then clearly $\order \Lambda \geq |\Errs|$.
\end{proof}

Recall now $G$ and $R$ from \cref{eqn:RandG}, and extend the latter to ``powers'':
\begin{IEEEeqnarray}{l/rCl?l}
  \label{eqn:power_R}
  R\T t: & R\T t(P_i) &=& r_i^t & \forall i=1,\ldots,n,\quad t \in \NN_0.
\end{IEEEeqnarray}
Again, the $R\T t$ can be found by solving the emerging linear systems of equations or using the explicit formula of \cref{lem:gs_interpol}.
We then immediately arrive at the powered key equations over the function field:
\begin{theorem}
  \label{thm:power_power}
  $\Lambda R\T t \equiv \Lambda f^t \mod G$ for $t \in \NN_0$ as a congruence over $\Ring$.
\end{theorem}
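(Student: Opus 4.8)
The plan is to reduce the asserted congruence to a vanishing statement at the rational places $P_1,\dots,P_n$, and then invoke \cref{lem:gs_div} to turn that vanishing into honest divisibility inside $\Ring$. Concretely, set $h = \Lambda R\T t - \Lambda f^t = \Lambda\,(R\T t - f^t)$. Since $\Lambda \in \L(-\sum_{i\in\Errs}P_i + \infty P_\infty) \subseteq \Ring$, since $f \in \L(mP_\infty) \subseteq \Ring$ (so $f^t \in \Ring$), and since $R\T t$ may be taken in $\Ring$, we have $h \in \Ring$. The claim ``$\Lambda R\T t \equiv \Lambda f^t \bmod G$ over $\Ring$'' is exactly the claim $h \in G\Ring$. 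By \cref{lem:gs_div} applied to $G$, together with the fact (\cref{prop:herm_basics}) that $\Div G = \sum_{i=1}^n P_i - nP_\infty$, we get $G\Ring = \L(-\Div G + \infty P_\infty) = \L\big(-\sum_{i=1}^n P_i + \infty P_\infty\big)$. As every element of $\Ring$ already has no pole outside $P_\infty$, membership of $h$ in this space is equivalent to $v_{P_i}(h) \geq 1$ for each $i$, i.e.\ to $h(P_i) = 0$ for $i = 1,\dots,n$.

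So the only thing left to check is that $h$ vanishes at every $P_i$, which I would do by splitting on whether $i$ is an error position. If $i \in \Errs$, then $\Lambda(P_i) = 0$ by \cref{def:power_errloc}, hence $h(P_i) = \Lambda(P_i)\,(R\T t - f^t)(P_i) = 0$. If $i \notin \Errs$, then $e_i = 0$, so $r_i = c_i = f(P_i)$; therefore $R\T t(P_i) = r_i^t = f(P_i)^t = (f^t)(P_i)$, whence $(R\T t - f^t)(P_i) = 0$ and again $h(P_i) = 0$. This covers all $i$ and all $t \in \NN_0$ (with $t=0$ being trivial, taking $R\T 0 = f^0 = 1$), so $h \in G\Ring$ and the theorem follows.

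There is no genuine obstacle here; the one point that deserves care is the passage from ``$h$ vanishes at all $P_i$'' to ``$G$ divides $h$ in the ring $\Ring = \Fqq[x,y]$'', which is not a statement about ordinary polynomial division. This is precisely what \cref{lem:gs_div} supplies: because $G$ has exactly the $P_i$ as its (simple) zeros, the principal ideal $G\Ring$ coincides with the Riemann--Roch space $\L(-\sum_i P_i + \infty P_\infty)$, so vanishing at the $P_i$ is not merely necessary but also sufficient for divisibility by $G$. With that lemma in hand the proof is a short verification of the two cases above.
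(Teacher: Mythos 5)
Your proof is correct and follows essentially the same route as the paper's: factor out $\Lambda$, show the product vanishes at every $P_i$ by splitting on error versus non-error positions, and convert that vanishing into divisibility by $G$ via \cref{lem:gs_div} and the divisor identity $\Div G = \sum_{i=1}^n P_i - nP_\infty$. The extra care you take in justifying why vanishing at the $P_i$ suffices for divisibility in $\Ring$ is exactly the point the paper's shorter proof leaves implicit.
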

\begin{proof}
  We have $\Lambda R\T t - \Lambda f^t = \Lambda(R\T t - f^t) \in \L(-\sum_{i=1}^n P_i + \infty P_\infty)$, since for $i \in \Errs$ then $\Lambda(P_i) = 0$ while for $i \notin \Errs$ then $R\T t(P_i) = f^t(P_i)$.
  Recall that $\Div G = \sum_{i=1}^n P_i - nP_\infty$; therefore by \cref{lem:gs_div} we must have $G \mid \Lambda(R\T t - f^t)$ over $\Ring$.
\end{proof}
This means that the sought $\Lambda$ is a solution to a list of key equations -- but over $\Ring$.
We will handle these non-linear equations similarly to how classical key equations are handled:
regard the right-hand side as unknowns independent of $\Lambda$ and each other, and only enforce bounds on its degree.
Then seek a minimal $\order$-element $\hat \Lambda \in \Ring$ such that $\hat\Lambda R\T t \modop G$ satisfies this degree bound for each $t$.
One then hopes that $\hat \Lambda = \Lambda$.

\cref{thm:power_power} provides us with infinitely many key equations, but when we are using the above strategy we are only aided by those for which  the degree bound on $\hat\Lambda R\T t \modop G$ is not trivially satisfied; in particular, when
$\order(\Lambda f^t) > \order G$ then the key equation for this $t$ is useless.
We do not know $\order \Lambda$ but we can at least disregard those equations for which $tm \geq \order G = n$.
Thus, in the following, we will use equations $t = 1,\ldots, \ell$ where $\ell$ is chosen such that $\ell m < n$.

As in \cref{sec:gs}, we will project the key equations over $\Ring$ into equations over $\Fqq[x]$ to be able to use module minimisation for finding the minimal $\hat \Lambda$.
We will introduce a bit more notation for this:
for two $a, b \in \Ring$, with vector forms $\vectify a, \vectify b$ we wish to represent their product $ab$ in vector form.
With $\vectify b = (b_0,\ldots, b_{q-1})$, consider the following vector--matrix product:
\begin{equation}
  \label{eqn:gs_vec_mult}
  \vectify a
  \left(\begin{array}{ccccccc}
     b_0                              & b_1 & \ldots & b_{q-1} &         &        & \multirow{2}{*}{\mbox{\Large 0}} \\
                                      & b_0 &  b_1   & \ldots  & b_{q-1} &                                           \\
    \multirow{2}{*}{\mbox{\Large 0}}  &     & \ddots &         &         & \ddots &                                  \\
     &     &        & b_0     & b_1     & \ldots & b_{q-1}
  \end{array}\right).
\end{equation}
The result will be a vector $(c_0, \ldots, c_{2q-2}) \in \Fqq[x]^{2q-1}$ such that $ab = \sum_{i=0}^{2q-2} c_i(x) y^i$.
Denote by $\matmult b$ the matrix of the above form, for any $b \in \Ring$.
Using the curve equation $\Herm$ to rewrite $\sum_{i=0}^{2q-2} c_i(x) y^i$ into having $y$-degree less than $q$ becomes the result of the linear transformation
\begin{equation}
  \label{eqn:herm_vec_red}
  \mtrx{ c_0 & c_1 & \ldots & c_{2q-2}}
  \left(\begin{array}{c}
      \raisebox{0pt}[5pt][5pt]{$I_{q \times q}$} \\\hline
      \begin{array}{@{}ccccc@{}}
         x^{q+1} & -1     &         &  \multirow{2}{*}{\mbox{\Large 0}} \\
                 & \ddots & \ddots  &    \\
         \multirow{2}{*}{\mbox{\Large 0}}       &        & x^{q+1} & -1 \\
                &        &         & x^{q+1}
      \end{array}
    \end{array}\right),
\end{equation}
where $I_{q \times q}$ is the $q \times q$ identity matrix.
Denote the matrix in the above product by $\Xi$.
With this notation then we can write
\begin{equation}
  \label{herm_vec_multred}
  \vectify{ab} = \vectify a \matmult b \Xi.
\end{equation}
\begin{corollary}
  \label{cor:power_keyeq_fx}
  $\vectify \Lambda = (\Lambda_0,\ldots,\Lambda_{q-1})$ satisfies the $q\ell$ congruences over $\Fqq[x]$:
  \[
    \sum_{i=0}^{q-1} \Lambda_i T_{i,j} \equiv B_j \mod G  , \quad j = 1,\ldots q\ell,
  \]
  where the $B_j \in \Fqq[x]$ satisfy
  \[
    q\deg B_j + (q+1)((j-1) \modop q) < \order \Lambda + m\ceil{j/q} + 1
  \]
  and where $T = [ T_{i,j} ] \in \Fqq[x]^{q \times q\ell}$ equals the matrix
  \[
    \big[ \matmult{R\T 1}\Xi \mid \matmult{R\T 2}\Xi \mid \ldots \mid  \matmult{R\T \ell}\Xi \big]
  \]
  element-wise reduced modulo $G$.
\end{corollary}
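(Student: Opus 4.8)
The plan is to obtain the $\Fqq[x]$-congruences by reading the function-field congruences of \cref{thm:power_power} off coordinate-wise in the standard basis, using the product-in-vector-form identity \eqref{herm_vec_multred}. Fix $t \in \{1,\dots,\ell\}$. By \cref{thm:power_power} there is an $h_t \in \Ring$ with $\Lambda R\T t - \Lambda f^t = G h_t$. Apply $\vectify\null$, which is just the coordinate map for the standard $\Fqq[x]$-basis $1, y, \dots, y^{q-1}$ of $\Ring$ and hence an $\Fqq[x]$-module isomorphism $\Ring \to \Fqq[x]^q$; since $G \in \Fqq[x]$, its $\Fqq[x]$-linearity gives $\vectify{\Lambda R\T t} - \vectify{\Lambda f^t} = G\,\vectify{h_t}$, so entry by entry $\big(\vectify{\Lambda R\T t}\big)_{i_0} \equiv \big(\vectify{\Lambda f^t}\big)_{i_0} \mod G$ for each $i_0 \in \{0,\dots,q-1\}$.

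Next I would rewrite the left side by \eqref{herm_vec_multred} with $a = \Lambda$, $b = R\T t$, namely $\vectify{\Lambda R\T t} = \vectify{\Lambda}\,\matmult{R\T t}\,\Xi$, so that with $\vectify{\Lambda} = (\Lambda_0,\dots,\Lambda_{q-1})$ its $i_0$-th entry is $\sum_{i=0}^{q-1}\Lambda_i\,(\matmult{R\T t}\Xi)_{i,i_0}$. Now reindex: for $j \in \{1,\dots,q\ell\}$ put $t = \ceil{j/q}$ and $i_0 = (j-1)\modop q$, so that as $j$ runs over one block of $q$ consecutive values $t$ is fixed while $i_0$ runs over $\{0,\dots,q-1\}$. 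With this identification the $j$-th column of $\big[\matmult{R\T 1}\Xi \mid \dots \mid \matmult{R\T \ell}\Xi\big]$ is the $i_0$-th column of $\matmult{R\T t}\Xi$, so $T_{i,j}$ equals $(\matmult{R\T t}\Xi)_{i,i_0}$ reduced modulo $G$; setting $B_j = \big(\vectify{\Lambda f^t}\big)_{i_0}$, the congruence above reads $\sum_{i=0}^{q-1}\Lambda_i T_{i,j} \equiv B_j \mod G$, since replacing each entry by its residue modulo $G$ — and likewise choosing any representative for $R\T t$ — does not change a congruence modulo $G$.

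Finally I would check the degree bound on $B_j$. Because $\order(\cdot) = -v_{P_\infty}(\cdot)$ is additive and $\order f \le m$ (as $f \in \L(mP_\infty)$), we have $\order(\Lambda f^t) = \order \Lambda + t\,\order f \le \order\Lambda + tm$. By \cref{def:order}, when an element of $\Ring$ is written in the standard basis its $\order$ is the maximum of $q\deg(\cdot)_i + (q+1)i$ over the nonzero coordinates; taking this at coordinate $i_0$ for $\Lambda f^t$ gives $q\deg B_j + (q+1)i_0 \le \order(\Lambda f^t) \le \order\Lambda + tm < \order\Lambda + m\ceil{j/q} + 1$, which is the claimed inequality (and is vacuous when $B_j = 0$, where $\deg B_j = -\infty$). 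I do not expect a genuinely hard step: the only care needed is the bookkeeping matching the column index $j$ to the pair consisting of the power $R\T t$ and the $y$-exponent $i_0$, together with the remark that reducing $T$ modulo $G$ and the freedom in choosing $R\T t$ are immaterial since everything is asserted only modulo $G$.
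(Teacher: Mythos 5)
Your proposal is correct and follows essentially the same route as the paper's own proof: apply $\vectify\null$ to the $\Ring$-congruence of \cref{thm:power_power}, use \eqref{herm_vec_multred} to express $\vectify{\Lambda R\T t}$ as $\vectify{\Lambda}\matmult{R\T t}\Xi$, read off the $q$ coordinate congruences for each $t$, and bound $\deg B_j$ via $\order(\Lambda f^t) \leq \order\Lambda + tm$. Your extra bookkeeping on the index correspondence $j \leftrightarrow (t,i_0)$ and the remark that reducing $T$ modulo $G$ is immaterial are just explicit versions of the paper's ``relabelling appropriately.''
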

\begin{proof}
  \cref{thm:power_power} implies for each $t$ that there is a $p_t \in \Ring$ such that $\Lambda R\T t = \Lambda f^t + p_t G$, which means
  \begin{IEEEeqnarray*}{rCl+l}
    \vectify{\Lambda R\T t} &=& \vectify{\Lambda f^t} + G\vectify{p_t}.
  \end{IEEEeqnarray*}
  Letting $\vectify{\Lambda f^t} = (B_{t,0}, \ldots, B_{t,q-1})$, then the above implies for $h=0,\ldots,q-1$ that
  \[
    \vectify \Lambda \matmult{R\T t} \Xi \equiv B_{t,h} \mod G
  \]
  as an $\Fqq[x]$-congruence.
  Furthermore, since $\order(\Lambda f^t) \leq \order \Lambda + tm$, then $q\deg B_{t, h} + h(q+1) \leq \order \Lambda + tm$.

  Taken over all $t = 1,\ldots,\ell$ and relabelling $B_{t,h}$ appropriately, then this gives the $q\ell$ congruence equations of the corollary.
\end{proof}

Note that the degree constraints on the remainders $B_j$ depend on $\order \Lambda$, i.e.~on $\max_i\{q\deg \Lambda_i + i(q+1) \}$.
The above $q\ell$ equations therefore constitute a heavily generalised form of a weighted key equation.
The form of the ``key equation'' is elsewhere often called Pad\'e approximation, and the equations of \cref{cor:power_keyeq_fx} generalise both the notion of Simultaneous Pad\'e and Hermitian Pad\'e.
This form was recently considered in \cite{nielsen_solving_2014} under the name ``asymmetric 2D Pad\'e approximation''; see also this paper for discussion on and references to other Pad\'e-like approximants.

\subsection{Solving the Key Equations}
\label{ssec:keyeq_solving}

We will here outline the method of \cite{nielsen_solving_2014} for finding a minimal solution to the equations of \cref{cor:power_keyeq_fx}.
By ``solution'' we mean any $(\hat \Lambda_0,\ldots,\hat \Lambda_{q-1}) \in \Fqq[x]^q$ such that the congruence equations of \cref{cor:power_keyeq_fx} are satisfied along with the degree bounds on the remainders $\hat B_j$.
By ``minimal'' we will seek a solution such that $\order \hat \Lambda$ is minimal.
The hope is then that $\Lambda = \hat \Lambda$; if that is not the case, we will declare a decoding failure.
In \cref{ssec:power_radius}, we discuss the likelihood of this event occurring in more detail.

Consider first any vector $(\lambda_0,\ldots,\lambda_{q-1}, b_1,\ldots,b_{q\ell}) \in \Fqq[x]^{q(\ell+1)}$ which satisfies the congruences, i.e.
\[
    \sum_{i=0}^{q-1} \lambda_i T_{i,j} \equiv b_j \mod G  , \quad j = 1,\ldots q\ell.
\]
One can quickly see that the space of all such vectors constitutes an $\Fqq[x]$-submodule of $\Fqq[x]^{q(\ell+1)}$.
Furthermore, the rows of the following square matrix $M$ is a basis of this submodule:
\begin{equation}
  \label{eqn:power_matrix}
  M = \left[
    \begin{array}{c|c}
      I_q & T \\\hline
      \vec 0        & G I_{q\ell}
    \end{array}
    \right],
\end{equation}
where $I_m$ is the $m\times m$ identity matrix.
``Solutions'' to the equations are therefore vectors in the $\Fqq[x]$ row-space of $M$ such that the $b_j$ satisfy some degree constraints which are dependent on the $\lambda_i$, and we are seeking a solution where
$\vectifyinv{\lambda_0,\ldots,\lambda_{q-1}}$ has minimal $\order$.
We will handle the latter by finding appropriate weights in the sense of \cref{ssec:mod_weights}, and encode the degree constraints of the $b_j$ as a leading position-constraint on the weighted vector.

Recall the mapping $\Wmap$ of \cref{ssec:mod_weights}.
Let $\vec \eta = (0, q+1,\ldots, (q-1)(q+1))$; then for any $\lambda \in \Ring$ clearly $\WmapO{q,\vec \eta}(\vectify \lambda) = \order \lambda$.

Let now $\mu_j = (q+1)((j-1) \modop q) - m\ceil{j/q} - 1$ for $j=1,\ldots,\ell q$, so the degree constraints for the $b_j$ can be written as
\[
  \deg b_j + \mu_j < \deg(\WmapO{q,\vec \eta}(\lambda_0,\ldots,\lambda_{q-1})).
\]
Some of the $\mu_j$ might be negative, which the method of \cref{ssec:mod_weights} cannot directly handle, so we shift all weights by $\ell m + 1$ to ensure non-negativity.
Therefore letting $\bar\eta_i = \eta_i + \ell m + 1$ and $\bar\mu_j = \mu_j + \ell m + 1$, introduce $\vec w = (\bar\eta_0, \ldots, \bar\eta_{q-1}, \bar\mu_1, \ldots, \bar\mu_{q\ell})$ to realise that the degree constraints can now be written as
\[
  \LP(\WmapH(\lambda_0,\ldots,\lambda_{q-1},b_1,\ldots,b_{q\ell})) \leq q.
\]
Therefore: a minimal solution is a minimal $\WmapH$-weighted vector in the row-space of $M$ among those vectors with leading position in $\{1,\ldots,q\}$.
By the results of \cref{ssec:mod_weights}, we then conclude:
\begin{proposition}
  A minimal solution $(\hat\Lambda_0, \ldots, \hat\Lambda_{q-1})$ to the equations of \cref{cor:power_keyeq_fx} can be found by bringing $\PmapH(M)$ to weak Popov form, and then extracting the row having minimal degree, and in case of a tie least leading position, only among those rows whose leading positions are in $\{ \pi(1), \ldots, \pi(q) \}$.

  The worst-case complexity of computing $M$ and finding the solution is as in \cref{tbl:keyeq_compl} for various choices of the module minimisation algorithm.
\end{proposition}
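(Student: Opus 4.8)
The statement splits into an essentially immediate correctness claim and a size-counting complexity estimate. For correctness I would chain the facts assembled just before the proposition: the discussion there shows that a minimal solution is exactly a vector of the $\Fqq[x]$ row space of the matrix $M$ in \eqref{eqn:power_matrix} of minimal $\WmapH$-weighted degree among those whose $\WmapH$-leading position lies in $\{1,\dots,q\}$. I would then invoke the generic recipe of \cref{ssec:mod_weights} with $\nu=q$ and $I=\{1,\dots,q\}$: by \cref{prop:wpf} a weak Popov form places in each leading position a row of minimal degree over all row-space vectors with that leading position, and by \cref{prop:weight_mappings} together with the corollary following it the rows of a weak Popov form of $\PmapH(M)$ whose leading positions lie in $\{\pi(1),\dots,\pi(q)\}$ are precisely the $\PmapH$-images of the degree-minimal vectors of the row space of $M$ with $\WmapH$-leading position at most $q$. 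Hence selecting among those rows the one of least degree, breaking ties by least leading position, and applying $\PmapH^\mo$ returns a minimal solution; this is verbatim the procedure spelled out at the end of \cref{ssec:mod_weights}, so nothing beyond citing it is required.

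For the complexity I would first pin down the three size parameters that drive \cref{tbl:wpf_complexities}. The matrix $M$ is square of dimension $\rho=q(\ell+1)$, so $\rho=O(n^{1/3}\ell)$ since $n=q^3$. Its entries are $0$ or $1$ in the two identity blocks, reduced modulo $G$ in the block $T$ and hence of degree less than $\deg G=q^2$, or equal to $G$ in the block $GI_{q\ell}$ and hence of degree $q^2$; thus $\deg M=q^2=n^{2/3}$. The weights are the shifted quantities $\bar\eta_i=\eta_i+\ell m+1$ and $\bar\mu_j=\mu_j+\ell m+1$, and since $\eta_i<q^2$, $\mu_j<q^2$ and $0\le\ell m<n$, every weight is nonnegative and at most $q^2+\ell m+1=O(n)$; therefore $\max_j(w_j/q)=O(n^{2/3})$ and $\gamma\defeq\deg M+\max_j(w_j/q)=O(n^{2/3})$, whence $\OD{\PmapH(M)}\le\rho\gamma=O(n\ell)$ by the trivial orthogonality-defect bound of \cref{ssec:mod_weights}.

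Next I would bound the cost of producing $M$ and check it is absorbed. Computing $r_i^t$ for all $i$ and all $t\le\ell$ iteratively costs $O(n\ell)$, and each $R\T t$ is then obtained from \cref{lem:gs_interpol} in $\Oapp(n)$, hence all of them in $\Oapp(n\ell)$; recomputing $R\T t$ by interpolation rather than forming $R^t$ and reducing is what keeps $\order R\T t<n+2g$ and hence its $x$-degree $O(q^2)$. The block $\matmult{R\T t}\Xi$ reduced modulo $G$ has as its rows $\vectify{y^i R\T t}\bmod G$ for $i=0,\dots,q-1$, which I would build by $q-1$ successive multiplications by $y$ using $y^q=x^{q+1}-y$ and reduction modulo $G$; each step is $O(q)$ shifts and additions of polynomials of degree $O(q^2)$, i.e.~$O(q^3)$ operations, so $O(q^4)$ per $t$ and $O(q^4\ell)=O(n^{4/3}\ell)$ in all, after which $M$ is assembled in time linear in its size. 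Since $\omega\ge2$ and $\ell\ge1$, this $n^{4/3}\ell$ is dominated by each of the three bounds below, so the overall running time is that of the module minimisation. Substituting $\rho=O(n^{1/3}\ell)$, $\deg\PmapH(M)\le\gamma=O(n^{2/3})$ and $\OD{\PmapH(M)}\le\rho\gamma=O(n\ell)$ into the three bounds of \cref{tbl:wpf_complexities}, applied through \cref{ssec:mod_weights} to the weighted problem, yields $O(n^{7/3}\ell^3)$ for Mulders--Storjohann, $O(n^{(3+\omega)/3}\ell^{\omega+1})$ for Alekhnovich, and $O(n^{(2+\omega)/3}\ell^\omega)$ for GJV or Zhou--Labahn, which are precisely the entries of \cref{tbl:keyeq_compl}.

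The only point that is not pure bookkeeping is the construction of $M$: one has to notice that $R\T t$ should be recomputed by interpolation so that its $x$-degree stays $O(q^2)$ --- matching the single nonzero entry $G$ of each bottom row, so that $\deg M=q^2$ and does not grow with $\ell$ --- and that forming the multiplication blocks $\matmult{R\T t}\Xi\bmod G$ by repeated multiplication by $y$ stays within $O(n^{4/3}\ell)$, so that this preprocessing never competes with the module minimisation whose cost is the real target; everything else is exponent arithmetic with $n=q^3$ using the generic bounds of \cref{ssec:mod_weights}.
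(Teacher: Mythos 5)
Your correctness argument and your account of constructing $M$ coincide with the paper's (the paper, like you, treats correctness as already established by the discussion preceding the proposition). The gap is in the complexity analysis. You bound the orthogonality defect by the generic $\OD{\PmapH(M)}\le\rho\gamma=O(n\ell)$, and the complexities this yields --- $\Oapp(n^{7/3}\ell^3)$ for Mulders--Storjohann and $\Oapp(n^{(3+\omega)/3}\ell^{\omega+1})$ for Alekhnovich --- are \emph{not} the entries of \cref{tbl:keyeq_compl}, which read $n^{7/3}\ell^2$ and $n^{(3+\omega)/3}\ell^\omega$; you appear to have matched your results against \cref{tbl:gs_compl} instead. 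Only the GJV/Zhou--Labahn entry survives your argument, because it depends on $\deg\PmapH(M)$ rather than on the orthogonality defect.

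To obtain the stated entries one needs the sharper bound $\OD{\PmapH(M)}=O(n)$, with no factor of $\ell$, and this is exactly where the special structure of $M$ must be used. The column permutation inside $\PmapH$ does not change the orthogonality defect, and $M$ is block upper triangular with diagonal $(1,\ldots,1,G,\ldots,G)$, so $\deg\det\PmapH(M)$ is the sum of the (weighted) degrees of the diagonal entries. Each of the bottom $q\ell$ rows has degree exactly equal to that of its diagonal entry $G$ plus its weight, so its contributions to $\rowdeg\PmapH(M)$ and to $\deg\det\PmapH(M)$ cancel; only the first $q$ rows contribute to the defect, giving
\[
  \OD{\PmapH(M)} \;\le\; q\deg T + \textstyle\sum_{i=1}^{q}\floor{w_i/q} \;<\; q\deg G + q^2 + \ell m + O(q^2) \;=\; O(n),
\]
since $\ell m<n=q^3$. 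Substituting $O(n)$ in place of your $O(n\ell)$ produces the table's $\ell^2$ and $\ell^\omega$ exponents. Finally, you omit the Demand--Driven entry $n^{7/3}\ell$ altogether; the paper obtains it by citing \cite{nielsen_solving_2014}, whose algorithm applies specifically to matrices arising from such Pad\'e-type approximation problems.
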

\begin{proof}
  Only the claim on complexity needs to be discussed further.
  For constructing $M$ we need to compute the sub-matrix $T$, i.e.~for every $t=1,\ldots,\ell$, we need to compute $\matmult{R\T t}\Xi$.
  Computing $R\T 1,\ldots R\T \ell$ requires $\Oapp(\ell n)$ by \cref{lem:gs_interpol}.
  Due to the structure of $\matmult{R\T t}$ and $\Xi$, each element of the matrix product $\matmult{R\T t}\Xi$ requires at most 3 shifts and additions of the elements of $\vectify{R\T t}$, possibly followed by a modulo reduction by $G$, for a total of $\Oapp(n^{4/3})$ operations over $\Fqq$.
  Thus $M$ can be constructed in any of the complexities stated in \cref{tbl:keyeq_compl}.

  For module minimising $\PmapH(M)$, we should estimate $\gamma = \deg M + \max \vec w/q$ as well as $\OD{\PmapH(M)}$.
  For $\gamma$, we have $\deg M = \deg G = n^{2/3}$, while $\max \vec w \leq (q+1)(q-1) + \ell m$.
  As remarked after \cref{thm:power_power}, we can assume $\ell m < n$, and so $\gamma \in O(n^{2/3})$.

  For $\OD{\PmapH(M)} = \rowdeg \PmapH(M) - \deg\deg \PmapH(M)$, we can do better than the generic bound $(\ell+1)q \gamma$: clearly the column permutation performed by $\PmapH$ will not affect the orthogonality defect, and so we should compute the orthogonality defect of $M \diag(x^{w_1},\ldots,x^{w_{q(\ell+1)}})$, where the $w_j$ are the elements of $\vec w$.
  But $M$ is upper triangular, so the determinant is simply the product of the diagonal.
  In the orthogonality defect, therefore only the contribution of the first $q$ rows in the row-degree survives, yielding
  \[
    \OD{\PmapH(M)} \leq q\deg T + \sum_{i=1}^q w_i < q\deg G + q^3 = 2n.
  \]
  Now the entries of \cref{tbl:keyeq_compl} follow from those of \cref{tbl:wpf_complexities}, except that a new entry has been added: the Demand--Driven algorithm from \cite{nielsen_solving_2014} for solving ``asymmetric 2D Pad\'e approximations''.
  This algorithm is derived from the Mulders--Storjohann algorithm, but only applies to matrices coming from such a 2D Pad\'e approximation.
\end{proof}

\begin{table}
  \centering
  \hspace*{-0.2cm}
  \def\Dist{\hspace*{0.8em}}
  \begin{tabular}{@{\Dist}l@{\Dist}l}
    \multicolumn{2}{c}{Complexity for solving the key equations} \\
    \toprule
      Algorithm                                                                    & Field operations in big-$\Oapp$ \\ \midrule
    Mulders--Storjohann \cite{mulders_lattice_2003}                               & $n^{7/3}\ell^2$ \\
    Demand--Driven \cite{nielsen_solving_2014} & $n^{7/3}\ell$ \\
    Alekhnovich \cite{alekhnovich_linear_2005}                                    & $n^{(3+\omega)/3}\ell^\omega$       \\
    GJV \cite{giorgi_complexity_2003} or Zhou--Labahn \cite{zhou_computing_2012}  &  $n^{(2+\omega)/3}\ell^\omega$     \\
    \bottomrule
  \end{tabular}
  \caption
  {Use of $\Oapp$ and $\omega$ as in \cref{tbl:wpf_complexities}.}
  \label{tbl:keyeq_compl}
\end{table}

\subsection{After Having Solved the Key Equation}
\label{ssec:power_after}

We will briefly outline how one can finish decoding once a minimal solution $(\hat \Lambda_0, \ldots, \hat \Lambda_{q-1}, \hat B_1, \ldots, \hat B_{q\ell}) \in \Fqq[x]^{q(\ell+1)}$ to the equations of \cref{cor:power_keyeq_fx} has been found.

Firstly, we apply $\vectifyinv\null$ block-wise to obtain $\ell+1$ elements of $\Ring$: $\hat \Lambda, \hat B_1,\ldots,\hat B_\ell$.
Since the $\Fqq[x]$-vector was found in the row-space of $M$, we know by construction that $\hat \Lambda R\T t \equiv \hat B_t \mod G$ as a congruence over $\Ring$ for $t = 1,\ldots, \ell$.
Therefore, if it is the case that $\Lambda = \hat \Lambda$, then we know by \cref{thm:power_power} that $\hat B_t \equiv \Lambda f^t \mod G$ as a congruence over $\Ring$ for any $t$.
For $t=1$, this congruence can be lifted to equivalence whenever
\begin{IEEEeqnarray*}{rCl+c+rCl}
  \order(\Lambda f) &<& \order(G),  &\textrm{i.e.}&
  |\Errs| &<& n - m - g,
\end{IEEEeqnarray*}
using \cref{lem:power_errloc}.
In that case, we simply need to carry out the division $B_1/\Lambda$ to obtain $f$: we do this by representing the $\Ring$ elements as truncated power series in a local parameter at the place $(0,0)$.
Conversion to and from such power series are discussed in detail in \cref{app:power_series}.
We choose $\phi = x$ as the local parameter, and we can convert $B_1$ and $\Lambda$ into power series in $\phi$ of precision $2q^3$ in time $O(q^4)$ by \vref{prop:power_from_poly}.
Let $\Lambda' = \phi^{-\delta} \Lambda$ and $B_1' = \phi^{-\delta} B_1$ where $\delta$ is the greatest power of $\phi$ that divides $\Lambda$; clearly this will also divide $B_1$ if the correct solution has been found.
Since $0 \neq \Lambda \in \L(|\Errs|P_\infty - \delta(0,0))$ then $\delta < |\Errs|$ which means we obtain the power series of $\Lambda'$ and $B_1'$ to at least precision $q^3$.
Using the extended Euclidean algorithm we can calculate $\Lambda'^\mo \modop x^{q^3}$ in time $\Oapp(q^3)$, and from here $B_1'\Lambda'^{\mo} \equiv f \mod x^{q^3}$ can be calculated in a further $\Oapp(q^3)$ computations.
Finally, converting this truncated power series of $f$ into the standard basis can be done in $O(q^4)$ according to \vref{prop:poly_from_power}.

If we are attempting to decode beyond $n-m-g$, e.g.~for extremely low-rate one-point Hermitian codes (see \cref{prop:power_upper}), then it seems that there is no easy way to obtain $f$ from $\Lambda$ and $(\Lambda f \modop G)$.
An alternative is to find all roots of $\Lambda$ and erase those positions from $\vec r$, and then perform erasure decoding.
We are unaware of a method for doing this in sub-quadratic time, however.

\begin{remark}
  Note that as with the Guruswami--Sudan decoder, we now have a complete decoder which runs in $\Oapp(n^{(2+\omega)/3}\ell^\omega)$, and that the only step of the algorithm with this dominating complexity is module minimisation.
  Thus, again the hidden constant is \emph{exactly} that of the module minimisation algorithm.
  We demonstrate in \cref{sec:simulation} that also in practice the other steps are quite cheap to compute.
\end{remark}

\subsection{Decoding Performance}
\label{ssec:power_radius}

Power decoding is a probabilistic decoding algorithm in the sense that with non-zero probability it might fail for a given received word $\vec r$, i.e.~produce no output.
Indeed, since it can decode beyond half the minimum distance but can return only up to one codeword, this is unavoidable.
However, by simulation it can be observed that the algorithm almost always works up to a very specific bound: this bound is what one could deem ``the decoding radius'' of Power decoding the given code.

This overall behaviour is shared by Power decoding of Reed--Solomon codes \cite{schmidt_syndrome_2010}, but the details turn out to be more involved for one-point Hermitian codes.
We will in this section characterise this behaviour as well as derive the aforementioned bound.
We will repeatedly refer to various events as ``likely'' or ``unlikely'': these are based on statistical observations as well as intuition, but unfortunately we have yet no bounds for most of these probabilities.
It is important future work, but judging from the simpler case of Reed--Solomon codes, where theoretical results have been obtained only for $\ell = 2, 3$ \cite{schmidt_syndrome_2010,zeh_unambiguous_2012,nielsen_power_2014}, it is also rather difficult to obtain such bounds.

For this section we will assume that the sent codeword $\vec c$ is uniquely the closest codeword to $\vec r$; indeed, if there is a different codeword closer or as close to $\vec r$, then it is not surprising that Power decoding with high probability fails or decodes erroneously.
The following result states that when few errors occur, we are guaranteed to succeed:
\begin{proposition}
  \label{prop:power_radius_low}
  The vector $(\Lambda_0,\ldots,\Lambda_{q-1})$ is  a minimal solution to the equations of \cref{cor:power_keyeq_fx} whenever $|\Errs| \leq \frac {\dG-1} 2 - \frac g 2$.
\end{proposition}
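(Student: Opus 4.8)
The plan is to show that the genuine error locator $\Lambda$, packaged as the vector $(\Lambda_0,\dots,\Lambda_{q-1})$ together with the remainders $B_t = (\Lambda f^t \bmod G)$, is not just \emph{a} solution but one of minimal $\order$ — equivalently, that $\PmapH$ maps it to a vector whose $\LP$ lands in $\{\pi(1),\dots,\pi(q)\}$ and which is no larger than any competing row in a weak Popov form. Since \cref{thm:power_power} and \cref{cor:power_keyeq_fx} already guarantee that this vector lies in the row space of $M$ and satisfies the degree constraints, the only thing left is \emph{minimality}, and for that I would argue by contradiction against a hypothetical smaller solution.

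So suppose $(\hat\Lambda_0,\dots,\hat\Lambda_{q-1},\hat B_1,\dots,\hat B_{q\ell})$ is a solution with $\order\hat\Lambda < \order\Lambda$, where $\hat\Lambda = \vectifyinv{\hat\Lambda_0,\dots,\hat\Lambda_{q-1}}$. The first step is to form the $\Ring$-element $N = \Lambda\hat B_t - \hat\Lambda B_t$ for, say, $t=1$; since both $\hat\Lambda R\T t \equiv \hat B_t$ and $\Lambda R\T t \equiv B_t \pmod G$ over $\Ring$, we get $G \mid N$. The second step is to bound $\order N$ from above using the degree constraints: $\order(\Lambda\hat B_t) \le \order\Lambda + (\order\hat\Lambda + tm)$ (roughly, from the $\hat B_t$ degree bound, which is controlled by $\order\hat\Lambda$) and symmetrically $\order(\hat\Lambda B_t) \le \order\hat\Lambda + \order\Lambda + tm$. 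For $t=1$ this gives $\order N < \order\Lambda + \order\hat\Lambda + m < n = \order G$ once we feed in $\order\hat\Lambda < \order\Lambda \le |\Errs|+g$ (via \cref{lem:power_errloc}) and the hypothesis $|\Errs| \le \tfrac{\dG-1}{2} - \tfrac g2$, using $\dG = n-m$. Combining $G\mid N$ with $\order N < \order G$ forces $N = 0$, i.e.\ $\Lambda\hat B_1 = \hat\Lambda B_1$, hence $\hat B_1/\hat\Lambda = B_1/\Lambda = f$ as elements of the function field, so $\hat B_t = \hat\Lambda f^t$ and $\hat\Lambda(R\T t - f^t) \equiv 0 \pmod G$ for all $t$. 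The third step is to conclude that $\hat\Lambda$ vanishes on all error positions: from $\hat\Lambda(R\T 1 - f) \equiv 0 \pmod G$ and $\Div G = \sum_i P_i - nP_\infty$, at every $i \in \Errs$ we have $R\T1(P_i) - f(P_i) = r_i - f(P_i) = e_i \neq 0$, so $\hat\Lambda(P_i) = 0$. Thus $0 \neq \hat\Lambda \in \L(-\sum_{i\in\Errs}P_i + \infty P_\infty)$, and by the minimality in \cref{def:power_errloc} of $\Lambda$ in that space we get $\order\Lambda \le \order\hat\Lambda$, contradicting $\order\hat\Lambda < \order\Lambda$.

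The one point needing a little care — and the part I'd expect to be the real obstacle — is the bookkeeping in the second step: the degree constraint on $\hat B_t$ in \cref{cor:power_keyeq_fx} is stated componentwise with the $(q+1)((j-1)\bmod q)$ shifts and the $\ceil{j/q}$ multiplier, and it is phrased in terms of $\order\hat\Lambda$, so one must translate it faithfully back into a clean bound $\order(\hat\Lambda f^t) \le \order\hat\Lambda + tm$ on the $\Ring$-level (which is exactly how it arose in the proof of the corollary, so this is reversible), and then verify that $t=1$ already yields the sharpest usable inequality. After that, the arithmetic $\order\Lambda + \order\hat\Lambda + m \le 2(|\Errs|+g) + m < (n-m-g) + (|\Errs|+g) + m = n + |\Errs| - g + \dots$ — I'd check the exact slack — collapses to $< n$ precisely under $|\Errs| \le \tfrac{\dG-1}{2} - \tfrac g2$. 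Everything else is the standard "divisibility plus degree bound forces zero" trick already used in \cref{thm:power_power} and the uniqueness argument for $\Lambda$, so no new machinery is required.
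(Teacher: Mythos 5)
Your overall architecture (exhibit $\Lambda$ as a solution, then rule out any strictly smaller one) matches the paper's, but the degree bookkeeping in your second step does not close under the stated hypothesis, and this is precisely the point you flagged as needing care. Cross-multiplying the two congruences gives $G \mid N$ with $N = \Lambda\hat B_1 - \hat\Lambda B_1$ and $\order N \leq \order\Lambda + \order\hat\Lambda + m$. Bounding \emph{both} error locators via \cref{lem:power_errloc} costs $2g$: you need $2(|\Errs|+g)+m < n$, i.e.\ $|\Errs| < \tfrac{\dG}{2} - g$, whereas the hypothesis only supplies $2|\Errs| + g + m < n$, i.e.\ $|\Errs| \leq \tfrac{\dG-1}{2}-\tfrac g2$. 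For $g\geq 2$ (any Hermitian curve with $q\geq 3$) these differ, and your own ``check the exact slack'' computation indeed lands at a bound of roughly $n+|\Errs|$ rather than something $<n$. This is not a patchable slip: it is the intrinsic loss of the symmetric, Pad\'e-style uniqueness argument, which uses nothing about either solution beyond its degree bounds.

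The paper closes the gap by arguing asymmetrically, exploiting the error structure of the received word. Writing $R\T{1} = f + \hat e$ with $\hat e(P_i) = e_i$, so that $\hat e \in \L(-\sum_{i\notin\Errs}P_i + \infty P_\infty)$, one gets
\[
\hat\Lambda f - \hat B_1 \;=\; \bigl(\hat\Lambda R\T{1} - \hat B_1\bigr) - \hat\Lambda\hat e \;\in\; \L\Bigl(-\sum_{i\notin\Errs}P_i + (\order\hat\Lambda + m)P_\infty\Bigr),
\]
since the first summand lies in $\L(-\sum_{i=1}^n P_i + \infty P_\infty)$ by $G$-divisibility. This divisor has negative degree as soon as $n - |\Errs| > \order\hat\Lambda + m$, and here the quantity $|\Errs|+g$ enters only \emph{once} (through $\order\hat\Lambda \leq \order\Lambda \leq |\Errs|+g$), which is exactly what the hypothesis $|\Errs| < \tfrac{\dG-g}{2}$ affords. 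From $\hat\Lambda f = \hat B_1$ one then concludes $G \mid \hat\Lambda(R\T{1}-f)$, hence $\hat\Lambda$ vanishes on $\Errs$, and minimality of $\Lambda$ in $\L(-\sum_{i\in\Errs}P_i + \infty P_\infty)$ finishes the proof. Your first and third steps are sound; only the passage from ``two solutions'' to ``equal ratios'' must be replaced by this one-sided argument.
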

\begin{proof}
  Let $(\lambda_0,\ldots,\lambda_{q-1}, \psi_0, \ldots, \psi_{q-1})$ be a minimal solution to the equations for $\ell=1$ while $|\Errs| \leq \frac {n-m} 2 - \frac g 2$, and we will show that $\lambda_i = \gamma\Lambda_i$ for some $\gamma \in \Fqq^*$.
  Since for $\ell > 1$ we impose further restrictions on the solution set, the analogous statement must then be true.
  Let $\lambda = \vectifyinv{\lambda_0, \ldots, \lambda_{q-1}}$ and $\psi = \vectifyinv{\psi_0,\ldots,\psi_{q-1}}$.
  By how the congruence equations and weights for \cref{cor:power_keyeq_fx} were derived, we immediately conclude
  \begin{IEEEeqnarray}{rCl}
    \label{eqn:power_1_congruence}
    \lambda R\T 1 \equiv \psi \mod G
  \end{IEEEeqnarray}
  and $\order \lambda + m + 1 > \order \psi$.
  Thus, $G \mid (\lambda R\T 1 - \psi)$ so by \cref{lem:gs_div} then $\lambda R\T 1 - \psi \in \L(-\sum_{i=1}^n P_i + \infty P_\infty)$.
  Introduce $\hat e = R\T 1 - f \in \Ring$ so $\hat e(P_i) = e_i$ for $i=1,\ldots,n$.
  Clearly $\hat e \in \L(-\sum_{i \notin \Errs} P_i + \infty P_\infty)$, which means
  \begin{equation*}
    \textstyle
    \lambda f - \psi = (\lambda R\T 1 - \psi) - \lambda \hat e \in \L(-\sum_{i \notin \Errs} P_i + h P_\infty),
  \end{equation*}
  where $h$ is an upper bound on $\order( \lambda f - \psi)$: we can choose $h = \order \lambda + m$.
  Now we simply want to show that if $\order \lambda \leq \order \Lambda$ then this Riemann--Roch space is $\{ 0 \}$; for in that case $\lambda f = \psi$, so by \cref{eqn:power_1_congruence} then $G \mid \lambda(R\T 1-f)$, which means $\lambda \in \L(-\sum_{i \in \Errs} P_i + \infty P_\infty)$; but $\Lambda$ has minimal $\order$ of non-zero elements in this Riemann--Roch space, and so $\Lambda = \gamma \lambda$ for some $\gamma \in \Fqq$.

  We have $\L(-\sum_{i \notin \Errs} P_i + h P_\infty) = \{ 0 \}$ at least when the defining divisor has negative degree, and since all $P_i$ and $P_\infty$ are rational, this happens when $n-|\Errs| > h = \order \lambda + m$.
  Now $\order \lambda \leq \order \Lambda \leq |\Errs| + g$ by \cref{lem:power_errloc}.
  Therefore, the divisor is negative at least when
  \begin{IEEEeqnarray*}{+rCl+c+rClCl+x*}
    n-|\Errs| &>& |\Errs| + g + m   & \iff &
    |\Errs| &<& \frac {n - m - g} 2 &=& \frac {\dG - g} 2.
    &
    \ifArxiv\else\IEEEQEDhere\fi
  \end{IEEEeqnarray*}
\end{proof}

We have the following result for when Power decoding does not fail:
\begin{proposition}
  \label{prop:power_closest}
  If Power decoding returns an information polynomial corresponding to the codeword $\hat{\vec c}$, and $\vec c$ is the closest codeword to $\vec r$, then
  \[
    0 \leq \weight{\hat{\vec c} - \vec r} - \weight{\vec c - \vec r} \leq g.
  \]
\end{proposition}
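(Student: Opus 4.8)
The plan is to prove the two inequalities separately; the left one is essentially free while the right one carries the content. For the lower bound, observe first that the returned codeword lies in $\Code$: from $\hat\Lambda \hat f = \hat B_1$ over $\Ring$ and the degree bound on $\hat B_1$ in \cref{cor:power_keyeq_fx} one obtains $\order \hat f = \order \hat B_1 - \order \hat\Lambda \le m$, so $\hat f \in \L(m P_\infty)$. Since $\vec c$ is by hypothesis the unique closest codeword to $\vec r$, no element of $\Code$ — in particular not $\hat{\vec c}$ — can be strictly nearer, which is exactly $\weight{\hat{\vec c} - \vec r} \ge \weight{\vec c - \vec r}$.

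For the upper bound I would first translate ``Power decoding returns an information polynomial'' into algebra. The minimal solution found has $\Ring$-components $\hat\Lambda, \hat B_1, \dots, \hat B_\ell$; membership in the row space of the matrix $M$ of \cref{eqn:power_matrix} gives $\hat\Lambda R\T 1 \equiv \hat B_1 \mod G$ over $\Ring$, and a successful division step gives $\hat\Lambda \hat f = \hat B_1$ with $\hat\Lambda \ne 0$ and $\hat f$ evaluating to $\hat{\vec c}$. Combining these, $G \mid \hat\Lambda (R\T 1 - \hat f)$ over $\Ring$, so by \cref{lem:gs_div} and $\Div G = \sum_{i=1}^n P_i - n P_\infty$ the function $\hat\Lambda (R\T 1 - \hat f)$ lies in $\L(-\sum_i P_i + \infty P_\infty)$, i.e.\ vanishes at every $P_i$. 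All three functions are regular at the affine places, so evaluating at $P_i$ and using $R\T 1(P_i) = r_i$, $\hat f(P_i) = \hat c_i$ gives $\hat\Lambda(P_i)(r_i - \hat c_i) = 0$ for each $i$; hence every error position of $\hat{\vec c}$ relative to $\vec r$ is a zero of $\hat\Lambda$ among $P_1, \dots, P_n$, and $\weight{\hat{\vec c} - \vec r} \le \#\{\, i : \hat\Lambda(P_i) = 0 \,\}$. I would then bound the number of such zeros: since $\hat\Lambda \in \Ring$ is non-zero its only pole is $P_\infty$, of order $\order \hat\Lambda$, so its divisor of zeros has degree $\order \hat\Lambda$, and as the $P_i$ are distinct rational places this forces $\#\{\, i : \hat\Lambda(P_i) = 0 \,\} \le \order \hat\Lambda$. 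Finally, by \cref{thm:power_power} and \cref{cor:power_keyeq_fx} the true error locator $\Lambda$, together with the remainders $\Lambda f^t \bmod G$, is itself a feasible point of the module-minimisation problem solved in \cref{ssec:keyeq_solving}, with objective value $\order \Lambda$; minimality of $\hat\Lambda$ then gives $\order \hat\Lambda \le \order \Lambda$, while \cref{lem:power_errloc} gives $\order \Lambda \le |\Errs| + g = \weight{\vec c - \vec r} + g$. Chaining the three inequalities yields $\weight{\hat{\vec c} - \vec r} \le \order \hat\Lambda \le \order \Lambda \le \weight{\vec c - \vec r} + g$.

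The step that needs genuine care is $\order \hat\Lambda \le \order \Lambda$: one must verify that $(\Lambda_0, \dots, \Lambda_{q-1})$ with its remainders really satisfies the leading-position constraint defining the solution set in \cref{ssec:keyeq_solving}, which comes down to checking that the remainder-degree bounds of \cref{cor:power_keyeq_fx} are strong enough — this uses the factor $q$ multiplying $\deg B_j$ there and the non-negativity of polynomial degrees. Everything else is a routine dictionary between vanishing of functions in $\Ring$ and agreement of codeword coordinates, together with the standard fact that a non-zero function has no more zeros than its pole order.
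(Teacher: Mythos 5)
Your proposal is correct and follows essentially the same route as the paper: the lower bound from $\vec c$ being the closest codeword, and the upper bound from the chain $\weight{\hat{\vec c}-\vec r} \leq \order \hat\Lambda \leq \order \Lambda \leq |\Errs| + g$, using minimality of the found solution and \cref{lem:power_errloc}. You simply spell out in more detail the steps (feasibility of $\Lambda$, and that $\hat\Lambda$ vanishes on the error positions of $\hat{\vec c}$ with at most $\order\hat\Lambda$ zeros) that the paper's terser proof leaves implicit.
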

\begin{proof}
  The found solution to the equations of \cref{cor:power_keyeq_fx} is minimal, which means that the corresponding error-locator $\hat \Lambda$ has minimal $\order$ amongst all solutions; in particular $\order \hat \Lambda \leq \order \Lambda$.
  Let $\hat \Errs$ be the error positions corresponding to $\hat{\vec c}$.
  Combining the above with \cref{lem:power_errloc} we get:
  \[
    |\Errs| \leq |\hat \Errs| \leq \order \hat \Lambda \leq \order \Lambda \leq |\Errs|+g
  \]
  and the proposition follows.
\end{proof}
Ideally, we would have hoped that when Power decoding returns a codeword, this is always the closest.
Indeed, that is true for Power decoding of Reed--Solomon codes.
The above states that for one-point Hermitian codes in general, a codeword slightly farther away can actually have the smaller error locator, which will then be found instead.
However, simulations indicate that for random error patterns, the error locator most likely has the maximal order $|\Errs| + g$; most likely, the error locator for either codeword will satisfy this, and so the closest codeword will again have the lowest-order error locator.
The probability of the errors lying such that $\order \Lambda < |\Errs| + g$ was shown to be $1/q$ asymptotically \cite{jensen_performance_1999, hansen_dependent_2001}.

Finally, we will discuss how many errors we should expect Power decoding to be able to cope with.
Recall $M$ of \vref{eqn:power_matrix} whose row space contains all $\Fqq[x]$-vectors satisfying the congruence equations of \cref{cor:power_keyeq_fx}.
The following result puts an upper bound on the $\order$ of the $\lambda$-part of any vector in the row space of $M$:
\begin{proposition}
  \label{prop:power_upper}
  Let $\vec s = \Wmap(\lambda_0,\ldots,\lambda_{q-1},b_1,\ldots,b_{q\ell})$ be the minimal degree vector in the row space of $\Wmap(M)$.
  Then
  \[
    \order\big(\vectifyinv{\lambda_0,\ldots,\lambda_{q-1}}\big) \leq
    \tauPow(\ell) \defeq \tfrac{\ell}{\ell+1} n - \half \ell m - \tfrac{\ell}{\ell+1}.
  \]
\end{proposition}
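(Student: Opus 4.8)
The plan is to realise the module $\Mod M$ of \cref{cor:power_keyeq_fx} as a free module of rank $\ell+1$ over $\Ring$ — concretely, the row space of the $(\ell+1)\times(\ell+1)$ matrix $M_\Ring$ whose first row is $(1,\,R\T1\bmod G,\,\ldots,\,R\T\ell\bmod G)$ and whose remaining diagonal entries are all $G$ — and to carry out a weighted module-minimisation over $\Ring$ rather than over $\Fqq[x]$. Equip $\Ring^{\ell+1}$ with $\deg'(v_0,\ldots,v_\ell)=\max\bigl(\order v_0,\ \max_{1\le t\le\ell}(\order v_t-tm-1)\bigr)$. A short unwinding of the weights $\vec w$ shows that for $\vec v=(\Lambda,B_1,\ldots,B_\ell)\in\Mod M$ the vector $\Wmap(\vectifyz{\vec v})\in\Fqq[x]^{q(\ell+1)}$ has degree $\deg'(\vec v)+\ell m+1$, that its first block of $q$ coordinates decodes back to $\Lambda$, and that $\order\Lambda\le\deg'(\vec v)$. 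Since the minimal-degree vector $\vec s$ is attained inside $\Wmap(\vectifyz{\Mod M})$, it is $\Wmap(\vectifyz{\vec v})$ for a nonzero $\vec v\in\Mod M$ of least $\deg'$, so the proposition reduces to: the least $\deg'$ of a nonzero element of $\Mod M$ is at most $\tauPow(\ell)$.

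Next I would invoke the $\Ring$-analogue of the weak-Popov theory of \cref{ssec:modules}: $\Mod M$ has a $\deg'$-reduced $\Ring$-basis $\vec u_0,\ldots,\vec u_\ell$, meaning $\deg'(\sum_j c_j\vec u_j)=\max_j(\order c_j+e_j)$ for all $c_j\in\Ring$ with $e_j:=\deg'(\vec u_j)$; then the least $\deg'$ of a nonzero element of $\Mod M$ is $\min_j e_j$, exactly as in \cref{prop:wpf}. The role of the determinant identity $\sum_i\deg\vec u_i=\deg\det$ is played by
\[
  \sum_j e_j=\dim_{\Fqq}\!\bigl(\Ring^{\ell+1}/\Mod M\bigr)-\sum_{t=1}^{\ell}(tm+1),
\]
which I would prove by comparing $\Fqq$-dimensions of $\deg'$-bounded subspaces at a large threshold $d$: by Riemann--Roch, $\{\vec v\in\Ring^{\ell+1}:\deg'(\vec v)\le d\}=\L(dP_\infty)\times\prod_t\L((d+tm+1)P_\infty)$ has dimension $(\ell+1)(d-g+1)+\sum_t(tm+1)$, whereas $\{\vec v\in\Mod M:\deg'(\vec v)\le d\}=\bigoplus_j\L((d-e_j)P_\infty)\vec u_j$ has dimension $(\ell+1)(d-g+1)-\sum_j e_j$, and for $d$ large their difference is $\dim_{\Fqq}(\Ring^{\ell+1}/\Mod M)$.

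It then remains to note $\dim_{\Fqq}(\Ring^{\ell+1}/\Mod M)=\ell n$: since $M_\Ring$ is upper triangular with diagonal $(1,G,\ldots,G)$, the quotient has the $\Fqq$-dimension of $(\Ring/G\Ring)^{\oplus\ell}$, and $\dim_{\Fqq}\Ring/G\Ring=n$ because $\Div G=\sum_{i=1}^{n}P_i-nP_\infty$ makes evaluation at $P_1,\ldots,P_n$ induce an isomorphism $\Ring/G\Ring\cong\Fqq^{n}$ (using \cref{lem:gs_div} for $G\Ring=\L(-\sum_{i=1}^n P_i+\infty P_\infty)$ and \cref{lem:gs_monoms} for surjectivity). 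Substituting gives $\sum_j e_j=\ell n-m\tfrac{\ell(\ell+1)}{2}-\ell$, hence
\[
  \min_j e_j\ \le\ \tfrac{1}{\ell+1}\sum_j e_j\ =\ \tfrac{\ell n}{\ell+1}-\tfrac{m\ell}{2}-\tfrac{\ell}{\ell+1}\ =\ \tauPow(\ell),
\]
as claimed.

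The hard part will be making the $\Ring$-level module-minimisation of the middle step rigorous — the existence of a $\deg'$-reduced basis and the basis-invariance of $\sum_j e_j$ over the Dedekind domain $\Ring$ (equivalently, for lattices over the curve-with-point $(\Herm,P_\infty)$) — since \cref{ssec:modules} develops this only over the PID $\Fqq[x]$. Working at this level is essential: it is precisely here that the genus contributions of the ambient module and of the lattice cancel. The more naive route — averaging over the $q(\ell+1)$ rows of an $\Fqq[x]$-weak-Popov form of $\Wmap(M)$ — instead pays explicitly for the $y$-grading pattern $0,q+1,\ldots,(q-1)(q+1)$ carried by every block of $\vec w$ and overshoots $\tauPow(\ell)$ by a term of size $\sim g$. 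Finally one should record the degenerate case in which the $\lambda$-part of $\vec s$ vanishes, where the bound is trivial, and observe that the coordinate weights $tm+1$ are exactly what encode the degree bounds of \cref{cor:power_keyeq_fx}.
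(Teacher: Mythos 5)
Your route is genuinely different from the paper's, and its arithmetic is sound: granting a $\deg'$-reduced $\Ring$-basis $\vec u_0,\ldots,\vec u_\ell$ of $\Mod M$ with $\sum_j e_j=\ell n-\sum_{t=1}^{\ell}(tm+1)$, the average of the $e_j$ is exactly $\tauPow(\ell)$, and your translation between $\deg'$ and the $\WmapH$-weighted degree (the shift by $\ell m+1$, the identification of the $\lambda$-block, $\order\lambda\le\deg'$) is correct, as is the computation $\dim_{\Fqq}(\Ring^{\ell+1}/\Mod M)=\ell n$. But the step you yourself flag as ``the hard part'' is a genuine gap, and it is the entire mathematical content of the bound. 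Existence of a reduced basis with only $\ell+1$ elements does not follow from \cref{ssec:modules}: descending to $\Fqq[x]$ makes $\Mod M$ a module of rank $q(\ell+1)$, and a weak Popov basis of that module need not be $y$-stable, i.e.~need not decompose into $\ell+1$ orbits $\{y^j\vec u_t\}_{j=0}^{q-1}$. Your claim is equivalent to splitting the rank-$(\ell+1)$ lattice over the genus-$g$ curve (glue $\Mod M$ to the $\deg'$-unit ball at $P_\infty$) into rank-one summands; this is Grothendieck's theorem in the genus-$0$ case underlying the $\Fqq[x]$ theory, but it is not automatic here --- the associated graded of $\Ring$ under $\order$ is the non-normal ring $\Fqq[X,Y]/(Y^q-X^{q+1})$, over which freeness of the graded module is exactly what must be proved. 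Without it, your Riemann--Roch count only yields $\dim\{\vec v\in\Mod M : \deg'\vec v\le d\}\ \ge\ (\ell+1)(d+1-g)+\sum_t(tm+1)-\ell n$, which becomes positive only at $d=\tauPow(\ell)+g$.

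It is also worth knowing that the paper's own proof takes precisely the $\Fqq[x]$ route you dismiss: it writes $\rowdeg \Wmap(M')=\deg\det\Wmap(M)$ for a weak Popov form and refines the naive average by noting that each residue class modulo $q$ of row degrees is occupied by exactly $\ell+1$ rows, recovering an extra $(\ell+1)\binom{q}{2}=(\ell+1)g$ from the determinant. Your accounting of the $y$-grading is the right lens for comparing the two: within each $y$-orbit the reduced-basis picture forces degree offsets $j(q+1)$, whereas distinct residues modulo $q$ only certify offsets $j$; the difference is $(\ell+1)qg$ over all $q(\ell+1)$ rows, i.e.~exactly $g$ on the final bound. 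So the two arguments are not interchangeable proofs of the same inequality --- they sit on opposite sides of the same $g$-sized discrepancy, and what would close it in either formulation is precisely the structural fact ($y$-stability of the weak Popov form, equivalently your reduced $\Ring$-basis) that your proposal assumes rather than proves. Until that is supplied, your write-up establishes only the weaker bound $\tauPow(\ell)+g$.
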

\begin{proof}
  If $\Wmap(M')$ is a matrix unimodular equivalent with $\Wmap(M)$ and in weak Popov form, then by \cref{prop:wpf} there must be a row $\Wmap(\vec s')$ of $\Wmap(M')$ with $\deg \Wmap(\vec s') = \deg \Wmap(\vec s)$.
  We have
  \begin{IEEEeqnarray*}{rCl}
    \rowdeg(\Wmap(M')) &=& \deg\det(\Wmap(M')) \\
                       &=& \deg\det(\Wmap(M)) \\
                       &=& \textstyle \ell q n + \sum_i\bar \eta_i + \sum_j\bar \mu_j,
  \end{IEEEeqnarray*}
  where $\vec w = (\bar \eta_0,\ldots,\bar \eta_{q-1},\bar \mu_1,\ldots,\bar \mu_{q\ell})$ as specified in \cref{ssec:keyeq_solving}.
  Inserting and simplifying, the right-hand side becomes
  \begin{equation}
    \label{eqn:MpowerRowdeg}
    q \ell n + (\ell+1)\tbinom q 2 + q(\ell+1)\big(g - \half \ell m - \tfrac{\ell}{\ell+1} + (\ell m + 1)  \big).
  \end{equation}
  Clearly $\deg \Wmap(\vec s') \leq \tfrac 1 {q(\ell+1)} \rowdeg(\Wmap(M'))$, but we can do slightly better due to the sparsity of the polynomials in $\Wmap(M')$:
  notice that for any $h$ in $0,\ldots,q-1$, there is exactly one $i$ such that $\eta_i \equiv h \mod q$, and there are exactly $\ell$ indices $j$ such that $\mu_j \equiv h \mod q$.
  Also note that degree of a given row of $\Wmap(M')$ must be congruent modulo $q$ to the weight applied at the leading position.
  Let $\bar h = (\deg (\Wmap(\vec s)) \modop q)$.
  For any $h$ in $0,\ldots,q-1$, since $\Wmap(M')$ is in weak Popov form, there are therefore $\ell+1$ rows whose degree is congruent to $h$ modulo $q$.
  For such a row $\Wmap(\vec m_j)$ we therefore have
  \[
    \deg(\Wmap(\vec m_j)) \geq \deg(\Wmap(\vec s)) + ((h-\bar h) \modop q),
  \]
  where the modulo representative is taken in $0,\ldots,q-1$.
  Summing over all rows we get
  \[
    (\ell+1)\deg(\Wmap(\vec s)) + (\ell+1)\tbinom q 2 \leq \rowdeg \Wmap(M').
  \]
  Finally, by the choice of $\eta_0,\ldots,\eta_{q-1}$, we have
  \[
    \order(\vectifyinv{\lambda_0,\ldots,\lambda_{q-1}}) + \ell m + 1 \leq \deg(\Wmap(\vec s')).
  \]
  Combining these inequalities gives the result.
\end{proof}

The above result therefore states that if $|\Errs| \geq \tauPow(\ell)$ then there are shorter vectors in the row space of $M$ than $\vec \Lambda = (\Lambda, \Lambda f,\ldots, \Lambda f^\ell)$.
These short vectors might not have a leading position within $1, \ldots, q$ as we require from a solution, and $\vec \Lambda$ might still be the shortest vector satisfying this requirement.
However, it seems reasonable to expect that the shortest vector with leading position within $1,\ldots,q$ usually does not have much higher degree than the unconditionally shortest vector: indeed, experiments confirm this, and Power decoding fails almost always when $|\Errs| \geq \tauPow(\ell)$.
See \cref{tbl:sim_failure}.
When it does succeed anyway, this is usually because $\order \Lambda < |\Errs| + g$ as previously discussed.

There is a small caveat to the above discussion: it only holds when $\ell$ is chosen less than or equal to the value which maximises $\tauPow(\ell)$.
For $\ell \rightarrow \infty$ then $\tauPow(\ell) \rightarrow -\infty$.
However, clearly having more key equations is not going to \emph{add} solution vectors, so if, say $(\Lambda, \ldots, \Lambda f^{\ell})$ is the minimal solution choosing some $\ell$, then clearly $(\Lambda, \ldots, \Lambda f^{\hat \ell})$ is the minimal solution when choosing any $\hat \ell \geq \ell$.

Assuming this choice of $\ell$ then whenever $|\Errs| \leq \tauPow(\ell)$, we will most likely succeed and find $\Lambda$.
Unfortunately, we do not have an upper bound on the probability that we fail.
However, our simulations indicate that this probability is low and exponentially quickly decays as $|\Errs|$ fall; see \cref{sec:simulation}.
This is also the case for Power decoding of Reed--Solomon codes, where a proof of these observations is only known for $\ell=2,3$ \cite{schmidt_syndrome_2010,zeh_unambiguous_2012,nielsen_power_2014}.

Recall again from \cref{ssec:power_after} that even when the key equation is solved correctly, we are only able to extract $f$ from $\Lambda$ and $(\Lambda f \modop G)$ when $|\Errs| < n - m - g$.
When $m \ll n$ it is possible that $\tauPow(\ell) > n - m - g$.

\begin{remark}
  For $\ell = 1$, i.e.~minimum distance decoding, then \cref{prop:power_upper} indicates that we will probably succeed when $|\Errs| \leq \frac {n-m-1} 2 = \frac {\dG-1} 2$, while \cref{prop:power_radius_low} only promises success when $|\Errs| \leq \frac {\dG-1-g} 2$.
  This is an interesting, well-known caveat of ``pure'' key equation decoding of AG codes: we are only \emph{assured} decoding success until $g/2$ less than $(\dG-1)/2$, but \emph{almost always}, decoding will succeed all the way until $(\dG-1)/2$.
  The authors are unaware of any work investigating this classical failure probability.
  One can be assured of success all the way to $(\dG-1)/2$ using the majority voting technique of Feng et al.~\cite{feng_simplified_1994}; it is yet unclear whether this technique can be combined with the fast module minimisation and with Power decoding.
\end{remark}

\section{Simulation Results}
\label{sec:simulation}

The proposed algorithms have been functionally implemented in Sage v6.4 \cite{stein_sage_????} and can be downloaded at \url{www.jsrn.dk/code-for-articles}.
The implementation includes basic manipulation of the codes and objects, the fast root-finding and all $\Ya$ conversions.
It does not include either of the fast module minimisation algorithms GJV \cite{giorgi_complexity_2003} or Zhou--Labahn \cite{zhou_efficient_2012}, but instead accomplishes module minimisation using the simpler Mulders--Storjohann algorithm \cite{mulders_lattice_2003}.
The map $\Pmap$ described in \cref{ssec:mod_weights} for handling the weights efficiently has also been implemented.
All parts of our implementation but the module minimisation therefore runs in the asymptotic complexities reported in this paper, though they -- being high-level implementations -- might not have the lowest possible hidden constant.

The implementations allow us to investigate to some degree two concerns which seem difficult to approach analytically: the failure probability of the decoders, and a breakdown of the speed of the various parts of the decoders on concrete parameters.
For the latter, we can -- of course and unfortunately -- say little on the speed of fast module minimisation algorithms.

\subsection{Failure Probability}

We gave in \cref{prop:power_upper} a bound $\tauPow(\ell)$ on how many errors we should expect Power decoding to correct, and conversely, using intuition from linear algebra, we might expect that any number of errors below this will usually be correctable.
This intuition is confirmed by our simulations, which indicate that when $|\Errs| < \tauPow(\ell)$ decoding failure is unlikely, with a probability that quickly decays as $|\Errs|$ falls.
\cref{tbl:sim_failure} summarises simulation results for two different codes.
In the table, for each set of code and decoder parameters, and for each number of errors $\epsilon$, 1000 random codewords were generated and submitted to a random error of Hamming weight exactly $\epsilon$ and attempted decoded.

\def\btauGS{\bar\tau_{\word{GS}}}
It was already observed by Lee and O'Sullivan \cite{lee_list_2009} that Guruswami--Sudan will usually succeed in correcting errors well beyond the guaranteed bound $\tauGS(s,\ell)$ from \cref{sec:gs}, but they gave no description on how much beyond to expect.
Observe that $\tauPow(\ell)$ is exactly $g - \ell/(\ell+1)$ greater than the lower bound on $\tauGS(1,\ell)$ given in \cref{eqn:gs_decoding_radius}.
As can be seen on \cref{tbl:sim_failure}, our simulations indicate that $\tauGS(\ell) + g$ is exactly the bound one should also expect that Guruswami--Sudan will decode up to, when $s=1$.
More generally, there is also an indication that we can expect Guruswami--Sudan to succeed for at least $\tauGS(s,\ell) + g/s$, but more simulations should be carried out to verify this.

For the $q=4$ code, note how the success probability at $\tau + 1$ errors is very close to $q^{-2} = 6.25\%$.
As previously discussed, this is exactly the asymptotic (for $q \rightarrow \infty$) probability that $\order \Lambda < |\Errs| + g$ \cite{jensen_performance_1999,hansen_dependent_2001}, in which case we due to \cref{prop:power_upper} should expect Power decoding to succeed.
The success probability seems better at $\tau+1$ for the $q=5$ code, where $q^{-2} = 4\%$.

For a given $\ell$ and $s=1$, it is a natural question whether there is a correspondence between the cases where Power decoding fails and where Guruswami--Sudan does, for $|\Errs| \leq \tauPow(\ell)$.
It surprised us that we observed no such correspondence: when Power decoding fails, Guruswami--Sudan often succeeds, and vice versa!

\begin{table*}[tb]
  \centering
    \begin{tabular}{l@{\hspace{1em}}rrr@{\hspace{1em}}rrrr}
      \multicolumn{8}{c}{Success probability for the $[64, 10, \geq 49]$ code, with $q=4, m=15$ and $g=6$}                            \\
      \midrule
                          & $\tauGS(s,\ell)$ & $\tauPow(\ell)$ & $\tau$ & $P_s(\tau-2)$ & $P_s(\tau-1)$ & $P_s(\tau)$ & $P_s(\tau+1)$ \\
      GS $(s,\ell)=(1,1)$ & 18               & 24              & 24     & 100\%         & 100\%         & 100\%       & 6.1\%         \\
      Power $\ell=1$      & ---              & 24              & 24     & 100\%         & 100\%         & 100\%       & 6.2\%         \\
      GS $(s,\ell)=(1,2)$ & 21               & 27              & 27     & 100\%         & 100\%         & 93.9\%      & 6.5\%         \\
      Power $\ell=2$      & ---              & 27              & 27     & 100\%         & 100\%         & 94.9\%      & 6.2\%         \\
      GS $(s,\ell)=(2,4)$ & 26               & ---             & 29     & 100\%         & 100\%         & 99.3\%      & 6.5\%         \\[10pt]
      \multicolumn{8}{c}{Success probability for the $[125, 11, \geq 105]$ code, with $q=5, m=20$ and $g=10$}                         \\
      \midrule
                          & $\tauGS(s,\ell)$ & $\tauPow(\ell)$ & $\tau$ & $P_s(\tau-2)$ & $P_s(\tau-1)$ & $P_s(\tau)$ & $P_s(\tau+1)$ \\
      GS $(s,\ell)=(1,2)$ & 53               & 62              & 63     & 100\%         & 99.8\%        & 96.4\%      & 4.5\%         \\
      Power $\ell=2$      & ---              & 62              & 62     & 100\%         & 100\%         & 100\%       & 7.2\%         \\
      GS $(s,\ell)=(1,3)$ & 54               & 63              & 64     & 100\%         & 100\%         & 96.1\%      & 5.1\%         \\
      Power $\ell=3$      & ---              & 63              & 63     & 100\%         & 100\%         & 100\%       & 8.5\%         \\[1pt]
    \end{tabular}
    \caption
    {
      In the above, $\tau = \tauPow(\ell)$ for Power decoding while $\tau = \tauGS(s,\ell) + g/s$ for Guruswami--Sudan.
      $P_s(t)$ denotes the observed probability of decoding success when exactly $t$ errors is added.
    }
    \label{tbl:sim_failure}
\end{table*}

\subsection{Speed}

In our implementation, the running time for both decoders is completely dominated by module minimisation.
Of course, one should recall that our implementations are asymptotically fast in all parts except the module minimisation, where we are using Mulders--Storjohann, so asymptotically, we should expect exactly such a dominance.
However, it is still possible to get an impression on how demanding each part of the decoding algorithms is.
\cref{tbl:sim_speed} shows a breakdown for the time spent on the various parts of the algorithms, using the $[343, 35, \geq 288]$ code having $q = 7, m=55$ and $g=21$.

The reported speeds are the median over 10 trials for each set of parameters.
After module minimisation, Power decoding must perform the division of $\Lambda f$ with $\Lambda$ as described in \cref{ssec:power_after}, while root-finding is performed for Guruswami--Sudan.
``Conversions'' denote time used in converting between the representations of $\Ya$ elements, as described in \cref{app:power_series}.
Precomputation refers to $G$, and various polynomials for Lagrange interpolation as well as for $\Ya$ conversion.
These simulations were run on a laptop with a Core Intel i7-4600U @ 2.1 GHz processor and 8 GB DDR3 1.6 GHz RAM.

We have executed our decoders with various parameters: Power decoding with $\ell=1$ (i.e. classical key equation decoding) and with $\ell=2$, and Guruswami--Sudan with $(s,\ell)=(1,2)$ and $(s,\ell)=(2,4)$.
In all cases, we have run the decoder on the maximal probably decodable number of errors, as discussed in the preceding section.
The received words where decoding failed were discarded from the statistics.

As mentioned, module minimisation completely dominates.
Though we can not draw too final conclusions without an implementation of the asymptotically fast module minimisation algorithm GJV or Zhou--Labahn, even with this algorithm the cost of module minimisation will likely dominate the cost, for even medium sized codes such as this.
In particular, as also predicted by the asymptotic analyses, the cost of conversion between the representations of $\Ya$ elements is highly unlikely to have a significant impact on the total running time.

As is known to be the case for Guruswami--Sudan decoding of Reed-Solomon codes, it seems that also in our case, the root finding is cheaper than the interpolation step.
We can furthermore add that our implementation of Alekhnovich's fast root finding out-performs our implementation of the Roth--Ruckenstein root finding \cite{roth_efficient_2000} already when the $x\deg$ of the input polynomial exceeds 100.

\begin{table*}[tb]
  \centering
    \begin{tabular}{r@{\hspace{1em}}llll}
      \multicolumn{5}{c}{Speed results for the $[343,35, \geq 288]$ code, with $q=7, m=55$ and $g=21$} \\
      \toprule
                              & Power $\ell=1$ & Power $\ell=2$ & GS $(s,\ell)=(1,2)$ & GS $(s,\ell)=(2,4)$ \\
      No. of errors           & 143            & 173            & 173                 & 185                 \\
      Module minimisation     & 2.25 s         & 5.95 s         & 9.36 s              & 177 s               \\
      Division / Root-finding & 0.26 s         & 0.27 s         & 0.36 s              & 0.74 s              \\
      Build matrix            & 0.09 s         & 0.19 s         & 0.15 s              & 0.61 s              \\
      Conversions             & 0.05 s         & 0.05 s         & 0.02 s              & 0.06 s              \\
      Precomputation          & 0.01 s         & 0.01 s         & 0.01 s              & 0.02 s              \\[1pt]
      \midrule
      Total time              & 2.6 s          & 6.4 s          & 9.9 s               & 178 s               \\
    \end{tabular}
    \label{tbl:sim_speed}
\end{table*}

\section{Conclusion}
\label{sec:conclusion}

In this paper, we have demonstrated that decoding of one-point Hermitian codes in sub-quadratic complexity is possible: we describe two decoding algorithms, both of which are able to decode beyond the classical $(\dG-g)/2$ bound.
The main ingredient was to employ recent and deep results in computer algebra for the general problem of $\Fqq[x]$-module minimisation, combined with a new embedding of the original $\Ring$ problem from the function field.

The core of both the Guruswami--Sudan and the Power decoding algorithms seem fairly resilient to the exact function field employed.
We expect in particular that the methods can be extended to one-point codes over any plane Miura-Kamira curve  \cite{sakata_fast_1995} with fairly few changes.
Surprisingly, particular properties of the Hermitian curve, in particular that its equation has only few monomials, were important for attaining sub-quadratic complexity in the auxiliary computations regarding conversion to and from power series; these conversions were necessary for our solutions to the root-finding step in Guruswami--Sudan as well as the post-processing after having solved the key equations in Power decoding.

The decoding algorithms have been functionally implemented in Sage v6.4 \cite{stein_sage_????} and can be downloaded at \url{www.jsrn.dk/code-for-articles}.

\section*{Acknowledgements}

  J.~S.~R.~Nielsen gratefully acknowledges the support of the Digiteo foundation, project IdealCodes.
  Part of this work was also done while he was with Ulm University, and he gratefully acknowledges the support from the German Research Council under grant BO 867/22-1.
  P.~Beelen gratefully acknowledges the support from The Danish Council for Independent Research (Grant No. DFF--4002-00367).

\appendices
\section{Root-finding in \protect{$\Ring[z]$}}
\label{app:root_finding}

For Guruswami--Sudan decoding of one-point Hermitian codes in \cref{sec:gs}, we need to efficiently find all roots of $Q \in \Ring[z]$ whose pole order at $P_\infty$ is less than $m$.
In \cite{beelen_decoding_2008} it was already shown how to solve this problem using the Roth--Ruckenstein algorithm \cite{roth_efficient_2000} for finding $\Fqq[x]$ roots of polynomials in $\Fqq[x][z]$ by adopting a power series view.
We will now show how one can instead apply Alekhnovich's Divide \& Conquer variant \cite{alekhnovich_linear_2005} of the Roth--Ruckenstein algorithm in order to achieve a sub-quadratic complexity in $n$.
The core is a straight-forward power series description of the algorithm of \cite{alekhnovich_linear_2005}, though with a tighter complexity analysis, but for clarity and completeness, we show and prove the complete algorithm.

Consider the rational place $(0,0)$: a local parameter for this place is $\local = x$.
Elements of $\Ring$ have no poles at $(0,0)$, so any $h \in \L(mP_\infty)$ can be written as a power series in $\local$: $h = \sum_{i=0}^\infty h_i \local^i \in \Pring$.
Likewise, we can write $Q = \sum_{t=0}^\ell z^t \sum_{i=0}^\infty q_{t,i} \local^i$.

\begin{lemma}
  \label{lem:root_to_order}
  For any $Q \in \Ring[z]$, consider some $h \in \L(mP_\infty)$ satisfying
  \[
  Q(h) \equiv 0 \mod \local^k,
  \]
  for some integer $k > \orderz m (Q)$ when $Q(h)$ is expanded into a power series in $\phi$.
  Then $Q(h) = 0$.
\end{lemma}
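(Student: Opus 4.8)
The plan is to translate the power-series hypothesis into a statement about the divisor of the function $Q(h) \in \Ring$ and then to derive a contradiction from a degree count, unless $Q(h) = 0$.

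First I would record the two facts about $Q(h)$ that we need. Since $h \in \L(mP_\infty)$ we have $\order h \le m$, so for each $t$ with $0 \le t \le \ell$ the monomial $\Csel Q t\, h^t$ satisfies $\order(\Csel Q t\, h^t) \le \order \Csel Q t + tm$; taking the maximum over $t$ gives $\order(Q(h)) \le \orderz m(Q) < k$. In particular $Q(h) \in \Ring$, so by \cref{def:order} its only pole is at $P_\infty$, and that pole has order exactly $\order(Q(h))$. Secondly, because $(0,0) \in \PlacesAff$ and elements of $\Ring$ have no pole there, $Q(h)$ expands as a power series in the local parameter $\phi = x$; the hypothesis $Q(h) \equiv 0 \bmod \phi^k$ then says precisely that $v_{(0,0)}(Q(h)) \ge k$ whenever $Q(h) \ne 0$.

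Now suppose for contradiction that $Q(h) \ne 0$. Its divisor $\Div(Q(h))$ has degree $0$. The zero part of this divisor contains the contribution $k \cdot (0,0)$, hence has degree at least $k$ (all places are rational, so the coefficient at $(0,0)$ contributes its full multiplicity to the degree); the pole part is supported only at $P_\infty$ and has degree $\order(Q(h)) \le \orderz m(Q)$. Therefore
\[
  0 = \deg \Div(Q(h)) \ge k - \orderz m(Q) > 0,
\]
a contradiction. Hence $Q(h) = 0$.

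I do not expect a genuine obstacle here: the argument is just divisor bookkeeping. The only points requiring a little care are (i) checking that $Q(h)$ really lies in $\Ring$ with pole order bounded by $\orderz m(Q)$, which is immediate from $h \in \L(mP_\infty)$ and the definition of $\orderz m$, and (ii) noting that the congruence modulo $\phi^k$ is a statement about $v_{(0,0)}$, which holds because $\phi = x$ is a local parameter at the rational place $(0,0)$ and $\Ring \subseteq \L(\infty P_\infty)$ has no poles away from $P_\infty$.
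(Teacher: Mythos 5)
Your proof is correct and follows essentially the same route as the paper's: both establish that a nonzero $Q(h)$ would have pole order at most $\orderz m(Q)$ at $P_\infty$ and vanishing order at least $k$ at $(0,0)$, and then conclude by a degree count. The paper packages the final step as $Q(h) \in \L\big(\orderz m(Q)P_\infty - k(0,0)\big) = \{0\}$ since the divisor has negative degree, which is exactly your principal-divisor degree argument in disguise.
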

\begin{proof}
  If $Q(h) \neq 0$ then clearly $\order(Q(h)) \leq \orderz m (Q)$.
  Together with the congruence we conclude $Q(h) \in \L\big(\orderz m (Q)P_\infty - k(0,0)\big)$.
  The requirement on $k$ ensures that this Riemann--Roch space contains only 0.
\end{proof}

The strategy is then to iteratively describe all truncated power series $h = \sum_{i=0}^{d_h} h_i \local^i + O(\local^{d_h+1})$ such that $Q(h) \equiv 0 \mod \local^k$ for increasing $k$ until $k > \orderz m (Q)$.
From this set, those that can be extended into functions in $\L(mP_\infty)$ must be unconditional roots of $Q$.
We use the power series conversion detailed in \cref{app:power_series} to convert these roots into functions in the standard basis.
To achieve a quasi-linear dependence on $\orderz m (Q)$, the iterative increments of $k$ are structured in a divide-and-conquer tree.

\def\sols{{\hat \ell}}
\begin{definition}
  For any non-zero $Q \in \Pring[z]$, and some $k \in \ZZ_+$, by \emph{the roots of $Q$ of order $k$}, we will mean the set of $h \in \Pring$ such that $Q(h) \equiv 0 \mod \local^k$.
\end{definition}
The following lemma is an easy extension of \cite[Lemma A.1.1]{alekhnovich_linear_2005}, which in turn was inspired by the analysis of \cite[Section 6]{roth_efficient_2000}:
\begin{lemma}
  \label{lem:root_sets}
  Let $A$ be the roots of $Q$ of order $k$ for any non-zero $Q \in \Pring[z]$ and $k \in \NN_0$,
  Then $A$ can be partitioned into $\sols \leq \deg_z(Q|_{\phi=0})$ many sets $A_1,\ldots,A_\sols$ of the form $A_i = h_i + \local^{d_i} \Pring$ for some $h_i \in \Fqq[\phi]$ and $d_i \in \ZZ_+$.
\end{lemma}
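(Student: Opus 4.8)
The plan is to induct on $k$, following the classical root-tracing argument of Roth--Ruckenstein and Alekhnovich but carried out over the power-series ring $\Pring$ rather than $\Fqq[\phi]$. I would first normalise so that $\phi \nmid Q$, equivalently $\bar Q := Q|_{\phi=0} \in \Fqq[z]$ is nonzero; this is the situation the algorithm maintains, and otherwise one divides out the largest power of $\phi$ dividing $Q$, which is harmless since we only seek roots. Write $D := \deg_z \bar Q$. The degenerate case $k=0$ (where $A=\Pring$) is immediate, so assume $k \ge 1$.

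The single observation driving the induction is that every $h \in \Pring$ has a constant term $h_0 := h|_{\phi=0} \in \Fqq$ and $Q(h) \equiv \bar Q(h_0) \pmod{\phi}$. Hence any root of order $k \ge 1$ has $h_0$ equal to one of the at most $D$ (with multiplicity) $\Fqq$-rational roots of $\bar Q$, so $A = \bigsqcup_{h_0:\bar Q(h_0)=0}\bigl(A \cap (h_0 + \phi\Pring)\bigr)$. For $k=1$ each intersection is the whole coset $h_0 + \phi\Pring$, settling the base case with every $d_i = 1$. For $k \ge 2$ I would fix such an $h_0$, substitute $h = h_0 + \phi g$, and expand $Q^{(h_0)}(z) := Q(h_0 + \phi z) = \sum_j Q^{\langle j\rangle}(h_0)\,\phi^j z^j$, where $Q^{\langle j\rangle}$ denotes the Hasse derivatives, i.e.\ the coefficients in $Q(w+T) = \sum_j Q^{\langle j\rangle}(w)T^j$. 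Every coefficient of $Q^{(h_0)}$ is then divisible by $\phi$, so $Q^{(h_0)} = \phi^{a'}\tilde Q^{(h_0)}$ with $a' \ge 1$ and $\phi \nmid \tilde Q^{(h_0)}$ by construction, and $h_0 + \phi g \in A$ iff $\tilde Q^{(h_0)}(g) \equiv 0 \pmod{\phi^{k-a'}}$. If $a' \ge k$ this is automatic and the branch contributes the single set $h_0 + \phi\Pring$; if $a' < k$, then $1 \le k-a' < k$, and the induction hypothesis applied to $\tilde Q^{(h_0)}$ partitions the admissible $g$ into at most $\deg_z(\tilde Q^{(h_0)}|_{\phi=0})$ sets $g_i + \phi^{e_i}\Pring$ with $e_i \ge 1$ and $g_i \in \Fqq[\phi]$, each pulling back to $(h_0 + \phi g_i) + \phi^{e_i+1}\Pring$ — of the required shape, with centre in $\Fqq[\phi]$ and exponent $\ge 1$. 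Disjointness is clear: distinct branches lie in distinct cosets $h_0 + \phi\Pring$, and within a branch the pieces are disjoint by induction.

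The step I expect to need real care is the counting: that summing the per-branch contributions over all $h_0$ stays $\le D$. The claim to prove is that the branch at a root $h_0$ of multiplicity $\mu := \mu(h_0)$ in $\bar Q$ contributes at most $\mu$ sets; granting this, $\sum_{h_0}\mu(h_0) \le D$ finishes the count. For the per-branch bound I would track $\phi$-valuations: since Hasse differentiation commutes with $\phi \mapsto 0$, the reduction of $Q^{\langle i\rangle}(h_0)$ mod $\phi$ is $\bar Q^{\langle i\rangle}(h_0)$, which vanishes for $i < \mu$ and is nonzero for $i = \mu$; writing $v_j := v_\phi\bigl(Q^{\langle j\rangle}(h_0)\bigr)$, the $z^j$-coefficient of $Q^{(h_0)}$ has valuation exactly $j + v_j$, so $a' = \min_j(j+v_j) \le \mu + v_\mu = \mu$, while for $j > \mu$ one gets $j + v_j \ge j > \mu \ge a'$, so those monomials do not survive in $\tilde Q^{(h_0)}|_{\phi=0}$. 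Hence $\deg_z(\tilde Q^{(h_0)}|_{\phi=0}) \le \mu$, covering the $a' < k$ case, and in the $a' \ge k$ case the lone set is also $\le 1 \le \mu$.

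The remaining loose ends are routine and I would dispatch them quickly: the recursion is well-founded because $k - a' < k$ (as $a' \ge 1$); $\tilde Q^{(h_0)} \ne 0$ (its $z$-degree equals that of $Q$) and $\phi \nmid \tilde Q^{(h_0)}$, so the induction hypothesis genuinely applies; and the border cases (empty $A$, or $\bar Q$ with no $\Fqq$-rational roots) are vacuous. Assembling these yields the partition with $\sols \le \deg_z(Q|_{\phi=0})$.
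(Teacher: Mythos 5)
Your proof is correct and follows essentially the same route as the paper's: normalise so that $\phi \nmid Q$, induct on $k$, branch on the roots of $Q|_{\phi=0}$, recurse on $\phi^{-a'}Q(h_0+\phi z)$, and bound the number of sets contributed by each branch by the multiplicity $\mu$ of $h_0$. The only substantive difference is in that last counting step: the paper writes $Q = (z-z_i)^{m_i}P_i + \phi\hat Q_i$ and argues that the term $(\phi z)^{m_i}P_i(z_i)$ cannot cancel, whereas you obtain the same bound $\deg_z\bigl(\tilde Q^{(h_0)}|_{\phi=0}\bigr) \le \mu$ directly from the valuations $j+v_j$ of the Hasse-derivative coefficients --- a slightly cleaner path to the same inequality.
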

\begin{proof}
  If $Q|_{\phi=0} = 0$ then write $Q = \phi^s \grave Q$ where $\grave Q|_{\phi=0} \neq 0$.
  Then the roots of order $k$ of $Q$ are exactly the roots of order $k-s$ of $\grave Q$.
  Assume therefore that $Q|_{\phi=0} \neq 0$.

  We proceed then by induction on $k$.
  For the base case $k=1$, let $z_1,\ldots,z_\sols$ be the roots of $Q|_{\phi=0} \in \Fqq[z]$.
  Clearly $\sols \leq \deg_z(Q|_{\phi=0})$, and any $h \in A$ will be of the form $h = z_i + O(\phi)$ for one of the $z_i$.

  For the inductive case at $k > 1$, let $z_1,\ldots,z_\sols$ be the roots of $Q|_{\phi=0} \in \Fqq[z]$.
  As before, $\sols \leq \deg_z(Q|_{\phi=0})$, and any $h \in A$ will be of the form $h = z_i + O(\phi)$ for one of the $z_i$.
  Furthermore, let $Q_i = \phi^{-s_i}Q(z_i + \phi z)$ where $s_i \geq 1$ is the greatest integer such that $\phi^{s_i} \mid Q(z_i + \phi z)$.
  It must then be the case that $h = z_i + \phi \grave h$ for some $\grave h \in A_i$, where $A_i$ is the set of roots of $Q_i$ of order $k-s_i$.
  By the induction hypothesis, $A_i$ can be partitioned into $A_{i,1},\ldots,A_{i,\sols_i}$ of the appropriate form, where $\sols_i = \deg_z(Q_i|_{\phi=0})$.
  We can extend each of these $\sols_i$ sets as $A_{i,j}' = z_i + \phi A_{i,j}$ and then $h \in A_{i,j}'$ for some $j$.
  Thus clearly, $A$ can be partitioned into $A_1, \ldots A_L$ of the appropriate form, where $L= \sum_{i=1}^\sols \sols_i$.

  The lemma then follows if we can prove $L \leq \sols$;
  this in turn follows by showing that $\deg_z(Q_i|_{\phi=0}) \leq m_i$ where $m_i$ is the multiplicity of the zero $z_i$ in $Q|_{\phi=0}$.
  We show that by writing $Q = (z-z_i)^{m_i}P_i + \phi \hat Q_i$, where $P_i \in \Fqq[z]$ with $P_i(z_i) \neq 0$ and $\hat Q_i \in \Pring[z]$.
  Then
  \[
    \phi^{s_i} Q_i = (\phi z)^{m_i}P_i(z_i+\phi z) + \phi \hat Q_i(z_i + \phi z).
  \]
  All terms on the right-hand side have $\phi$-degree at least that of the $z$-degree, which means $\deg_z(Q_i|_{\phi=0}) \leq s_i$.
  But $s_i \leq m_i$ since the above right-hand side has the term $(\phi z)^{m_i}P_i(z_i)$, and this can not cancel with any term in $\phi \hat Q_i(z_i + \phi z)$ since these have greater $\phi$-degree than $z$-degree.
\end{proof}

\newcommand{\RootAlg}{\ensuremath{\mathsf{Roots}}}
\begin{algorithm}[t]
  \caption{\RootAlg: root-finding in $\Pring[z]$}
  \label{alg:root}
  \begin{algorithmic}[1]
    \Require{$Q \in \Pring[z], k \in \ZZ_+$}
    \Ensure{The roots of $Q$ of order $k$ in disjoint sets as in \cref{lem:root_sets}, as a set of pairs $(h, d) \in \Fqq[\phi] \times \ZZ_+$ \;}
    \State $Q \ass Q \modop \phi^k$
    \If{$k = 1$}
      \Ifline{$Q = 0$}{\Return $\{(0,0)\}$}
      \State $z_1, \ldots, z_\sols \ass $ $z$-roots of $Q \in \Fqq[z]$\label{line:root_zroots}
      \State \Return $\{(z_i, 1)\}_{i=1}^\sols$
    \Else
      \For{$(h_i, d_i) \in \RootAlg(Q, \ceil{k/2})$}
        \State $\hat Q \ass Q(h_i + \phi^{d_i}z)/\phi^{s_i}$\label{line:root_hatQ}
        \State \quad where $s_i$ is maximal such that $\hat Q \in \Pring[z]$
        \State $\{ (h_{i,j}, d_{i,j}) \}_{i=1}^{\sols_i} \ass \RootAlg(\hat Q, k-s_i)$
      \EndFor
      \State \Return $\{ (h_i + \phi^{d_i} h_{i,j}, d_i + d_{i,j}) \}_{i,j}$\label{line:root_return}
    \EndIf
  \end{algorithmic}
\end{algorithm}

\begin{proposition}
  \cref{alg:root} is correct.
\end{proposition}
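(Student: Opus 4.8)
The plan is to argue by induction on $k$. The inductive claim is that $\RootAlg(Q,k)$ returns finitely many pairs $(h_r,d_r)\in\Fqq[\local]\times\ZZ_{\ge 0}$ whose cosets $h_r+\local^{d_r}\Pring$ are pairwise disjoint, whose union is exactly the set of roots of $Q$ of order $k$, and whose number is at most $\deg_z(Q|_{\local=0})$; the last part then just re-derives \cref{lem:root_sets}. First I would observe that the opening assignment $Q\ass Q\bmod\local^k$ changes nothing, since $(Q_t\bmod\local^k)\,h^t\equiv Q_t h^t\pmod{\local^k}$ for every monomial, so afterwards each coefficient of $Q$ is a polynomial in $\local$ of degree $<k$. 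The wholly degenerate case $Q\equiv 0$ after this reduction --- every element of $\Pring$ then being a root of order $k$ --- is dealt with by returning $\{(0,0)\}$, which I would add as a base case at all $k$; in the sequel assume $Q\not\equiv 0\pmod{\local^k}$.

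For the base case $k=1$, after the reduction $Q\in\Fqq[z]$, and for $h\in\Pring$ one has $Q(h)\equiv Q(h|_{\local=0})\pmod\local$ with $h|_{\local=0}\in\Fqq$, so $h$ is a root of order $1$ exactly when $h|_{\local=0}$ is an $\Fqq$-zero of $Q$. If $Q=0$ this is all of $\Pring$, matching the returned $\{(0,0)\}$; otherwise the roots of order $1$ are the disjoint union of the cosets $z_i+\local\Pring$ over the distinct zeroes $z_1,\dots,z_\sols$ of $Q$ in $\Fqq$, matching the returned $\{(z_i,1)\}$, with disjointness immediate from the distinctness of the $z_i$ and $\sols\le\deg_z Q=\deg_z(Q|_{\local=0})$.

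For the inductive step $k>1$, set $k'=\ceil{k/2}$, so $1\le k'<k$. Any root of $Q$ of order $k$ is a root of order $k'$, so by the inductive hypothesis applied to $\RootAlg(Q,k')$ it lies in exactly one of the disjoint cosets $h_i+\local^{d_i}\Pring$ returned there; it therefore suffices, for each $i$, to enumerate the $h=h_i+\local^{d_i}\grave h$ in that coset with $Q(h)\equiv 0\pmod{\local^k}$. From $Q(h_i+\local^{d_i}z)=\local^{s_i}\hat Q(z)$ --- the defining relation for $\hat Q$ and $s_i$ --- I get $Q(h)=\local^{s_i}\hat Q(\grave h)$, whence $Q(h)\equiv 0\pmod{\local^k}$ iff $\hat Q(\grave h)\equiv 0\pmod{\local^{k-s_i}}$, i.e. iff $\grave h$ is a root of $\hat Q$ of order $k-s_i$. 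The inductive hypothesis --- extended by the convention that every element of $\Pring$ is a root of non-positive order --- applied to $\RootAlg(\hat Q,k-s_i)$ returns exactly these $\grave h$ as disjoint cosets $h_{i,j}+\local^{d_{i,j}}\Pring$, and mapping back by $\grave h\mapsto h_i+\local^{d_i}\grave h$ produces the disjoint cosets $(h_i+\local^{d_i}h_{i,j})+\local^{d_i+d_{i,j}}\Pring$, which is precisely what the algorithm returns. Disjointness across distinct $i$ is inherited from the first recursive call, within a fixed $i$ from the second, the union over all $i,j$ is the full set of roots of order $k$, and the cardinality bound again follows from \cref{lem:root_sets}.

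The genuine work --- and the part I expect to be fiddly --- is the bookkeeping that makes the recursion well founded and reconciles the algorithm text with the boundary cases, rather than any new idea. One must check that $s_i$ is finite: the coefficient of $z^{\deg_z Q}$ in $Q(h_i+\local^{d_i}z)$ is $\local^{d_i\deg_z Q}$ times the nonzero leading coefficient of $Q$, so $Q(h_i+\local^{d_i}z)\neq 0$ under the standing hypothesis; moreover after division by $\local^{s_i}$ this polynomial has a coefficient not divisible by $\local$, so the non-vanishing hypothesis is preserved for the second recursive call. One must check $s_i\ge 1$, so that $k-s_i<k$ and --- together with $\ceil{k/2}<k$ --- the first argument strictly decreases and the recursion terminates at $k\le 1$: the $z^0$-coefficient of $Q(h_i+\local^{d_i}z)$ is $Q(h_i)$, divisible by $\local^{k'}$ with $k'\ge 1$ since $h_i$ is a root of order $k'$, while each $z^t$-coefficient with $t\ge 1$ carries a factor $\local^{d_i t}$ --- the only subtlety being $d_i=0$, which forces $h_i=0$ and hence $s_i$ equal to the largest power of $\local$ dividing $Q$, itself $\ge k'\ge 1$. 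Finally, both $Q\equiv 0\pmod{\local^k}$ and $k-s_i\le 0$ mean ``all of $\Pring$'', encoded by the extra base case and the convention above so that the combination of sub-results correctly returns the whole coset $(h_i,d_i)$. Granting these verifications, \cref{alg:root} is a line-by-line realisation of the constructive partition in \cref{lem:root_sets}, and its correctness follows.
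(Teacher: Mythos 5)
Your proof is correct and follows essentially the same route as the paper's: induction on $k$, with the first recursive call partitioning the roots of order $\ceil{k/2}$ into cosets and the second refining each coset via the equivalence $Q(h_i+\phi^{d_i}\grave h)\equiv 0 \bmod \phi^k \iff \hat Q(\grave h)\equiv 0 \bmod \phi^{k-s_i}$. You are in fact more careful than the paper about the boundary cases (termination via $s_i\geq 1$, the $Q\equiv 0$ and $k-s_i\leq 0$ conventions, disjointness of the returned cosets), and you correctly avoid relying on the paper's stronger assertion $s_i\geq\ceil{k/2}$, which is not needed for correctness but only for the subsequent complexity analysis.
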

\begin{proof}
  We proceed by induction on $k.$
  If $k=1$, clearly the algorithm is correct.
  Now for the inductive step: each root of $Q$ of order $k_\bot = \ceil{k/2}$ will be of the form $h_i + \phi^{d_i}h_i'$ for some $h_i' \in \Pring$ for one of the iterations $(h_i, d_i)$.
  This means $Q(h_i + \phi^{d_i}h_i') \equiv 0 \mod \phi^{k_\bot}$ for any $h_i'$, which is only possible when
  $\phi^{k_\bot} \mid Q(h_i + \phi^{d_i} z)$, implying $s_i \geq k_\bot$ in \cref{line:root_hatQ} for this iteration.

  Now for any $h_i'$, if $h_i + \phi^{d_i}h_i'$ is a root of $Q$ of order $k$, then $\phi^k \mid Q(h_i + \phi^{d_i}z)|_{z=h_i'}$,
  i.e.~$\phi^{k-s_i} \mid \hat Q(z)|_{z=h_i'}$, and so
  $h_i'$ is a root of $\hat Q$ of order $k - s_i$.
  Again by the induction hypothesis $\{(h_{i,j}, d_{i,j})\}$ represents all such roots.
  Therefore, all roots of $Q$ of order $k$ are returned in \cref{line:root_return}.
\end{proof}

\begin{proposition}
  The complexity of \cref{alg:root} is $\Oapp(\ell^2 k)$, where $\ell = \deg_z Q$, assuming $q^2 \in O(k)$.
\end{proposition}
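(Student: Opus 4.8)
The plan is to analyse \cref{alg:root} as a divide-and-conquer recursion on the precision parameter $k$: charge the arithmetic to the nodes of the recursion tree and sum the cost level by level. Throughout, the input is assumed to satisfy $\deg_z Q\le\ell$ and have $\phi$-precision at most $k$. First I would fix the shape of the tree. Every recursive call shrinks the precision by roughly a factor two — the ``left'' call is on $\lceil k/2\rceil$, and in a ``right'' call the correctness argument gives $s_i\ge\lceil k/2\rceil$, hence $k-s_i\le\lfloor k/2\rfloor$ (and when $k-s_i\le 0$ the branch terminates) — so the tree has depth $O(\log k)$, and a node $v$ at depth $j$ has precision $k_v\le\lceil k/2^{j}\rceil$. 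Two facts are then immediate from \cref{lem:root_sets} and its proof: (i) $\deg_z Q_v\le\ell$ at every node, since neither truncation nor $Q\mapsto Q(h_i+\phi^{d_i}z)/\phi^{s_i}$ raises the $z$-degree; and (ii) writing $\delta_v=\deg_z(Q_v|_{\phi=0})$, a node $v$ has at most $\delta_v$ right children, and their $\delta$-values sum to at most $\delta_v$ — each is bounded by the associated $s_i\le m_i$, and $\sum_i m_i\le\delta_v$ over the distinct roots of $Q_v|_{\phi=0}$.

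Next I would bound the work at a single node $v$, excluding its recursive subcalls. The left call returns at most $\delta_v$ pairs $(h_i,d_i)$, and for each we must form $\hat Q=Q_v(h_i+\phi^{d_i}z)/\phi^{s_i}$ to precision $k_v$. This is a Taylor shift $P(z)\mapsto P(z+h_i)$ of a polynomial of $z$-degree $\le\ell$ over the ring $\Fqq[\phi]/(\phi^{k_v})$ — costing $\Oapp(\ell)$ operations in that ring, hence $\Oapp(\ell k_v)$ field operations — followed by the monomial substitution $z\mapsto\phi^{d_i}z$ and a division by $\phi^{s_i}$, each $O(\ell k_v)$. So the per-node cost is $\Oapp(\delta_v\,\ell\,k_v)$. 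In the base case $k_v=1$ the node instead extracts the $\Fqq$-roots of $Q_v|_{\phi=0}$, of degree $\le\delta_v\le\ell$ over $\Fqq$.

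Finally I would sum over the tree. With $D_j=\sum_{v\text{ at depth }j}\delta_v$, fact (ii) together with the left child inheriting the same $\delta$-value gives $D_{j+1}\le 2D_j$, so $D_j\le 2^{j}\ell$; combined with $k_v\le\lceil k/2^{j}\rceil$, the interior work at depth $j$ is $\sum_v\Oapp(\delta_v\,\ell\,k_v)\le\Oapp(\ell\,k\,2^{-j})\cdot D_j\le\Oapp(\ell^2 k)$, and summing over the $O(\log k)$ depths preserves $\Oapp(\ell^2 k)$. The base case telescopes likewise: $\sum_{\text{leaves}}\Oapp(\delta_v)\le\sum_j\Oapp(D_j)=\Oapp(k\ell)\le\Oapp(\ell^2 k)$ with a fast root-finder; alternatively, \cref{lem:root_sets} bounds the number of leaves producing output by $\ell$, so a naive root-finder costing $O(\ell q^2)$ per such leaf contributes $O(\ell^2 q^2)$, which is $\Oapp(\ell^2 k)$ precisely under the hypothesis $q^2\in O(k)$. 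Collecting everything yields $\Oapp(\ell^2 k)$.

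The main obstacle is the accounting that makes these level sums telescope: one has to recognise that the conserved quantity across right children is $\delta_v=\deg_z(Q_v|_{\phi=0})$, not $\deg_z Q_v$, so that the tree's branching is paid for by the first factor of $\ell$ rather than a spurious third one; and then check that the $\lceil\cdot\rceil$-roundings and the terminating case $k-s_i\le 0$ harm neither the depth bound nor facts (i)--(ii). Once that is set up, the geometric sums are routine.
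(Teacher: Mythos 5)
Your argument is correct and is essentially the paper's own proof in unrolled form: the paper sets up the recurrences $T_\ell(k,\hat\ell)\le \hat\ell\, S_\ell(k/2)+2T_\ell(k/2,\hat\ell)$ and $S_\ell(k')=\Oapp(\ell k')$ and solves them with the master theorem, which is precisely your level-by-level recursion-tree sum in which the branching is paid for by the conserved quantity $\deg_z(Q|_{\phi=0})$ (the paper's $m_1+\cdots+m_{\hat\ell}\le\hat\ell$). One aside in your base-case discussion is wrong, though harmless because your primary bound via a fast root-finder suffices: it is not only the at most $\ell$ leaves that produce output which must run the univariate root-finder --- every leaf must, and there can be $\Theta(k\ell)$ of them --- so the ``naive root-finder'' fallback does not yield $O(\ell^2q^2)$.
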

\begin{proof}
  Denote the complexity of the algorithm on input $Q$ with $\deg_z Q = \ell$ and $\deg_z(Q|_{\phi=0}) = \sols$ by $T_\ell(k,\sols)$.
  Note that in none of the recursive calls can we then have $\deg_z Q > \ell$.
  Now, $T_\ell(1, \sols) = O(\sols\costPoly{\sols} \log(q\sols))$, being the complexity of univariate root-finding using e.g.~\cite[Chapter 8.9]{aho_design_1974}, where $\costPoly{n} \in \Oapp(n)$ denotes the complexity of multiplying two polynomials over $\Fqq$ of degree at most $n$ \cite[Theorem 8.23]{von_zur_gathen_modern_2012}.

  For larger $k$, the main loop will have complexity
  \[
    T_\ell(k, \sols) = T_\ell(\ceil{k/2},\sols) + \sum_{i=1}^\sols\big( S_\ell(k-s_i) + T_\ell(k-s_i,m_i) \big),
  \]
  where $S_\ell(k')$ is the cost of computing $\hat Q = Q(h_i + \phi^{d_i})$ when $Q$ is given to precision $\phi^{k'}$, and where the $m_i$ are as in the proof of \cref{lem:root_sets}.
  Recall that $m_1+\ldots+m_\sols \leq \sols \leq \ell$.

  To estimate $S_\ell(k')$, then let $Q = Q_\bot + z^{\ell'}Q_\top$, where $\ell'$ is the greatest power of 2 less than $\ell$, and $\deg Q_\bot < \ell'$.
  Then
  \begin{IEEEeqnarray*}{rCl}
  \phi^{s_i}\hat Q &=& Q(h_i + \phi^{d_i} z)
  \\
  &=& Q_\bot(h_i + \phi^{d_i} z) + (h_i + \phi^{d_i} z)^{\ell'}Q_\top(h_i + \phi^{d_i} z).
  \end{IEEEeqnarray*}
  After precomputation of $(h_i + \phi^{d_i} z)^{2^h}$ for all $h < \log_2(\ell)$, we can compute the product of $(h_i + \phi^{d_i} z)^{\ell'}$ and $Q_\top(h_i + \phi^{d_i} z)$ in complexity $\costPoly{\ell}\costPoly{k'}$.
  Thus we get
  \[
    S_\ell(k') = 2S_{\ell/2}(k') + O(k') + \costPoly{\ell}\costPoly{k'},
  \]
  which my the master theorem \cite{cormen_introduction_2009} has the solution $S_\ell(k') = \Oapp(\ell k')$.

  Back to $T_\ell(k, \sols)$, it is easy to see that the complexity is increasing at least linearly in both $k$ and $\sols$.
  That means that $\sum_{i=1}^\sols T_\ell(k-s_i, m_i) \leq T_\ell(k/2,\sols)$, since the $m_i$ sum to at most $\sols$ and since $s_i \geq k/2$.
  Thus, $T_\ell(k,\sols) \leq \sols S_\ell(k/2) + 2 T_\ell(k/2,\sols)$, which by the master theorem has the solution
  \[
    T_\ell(k,\sols) \in O(\sols S_\ell(k) \log k + kT_\ell(1,\sols)),
  \]
  whence $T_\ell(k, \ell) \in \Oapp(\ell^2 k)$.
\end{proof}

\begin{corollary}
  \label{cor:root_finding_total}
  Given a $Q \in \Ring[z]$ whose coefficients are in the standard basis, we can compute all $f \in \L(mP_\infty)$ in the standard basis such that $Q(f) = 0$ in complexity
  \[
    \Oapp(\ell^2 k + \ell k^2/q^2 + \ell q^4),
  \]
  where $k = \orderz m Q$ and $\ell = \deg_z Q$.
\end{corollary}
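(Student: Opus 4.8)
The plan is to reduce the problem to the power-series root-finding of \cref{alg:root}, which works over $\Pring[z]$; so there will be three stages: (i) convert $Q$ from the standard basis into a truncated power series in the local parameter $\phi = x$ at the place $(0,0)$; (ii) call $\RootAlg$ to a suitably large order; and (iii) turn its output back into elements of $\L(mP_\infty)$ and check them. I would fix the order parameter to an integer $K$ that is $\Theta(k)$ but large enough for two purposes: $K > k = \orderz m Q$, so that \cref{lem:root_to_order} guarantees that every root of $Q$ lying in $\L(mP_\infty)$ is among the roots of $Q$ of order $K$; and $K > k + \ell(m+1)$, for the reconstruction argument below. Since $\ell m \leq k$ (because $\Csel Q \ell \neq 0$ gives $\orderz m Q \geq \ell m$), the choice $K = k + \ell(m+1) + 1$ is still $\Theta(k)$, and moreover $q^2 \in O(K)$ because $K > m > 2g-2$. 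Stage (i) is then just converting the $\ell+1$ coefficients $\Csel Q t \in \Ring$, each of $\order$ at most $k$, to power series of precision $K$ via \cref{prop:power_from_poly}; and $\RootAlg(Q, K)$ returns, by its correctness, the roots of $Q$ of order $K$ as $L \leq \ell$ disjoint cosets $A_i = h_i + \phi^{d_i}\Pring$ with $h_i \in \Fqq[\phi]$, $\deg h_i < d_i$ (\cref{lem:root_sets}).

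For stage (iii), the key claim I would establish is that any $A_i$ which contains a root $f \in \L(mP_\infty)$ of $Q$ must have $d_i > m$. Indeed, otherwise the whole coset $f + \phi^{m+1}\Pring \subseteq A_i$ would consist of roots of order $K$; but if $\mu$ is the multiplicity of $f$ as a root of $Q$ over $\mathrm{Frac}(\Ring)$, then the coefficient of $g^\mu$ in $Q(f + \phi^{m+1}g)$ equals $c\,\phi^{\mu(m+1)}$ for a nonzero $c \in \Ring$ with $v_{(0,0)}(c) \leq \order(c) \leq \orderz m Q = k$ (a routine computation, using Hasse derivatives if $\mathrm{char}\,\Fqq \mid \mu!$), so for a suitable $g$ this expression has $\phi$-valuation at most $k + \ell(m+1) < K$, a contradiction. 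Such an $A_i$ therefore determines $f$ uniquely, since two elements of $\L(mP_\infty)$ agreeing to precision $m+1$ differ by an element of $\L(mP_\infty - (m+1)(0,0)) = \{0\}$; so I would feed the first $m+1$ coefficients of $h_i$ to \cref{prop:poly_from_power} to obtain a candidate $\tilde f_i$, and keep $\tilde f_i$ precisely when $Q(\tilde f_i) = \sum_{t} \Csel Q t \,\tilde f_i^{\,t} = 0$ in $\Ring$. By the two directions (genuine roots land in some $A_i$ and are recovered, spurious candidates are rejected by the explicit check) this outputs exactly the desired roots, in the standard basis.

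For the complexity I would tally the four contributions. By its complexity bound, $\RootAlg(Q, K)$ costs $\Oapp(\ell^2 K) = \Oapp(\ell^2 k)$. Every product in $\Ring$ of elements of $\order$ at most $k$ has operands of $x$-degree at most $k/q$ and so costs $\Oapp(k)$ in the standard basis, hence each verification $Q(\tilde f_i)$ costs $\Oapp(\ell k)$ and the whole verification stage costs $\Oapp(\ell^2 k)$. Converting $Q$ to precision $K$ via \cref{prop:power_from_poly} costs $\Oapp(\ell q K) = \Oapp(\ell q k)$, the factor $q$ coming from the $q$ powers of $y$ in the standard basis (computed once and reused across coefficients). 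Finally the $L \leq \ell$ reconstructions by \cref{prop:poly_from_power}, each of precision $m+1 < n = q^3$, cost $\Oapp(\ell q m)$. It remains to absorb the last two terms into the target: $\ell q m \leq \ell q \cdot q^3 = \ell q^4$; and for $\ell q k$, if $k \leq q^3$ then $\ell q k \leq \ell q^4$, while if $k > q^3$ then $q \leq k/q^2$ and $\ell q k \leq \ell k^2/q^2$. Adding everything gives $\Oapp(\ell^2 k + \ell k^2/q^2 + \ell q^4)$.

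The main obstacle I expect is the precision bookkeeping in stage (iii): one must argue that running \cref{alg:root} to an order that is still only $\Theta(k)$ already recovers every root in $\L(mP_\infty)$ to $\phi$-precision exceeding $m$ (not merely to the order $k+1$ that \cref{lem:root_to_order} demands), and pinning down the correct constant and handling the multiple-root case is the delicate point; the rest is cost accounting, whose only subtlety is the case split on whether $k \leq n$ needed to see that the power-series conversions stay within the claimed budget.
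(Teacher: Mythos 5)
Your proposal is correct and lands on the stated complexity, but the way you handle the output side is genuinely different from the paper's proof. The paper simply runs \cref{alg:root} to order $k+1$ and sidesteps the precision question entirely: it observes that \emph{any} element of $\L(mP_\infty)$ arising from \emph{any} returned root set is an unconditional root of $Q$ by \cref{lem:root_to_order}, that $Q$ has at most $\ell$ roots in $\Ring$, and hence that the total cost of converting output via \cref{prop:poly_from_power} is $O(\ell q^4)$ regardless of whether some $d_i \leq m$. You instead inflate the order parameter to $K = k+\ell(m+1)+1$ and prove, via a Hasse-derivative/multiplicity argument, that every root set containing an $\L(mP_\infty)$-root must already have $d_i > m$, so that each set determines at most one candidate which you then reconstruct and verify explicitly. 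Your route buys a cleaner algorithmic statement (deterministic, one candidate per set, with an explicit membership test), at the cost of the delicate valuation argument you flag yourself --- note that the step ``for a suitable $g$'' implicitly needs $\ell+1 \leq q^2$ (or a choice of $g$ outside $\Fqq$) so that the reduced polynomial in $g$ has a non-root; this is harmless in the decoding regime but should be said. Your input-conversion accounting also differs: you bound it by $\Oapp(\ell q K)$ using dense multiplication by the $q$ precomputed series for $y^j$ and then absorb it with a case split on $k$ versus $q^3$, whereas the paper gets $O(\ell k^2/q^2)$ directly from the sparsity count of \cref{lem:sparse} applied monomial-by-monomial; both are valid and both fit the claimed budget. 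The extra verification pass $Q(\tilde f_i)=0$ you add costs $\Oapp(\ell^2 k)$ and is genuinely needed in your scheme (a candidate reconstructed from precision $m+1$ need not lie in $A_i$ when $d_i > m+1$), so keep it.
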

\begin{proof}
  By \cref{lem:root_to_order} we need to set $k = \orderz m Q+1$ in \cref{alg:root}, and by \cref{lem:root_sets} we will be returned a list of at most $\ell$ sets of roots.
  The cost of the main algorithm will therefore be $\Oapp(\ell^2k)$.

  Remaining is conversion of input and output.
  We convert $Q$ into an element of $\Pring[z]$ up to precision $k$ using \cref{prop:power_from_poly} for each $\Ring$-coefficient, which we can do in $O(\ell k^2/q^2)$.
  The root sets returned by the root-finding algorithm need to be converted back into the standard basis.
  Note that we do not a priori know the precision of these roots; in particular, whether each have $d_h > m$ so that unique conversion into $\L(mP_\infty)$ is guaranteed by \cref{lem:unique_power_precision}.
  However, even if multiple $\L(mP_\infty)$-element arise from some of the root sets, then each possible element obtained must be an unconditional root of $Q$ by \cref{lem:root_to_order}, and we know that there can be at most $\ell$ such roots in $\Ring \supset \L(mP_\infty)$.
  Thus in total, we will spend $O(\ell q^4)$ on converting the output roots, by \cref{prop:poly_from_power}.
\end{proof}

\section{Power series conversion}
\label{app:power_series}

For both Guruswami--Sudan decoding as well as Power decoding, we need to efficiently convert $\Ring$ elements between the standard basis and truncated power series descriptions.
More precisely, let $\phi$ denote a local parameter for the place $(0,0)$; we will in fact choose $\phi=x$.
We will describe efficient algorithms to do the following: Given a sufficiently long truncated power series development of an element $f \in \L(mP_\infty)$ in $\phi$, compute $f$; and given $f$ compute its truncated power series expansion in $\phi$.
We will show that we can solve both of these problems reasonably efficiently.

Our usual representation of elements in $\Ring$ is an $\Fqq$-combination of the elements in the standard basis:
\[
  S = \{ x^iy^j \mid \, i \ge 0, \, 0\le j \le q-1\}.
\]
Let $S_m \subset S$ denote the elements of $S$ with $\order$ at most $m$, i.e.~$S_m$ is a basis for $\L(mP_\infty)$.

We begin with showing a structural sparsity of $x^iy^j$ monomials when expressed as power series in $\phi$:
\begin{lemma}\label{lem:sparse}
  Let $i$ and $j$ be nonnegative integers.
  In the power series expansion of $x^iy^j$ in $\phi$ up to some precision $N \leq q^3$, there are at most $q$ nonzero coefficients.
  If $j < q$ then for any $N$, there are at most $N/q^2$ nonzero coefficients.
\end{lemma}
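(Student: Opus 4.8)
The plan is to write down the power–series expansion of $y$ in $\phi=x$ explicitly and then read both bounds off its sparse shape. First I would record that $\phi=x$ is a uniformiser at $(0,0)$ and that $v_{(0,0)}(y)=q+1$: from $y\,(y^{q-1}+1)=x^{q+1}$ the factor $y^{q-1}+1$ is a unit at $(0,0)$, so $v_{(0,0)}(y)=(q+1)v_{(0,0)}(x)=q+1$. Hence $w:=y/\phi^{q+1}$ is a unit with $w(0,0)=1$, and dividing the curve equation $y^q+y=\phi^{q+1}$ by $\phi^{q+1}$ yields
\[
  w \;=\; 1-\phi^{\,q^2-1}\,w^q .
\]
Writing $w=\sum_{n\ge0}c_n\phi^n$ with $c_0=1$ and using that, since $q$ is a power of the characteristic, the map $t\mapsto t^q$ is additive (so $w^q=\sum_{n\ge0}c_n^{\,q}\phi^{nq}$), comparison of coefficients gives $c_m=-c_{(m-q^2+1)/q}^{\,q}$ when $m\ge q^2-1$ and $m\equiv-1\pmod q$, and $c_m=0$ otherwise (for $m\ge1$). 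Iterating, the support of $w$ is exactly $\{0\}\cup\{(q+1)(q^k-1)\mid k\ge1\}$, so $\supp(y)=\{(q+1)q^k\mid k\ge0\}$.

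For the first claim I would observe that the third smallest element $(q+1)(q^2-1)=q^3+q^2-q-1$ of $\supp(w)$ is at least $q^3$, hence $w\equiv 1+c\,\phi^{\,q^2-1}\pmod{\phi^{q^3}}$ for some $c\in\Fqq$. Since reduction modulo $\phi^{q^3}$ is a ring homomorphism, $x^iy^j=\phi^{\,i+j(q+1)}w^j\equiv\phi^{\,i+j(q+1)}\bigl(1+c\,\phi^{\,q^2-1}\bigr)^j\pmod{\phi^{q^3}}$, whose nonzero coefficients sit only at exponents $i+j(q+1)+a(q^2-1)$ with $0\le a\le j$. If $j=0$ there is at most one such exponent below $N$; if $j\ge1$ the smallest one is $\ge q+1$, so an exponent below $N\le q^3$ forces $a(q^2-1)<q^3-q-1<q(q^2-1)$, i.e.\ $a\le q-1$, leaving at most $q$ of them. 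This handles every precision $N\le q^3$ and every $j$, large $j$ being harmless because it has already been truncated away.

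For the second claim ($j<q$) I would expand $w^j$ by the multinomial theorem: its exponents are the sums $\sum_{k\ge1}t_k(q+1)(q^k-1)$ with $t_k\ge0$ and $\sum_{k\ge0}t_k=j<q$. Since each $t_k<q$, setting $n:=\sum_{k\ge1}t_kq^k$ and $M:=n/q$ makes $(t_k)_{k\ge1}$ the base-$q$ digits of $M$, and such an exponent equals $(q+1)\bigl(qM-\sigma_q(M)\bigr)$, where $\sigma_q$ is the base-$q$ digit sum and $\sigma_q(M)=\sum_{k\ge1}t_k\le j$. A one-line check shows $M\mapsto(q+1)(qM-\sigma_q(M))$ is strictly increasing: passing from $M$ to $M+1$ changes $\sigma_q$ by at most $1$, so two consecutive values differ by at least $(q+1)(q-1)=q^2-1$. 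Consequently the $r$-th exponent in this increasing list is $\ge r(q^2-1)$, so the exponents of $x^iy^j=\phi^{\,i+j(q+1)}w^j$ lying below a given bound $N$ number at most $\ceil{N/(q^2-1)}$, i.e.\ of order $N/q^2$, which is the stated estimate (the trivial rounding being absorbed, and only this order being needed in the complexity analysis).

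The only genuinely structural step --- the rest is bookkeeping --- is obtaining the sparse shape of $w$, equivalently the closed form $y=\sum_{k\ge0}(-1)^k\phi^{(q+1)q^k}$; I expect the main work to be here, and it goes through only because the Hermitian equation is close to a binomial, so that the Frobenius recursion closes up neatly. After that, reducing $x^iy^j$ to a power of $w$, the truncation modulo $\phi^{q^3}$ for the first claim, and the base-$q$ digit-sum argument for the second are all elementary; the one mild nuisance is tracking the additive constants in the digit-sum estimates and confirming that the valuation $i+j(q+1)$ absorbs the case of large $j$.
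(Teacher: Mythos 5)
Your proof is correct and follows essentially the same route as the paper's: both rest on the closed-form sparse expansion $y=\sum_{k\ge0}(-1)^k\phi^{(q+1)q^k}$ (the paper obtains it by directly iterating $y=x^{q+1}-y^q$, you via the unit $w=y/\phi^{q+1}$ and a Frobenius recursion), and then count the admissible exponents --- truncating to the first two terms for the precision-$q^3$ claim, and using a base-$q$ digit-sum argument for the $j<q$ claim. The only quibble is that your final count $\lceil N/(q^2-1)\rceil$ for the second claim is marginally weaker than the stated $N/q^2$, but as you note (and as is equally true of the paper's own $\lceil N/(q(q+1))\rceil$ estimate) this is immaterial for the complexity analysis the lemma feeds into.
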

\begin{proof}
  First of all, note that a power series development of $y$ in $x$ can be obtained using $y=x^{q+1}-y^q$.
  Iterating this equation, one obtains that
  \begin{IEEEeqnarray}{rCl}
    \label{eqn:y_expanded}
    y &=& \sum_{b=0}^\infty (-1)^b \phi^{(q+1)q^b}.
  \end{IEEEeqnarray}
  Every term in $y^j$ must therefore have a $\phi$-degree of the form $(q+1)\sum_{b = 0}^\infty a_b q^b$ where the $a_b$ are non-negative integers with $\sum_{b=0}^\infty a_b = j$.
  Let $N' = N/(q+1)$ and $r = \floor{\log_q N'}$.
  In the truncation of $y^j$ to precision $N$ the number of terms is then at most the number of tuples $(a_0,\ldots,a_r)$ such that $\sum_{b = 0}^r a_b q^b < N'$ and $a_0 + \ldots + a_r = j$.

  For the lemma's first claim, if $N \leq q^3$ then $N' < q^2$ and $r \leq 1$.
  Since $a_1 > q$ implies $a_0 + a_1 q > N'$ that leaves at most $q$ possible tuples $(j-a_1, a_1)$ for $a_1 = 0,\ldots, \min(q-1, j)$.
 
  For the second claim, assume $j < q$, and therefore also $a_b < q$ for all $b$.
  Note that $\sum_{b = 0}^r a_b q^b$ is then basically some number less than $N'$ written in base $q$, so we are counting how many numbers less than $N'$ have a digit sum exactly $j$.
  We can upper bound that count by counting those numbers with a digital root exactly $j$, which is $\ceil{N'/q}$ of the numbers.
  In total, there must be at most $\ceil{N/(q+1)/q} \leq N/q^2$ non-zero terms in the power series expansion of $y^j$ up to precision $N$.
  Since $x = \phi$, the same holds for $x^iy^j$.
\end{proof}

This immediately implies that it is fast to convert elements from $S_m$ \emph{into} power series:
\begin{proposition}
  \label{prop:power_from_poly}
  Given $f \in \L(M P_\infty)$ described in the basis $S$, we can compute a power series expansion in $\phi$ up to precision $N$ in complexity $O(MN/q^2)$.
\end{proposition}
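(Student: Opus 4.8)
The plan is to reduce the whole computation to summing a small number of shifted, scaled copies of a few precomputed sparse power series, exploiting the structural sparsity of \cref{lem:sparse}. Write $f$ in the standard basis as $f=\sum_{(i,j)\in T}c_{i,j}\,x^iy^j$, where $T$ is the set of pairs $(i,j)$ with $0\le j\le q-1$ and $\order(x^iy^j)\le M$. Since all monomials $x^iy^j$ with $j<q$ have pairwise distinct $\order$, and those occurring in $f$ have $\order$ in $\{0,1,\dots,M\}$, we get $|T|\le M+1$.

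First I would precompute the truncations $Y_j:=y^j \bmod \phi^{N}$ for $j=0,\dots,q-1$. The series $Y_1=y\bmod\phi^N$ is read off directly from \cref{eqn:y_expanded} and has only $O(\log_q N)$ nonzero coefficients, and $Y_j$ is obtained from $Y_{j-1}$ by a single truncated multiplication by $Y_1$. By \cref{lem:sparse}, each $Y_{j-1}$ with $j-1<q$ has at most $N/q^2$ nonzero coefficients, so this step costs $O\big((N/q^2)\log_q N\big)$, and all the $Y_j$ are available after $O\big((N/q)\log_q N\big)$ field operations; in the regimes where the proposition is used this is dominated by the asserted bound (where $M$ is of the same order as $N$ or larger, and $q^2=O(M)$), and in any case the $Y_j$ depend only on the code, not on $f$.

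It then remains to form $f\bmod\phi^N=\sum_{(i,j)\in T}c_{i,j}\,\phi^i\,Y_j \bmod \phi^N$. Each summand $c_{i,j}\phi^iY_j$ is obtained from $Y_j$ by an index shift and a scalar multiplication and, again by \cref{lem:sparse}, affects at most $N/q^2$ coefficients, so adding it into the running accumulator costs $O(N/q^2)$; accumulating all $|T|\le M+1$ summands (plus $O(N)$ to initialise a length-$N$ array, which is within the bound since $M=\Omega(q^2)$ here) therefore costs $O\big(M\cdot N/q^2\big)$, as claimed. There is no deep obstacle: the only point requiring care is bookkeeping — every intermediate object must be handled in a sparse representation and \cref{lem:sparse} invoked at each step, since treating the length-$N$ series densely per monomial would already cost $\Theta(N)$ apiece and destroy the bound.
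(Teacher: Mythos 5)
Your proposal is correct and follows essentially the same route as the paper's proof, which simply observes that $f$ is a combination of at most $M$ monomials and that each monomial's truncated power series has at most $N/q^2$ nonzero coefficients by \cref{lem:sparse}, so scaling and summing costs $O(MN/q^2)$. Your additional care about precomputing the $Y_j=y^j\bmod\phi^N$ and keeping everything in a sparse representation is a valid (and slightly more explicit) elaboration of the same argument, not a different approach.
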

\begin{proof}
  $f$ is the linear combination of at most $M$ monomials $x^iy^j$, so the power series can be computed by scaling and summing each of these monomial's power series.
  The claim then follows from \cref{lem:sparse}.
\end{proof}

For conversion \emph{from} power series it turns out that a useful stepping stone is a slightly different basis than $S_m$:
\begin{lemma}\label{lem:bases}
Let $m$ be an integer at most $q^3$.
The set
\[
  \hat S_m = \{ x^iy^j \mid qi+(q+1)j \le m, \, 0 \le i \le q, \, j \ge 0\}
\]
is a basis for $\L(mP_\infty)$.
Moreover, any element of $\hat S_m$ can be expressed as a linear combination of at most $q+1$ elements from $S_m$.
\end{lemma}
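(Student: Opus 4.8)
The plan is to treat the two assertions separately. For the basis claim, the starting point is that for all nonnegative $i,j$ — not only for $j<q$ — one has $v_{P_\infty}(x^iy^j)=-qi-(q+1)j$, because $v_{P_\infty}(x)=-q$, $v_{P_\infty}(y)=-(q+1)$, and elements of $\Ring$ have no other poles. Hence $\order(x^iy^j)=qi+(q+1)j$, so every $x^iy^j\in\hat S_m$ lies in $\L(mP_\infty)$. Next, distinct elements of $\hat S_m$ have distinct pole orders: if $qi_1+(q+1)j_1=qi_2+(q+1)j_2$ with $0\le i_1,i_2\le q$, then $(q+1)\mid q(i_1-i_2)$, hence $(q+1)\mid(i_1-i_2)$ since $\gcd(q,q+1)=1$, and $|i_1-i_2|\le q<q+1$ forces $i_1=i_2$ and then $j_1=j_2$. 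Functions with pairwise distinct pole orders are $\Fqq$-linearly independent, so $\hat S_m$ is linearly independent in $\L(mP_\infty)$. It remains to check $|\hat S_m|=\dim\L(mP_\infty)$, for which I would use that $|S_m|=\dim\L(mP_\infty)$ (already established) and show $|\hat S_m|=|S_m|$: both $S_m$ and $\hat S_m$ inject, by $x^ay^b\mapsto qa+(q+1)b$ (injectivity being the same kind of $\gcd$ argument), into $\{s : 0\le s\le m,\ s=qa+(q+1)b\text{ for some }a,b\ge0\}$, and both hit all of it — for $S_m$ one lowers $b$ below $q$ via $qa+(q+1)b=q(a+q+1)+(q+1)(b-q)$, and for $\hat S_m$ one lowers $i$ to at most $q$ via $qi+(q+1)j=q(i-q-1)+(q+1)(j+q)$. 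So $\hat S_m$ is a basis of $\L(mP_\infty)$.

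For the second assertion, fix $x^iy^j\in\hat S_m$. If $j<q$ then $x^iy^j\in S$ and $\order(x^iy^j)\le m$, so $x^iy^j\in S_m$ and there is nothing to do. Assume $j\ge q$; from $(q+1)j\le qi+(q+1)j\le m\le q^3$ we get $j<q^2$, so we may write $j=qj_1+j_0$ with $0\le j_0,j_1\le q-1$ and use the curve relation to get $y^j=(y^q)^{j_1}y^{j_0}=(x^{q+1}-y)^{j_1}y^{j_0}$. This is a polynomial in $y$ of degree $j_1+j_0\le 2q-2$; write it as $A(y)+y^qB(y)$ with $\deg_yA<q$, and reduce once more to $y^j=\bigl(A(y)+x^{q+1}B(y)\bigr)-yB(y)$, which has $y$-degree $<q$. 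Expanding $(x^{q+1}-y)^{j_1}y^{j_0}$ by the binomial theorem, the coefficient of each power of $y$ is a single power of $x$; counting, $A$ contributes the $q-j_0$ powers of $y$ with exponent in $[j_0,q-1]$, while for each exponent in $[0,j_1+j_0-q+1]$ the contributions of $x^{q+1}B$ and of $-yB$ have the same $x$-degree and so collapse, by Pascal's identity, to a single monomial. After checking that the ranges $[j_0,q-1]$ and $[0,j_1+j_0-q+1]$ are disjoint — they overlap only when $j_1=q-1$, in which case $(q+1)j=(q+1)(q^2-q+j_0)\le q^3$ forces $j_0=0$, so $B=0$ and $y^j=(x^{q+1}-y)^{q-1}$ already has exactly $q$ monomials — the total is at most $(q-j_0)+(j_1+j_0-q+2)=j_1+2\le q+1$. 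Multiplying by $x^i$ only shifts $x$-degrees and does not change the number of monomials.

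Finally, each monomial $x^ay^b$ appearing in this reduction lies in $S_m$, because the reduction never raises pole order: replacing a factor $y^q$ by $x^{q+1}-y$ sends $x^ay^b$ (with $b\ge q$) to $x^{a+q+1}y^{b-q}$, of the same order, plus $-x^ay^{b-q+1}$, of strictly smaller order; hence every monomial occurring has order at most $\order(x^iy^j)\le m$. The main obstacle is the counting step: a naive two-stage reduction produces on the order of $2q$ monomials, so one genuinely needs the Pascal-identity collapse of the $x^{q+1}B$ and $-yB$ parts together with the hypothesis $m\le q^3$, which rules out the edge case $j_1=q-1$, $j_0\ge1$; everything else is routine arithmetic in the semigroup $\langle q,q+1\rangle$ and valuation bookkeeping.
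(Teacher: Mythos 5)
Your proof is correct and follows essentially the same route as the paper's: for the monomial count you write $j=j_0+qj_1$, substitute $y^q=x^{q+1}-y$, split off the part of $y$-degree at least $q$ and reduce once more, and your Pascal-identity collapse is exactly the paper's observation that $y^qB=(x^{q+1}-y)B$ is homogeneous in $x^{q+1}$ and $y$ and hence contributes at most $\deg_y B+2$ monomials, giving the same total $j_1+2\le q+1$. The only difference is one of detail: you prove the basis claim in full (distinct pole orders plus a cardinality count against $S_m$) and explicitly verify that every monomial produced has order at most $m$, where the paper dismisses these points in a line, so your version is a more careful rendering of the same argument rather than a different one.
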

\begin{proof}
That $\hat S_m$ is a basis for $\L(mP_\infty)$ is clear: writing an element of $\Ring = \Fqq[x,y]$ as a polynomial, the equation of the Hermitian curve $y^q+y=x^{q+1}$ has simply been used to reduce the $x$-degree (where for $S$, the $y$-degree was reduced).

Now let $x^iy^j \in \hat S_m$.
We wish to express it as a linear combination of elements from $S_m$.
It is sufficient to show that $y^j$ with $0\le j \le m/(q+1)$ can be expressed as such a linear combination.
First of all write $j=a+bq$ for unique, nonnegative integers $a$ and $b$ at most $q-1$.
Then we have
\begin{equation}\label{eq:reduce1}
y^j=y^a\left(y^q\right)^b=y^a(x^{q+1}-y)^b.
\end{equation}
If $a+b \le q-1$, this is clearly an expression of $y^j$ as a linear combination of at most $b+1 \leq q$ elements in $S_m$.
If on the other hand $a+b\geq q$, we write $(x^{q+1}-y)^b = p_1 + y^{q-a}p_2$, where $\deg_y p_1 < q-a$ and $\deg_y p_2 \leq a+b-q$, and where both $p_1$ and $y^{q-a}p_2$ are homogeneous in the expressions $x^{q+1}$ and $y$.
Now
\begin{IEEEeqnarray}{rClCl}
  \label{eq:reduce2}
y^j & = & y^a\left(p_1+y^{q-a}p_2\right)
    & = & y^a p_1(x,y)+(x^{q+1}-y)p_2.
\end{IEEEeqnarray}
Note that $\deg_y(y^a p_1) < q$, but that also $\deg_y((x^{q+1}-y)p_2) \leq a+b-q+1 < q$.
Therefore equation \eqref{eq:reduce2} gives the desired expression of $y^j$ as linear combination of elements in $S_m$, and we need estimate only the number of elements in this combination.
But both $p_1$ and $(x^{q+1}-y)p_2$ are homogeneous polynomials in $x^{q+1}$ and $y$ so the number of monomials occurring in each of them is at most their $y$-degree plus one.
This gives a total of at most $(q-a) + (a+b-q+2) = b+2 \leq q+1$ monomials.
\end{proof}

The above lemma shows that we can convert any function $f \in \L(mP_\infty)$ expressed in the basis $\hat S_m$ into the basis $S_m$ in complexity $O(mq) \subset O(q^4)$.\footnote{%
  It is, in fact, easy to show that the reverse conversion can be done with the same complexity, but we will not need that conversion.
}

\begin{lemma}
  \label{lem:unique_power_precision}
Suppose that $f \in \L(mP_\infty)$ with $m < q^3$, and that $m+1$ values $a_i \in \mathbb{F}_{q^2}$ are given such that $f =\sum_{i=0}^{m}a_i \phi^i+O(\phi^{m+1})$.
Then $f$ is determined uniquely.
\end{lemma}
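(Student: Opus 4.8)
The plan is to prove uniqueness directly: suppose $f, f' \in \L(mP_\infty)$ both admit the expansion $\sum_{i=0}^{m} a_i \phi^i + O(\phi^{m+1})$ with the \emph{same} coefficients $a_0,\ldots,a_m$, put $g = f - f'$, and show that $g = 0$.

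First I would record that $(0,0)$ is a rational place of $F$ — indeed $\Herm(0,0) = 0$ — and that $\phi = x$ is a local parameter there. The latter follows from \cref{prop:herm_basics}: we have $\Div{x} = \sum_{\beta \in B_0} (0,\beta) - qP_\infty$, and since $|B_0| = q$ while this divisor has degree $0$, each place $(0,\beta)$ — in particular $(0,0)$ — must occur with coefficient exactly $1$, so $v_{(0,0)}(x) = 1$. Consequently the $\phi$-adic order of any $h \in \Ring$ equals $v_{(0,0)}(h)$.

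Next, since $g \in \L(mP_\infty) \subseteq \Ring$, the only pole $g$ can have is $P_\infty$, with $v_{P_\infty}(g) \geq -m$; and by hypothesis the power series of $g$ in $\phi$ vanishes to order at least $m+1$, i.e.\ $v_{(0,0)}(g) \geq m+1$. Since $g$ has no poles at any affine place, these two conditions together say precisely that $g \in \L\big(mP_\infty - (m+1)(0,0)\big)$. This divisor has degree $m - (m+1) = -1 < 0$, so its Riemann--Roch space is $\{0\}$; hence $g = 0$ and $f = f'$.

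There is essentially no obstacle here — the argument is the mirror image of the one used for \cref{lem:root_to_order}, with the roles of $P_\infty$ and $(0,0)$ interchanged. The only point requiring a moment's care is the claim that $x$ is a local parameter at $(0,0)$, and that is immediate from the divisor formula in \cref{prop:herm_basics}.
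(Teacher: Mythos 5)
Your proof is correct, and its final step is the same as the paper's: the difference of two candidates lies in $\L\big(mP_\infty-(m+1)(0,0)\big)$, a Riemann--Roch space of a divisor of negative degree, hence zero. The routes to that point differ, though. You give the pure uniqueness argument (take $g=f-f'$ and kill it), which is the minimal proof of the statement as written; the paper instead works in the alternative basis $\hat S_m$ of \cref{lem:bases}, using the fact that its elements have pairwise distinct vanishing orders at $(0,0)$ to show that the coefficients $a_0,\ldots,a_m$ \emph{determine} a concrete linear combination $g$ with $v(f-g)>m$, and only then invokes the negative-degree argument to conclude $f=g$. The paper's detour is deliberate: the constructive version is what powers the back-substitution procedure described immediately afterwards and the $O(q^4)$ complexity claim of \cref{prop:poly_from_power}, none of which your argument yields. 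So your proof buys brevity and independence from $\hat S_m$, at the cost of not supporting the algorithmic content that the lemma is really there for. Your explicit check that $\phi=x$ is a local parameter at $(0,0)$, via $\Div{x}=\sum_{\beta\in B_0}(0,\beta)-qP_\infty$, is a correct detail that the paper leaves implicit.
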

\begin{proof}
Consider $f$ described in the basis $\hat S_m$.
Note that the functions $x^iy^j\in \hat S_m$ have distinct order of vanishing (i.e. valuation $v(\cdot)$) at the place $(0,0)$: indeed $v(x^iy^j)=i+j(q+1)$, and since $0\leq i \leq q$, these quantities will be distinct as $x^iy^j$ runs through $\hat S_m$.
Also, for any $x^iy^j \in \hat S_m$ we have $v(x^i y^j) \leq m < q^3$.
The coefficients $a_i$ therefore uniquely determine a linear combination $g=\sum_{i,j} b_{i,j} x^iy^j \in \L(mP_\infty)$ of elements in $\hat S_m$ such that $v(f-g)>m$.
This implies that $f-g \in \L(mP_\infty-(m+1)(0,0))$.
However, since that divisor clearly has negative degree, the Riemann-Roch space must be $\{0\}$, implying $f=g$ as desired.
\end{proof}

Note that when computing the linear combination $g=\sum_{i,j} b_{i,j} x^iy^j$ in the above proof, we are essentially using back-substitution: one finds $x^iy^j \in \hat S_m$ and $c \in \mathbb{F}_{q^2}$ such that $v(f-cx^iy^j)>v(f)$, i.e., one eliminates the lowest order term in the approximate power series development of $f$.
Then one updates $f$ to $f-cx^iy^j$ (as well as the corresponding truncated power series) and iterates this process till all coefficients in the truncated power series of $f$ are eliminated.
By \cref{lem:sparse} an update can be performed in $O(q)$.
The total construction of $g$ therefore can be done in $O(q^4)$.
If one ends in the situation that a coefficient in the (updated) truncated power series cannot be eliminated by adding a multiple of a power series development of an element from $\hat S_m$, then the conclusion is that for no $f \in \L(mP_\infty)$ it holds that $f =\sum_{i=0}^{m}a_i \phi^i+O(\phi^{m+1})$.
Otherwise, one ends up with $f$ described in the basis $\hat S_m$, and one can then convert into the basis $S_m$.
All in all, we have shown the following:

\begin{proposition}
  \label{prop:poly_from_power}
Let $m< q^3$.
Given a truncated power series development to precision $m+1$ for an element $f$, one can determine whether or not $f \in \L(mP_\infty)$, and in the affirmative case express $f$ in the basis $S_m$ in complexity $O(q^4)$.
\end{proposition}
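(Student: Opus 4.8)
The plan is to turn the truncated power series into the $S_m$-representation by a back-substitution (Gaussian-elimination) loop carried out against the auxiliary basis $\hat S_m$ of \cref{lem:bases}, rather than against $S_m$ directly. The decisive property I would exploit is that the monomials $x^iy^j\in\hat S_m$ have \emph{pairwise distinct} valuations at the place $(0,0)$, namely $v(x^iy^j)=i+j(q+1)$, and that all of these are at most $m<q^3$. Concretely, I maintain a truncated power series $P\equiv f\bmod\phi^{m+1}$, initialised to the given data, and an accumulator $g$ expressed in the $\hat S_m$-basis. While $P\neq 0$: I read off the lowest-order nonzero term $c\phi^v$ of $P$; I write $v=j(q+1)+i$ with $0\le i\le q$ by a single integer division; if $\order(x^iy^j)=qi+(q+1)j>m$ then $x^iy^j\notin\hat S_m$ and I declare that no $f\in\L(mP_\infty)$ matches the input; otherwise I add $c\,x^iy^j$ to $g$ and perform $P\ass P-c\sigma$, where $\sigma$ is the power series of $x^iy^j$ truncated modulo $\phi^{m+1}$ (its leading coefficient is $1$ since $x=\phi$, so no division is needed). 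At the end, $g$ is expressed in $\hat S_m$, and I convert it to $S_m$ using the explicit rewriting of \cref{lem:bases}.

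For correctness I would argue the dichotomy: the loop terminates successfully if and only if some $f\in\L(mP_\infty)$ has the prescribed expansion, and then $g=f$. If such an $f$ exists, write $f=\sum b_{i,j}x^iy^j$ over $\hat S_m$ (\cref{lem:bases}); because the summands have distinct valuations, each iteration peels off the summand of least valuation still present and strictly raises $v(P)$, so after finitely many steps $P\equiv 0\bmod\phi^{m+1}$, i.e. $v(f-g)>m$, hence $f-g\in\L\big(mP_\infty-(m+1)(0,0)\big)=\{0\}$ and $f=g$. Conversely, whenever the loop succeeds it outputs $g\in\L(mP_\infty)$ with $g$ matching the input to precision $m+1$, so an $f$ does exist and by \cref{lem:unique_power_precision} it is unique and equals $g$; and if the loop instead hits a lowest term $c\phi^v$ whose monomial is not in $\hat S_m$, then no matching $f\in\L(mP_\infty)$ can exist, since any such $f$ is a combination of $\hat S_m$-monomials and therefore every residual $P$ the loop produces has all its valuations among the (distinct) valuations of $\hat S_m$.

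For the complexity: each iteration strictly increases the least degree of $P$ within $\{0,\dots,m\}$, so there are at most $m+1=O(q^3)$ iterations; by advancing a single pointer to the current lowest position, the total cost of locating lowest terms over the whole run is $O(m)=O(q^3)$. The only real per-iteration cost is the update $P\ass P-c\sigma$, and here \cref{lem:sparse} is exactly what is needed: since the truncation precision is at most $q^3$, the power series $\sigma$ of $x^iy^j$ has at most $q$ nonzero coefficients, so each update is $O(q)$ operations in $\Fqq$, giving $O(q^4)$ for the loop. The final change of basis from $\hat S_m$ to $S_m$ costs $O(q)$ per $\hat S_m$-monomial by \cref{lem:bases}, hence $O(mq)\subset O(q^4)$, and the total is $O(q^4)$ as claimed.

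The part needing the most care — the only genuine obstacle — is the failure test: one must be certain that a lowest-order term of $\phi$-degree $\le m$ which does not come from an $\hat S_m$-monomial (because that monomial has $P_\infty$-pole order exceeding $m$) really certifies non-membership in $\L(mP_\infty)$, and dually that the loop never requires power-series data beyond degree $m$. Both are consequences of the distinct-valuations property of $\hat S_m$ together with the triviality of $\L\big(mP_\infty-(m+1)(0,0)\big)$, i.e. of \cref{lem:bases} and \cref{lem:unique_power_precision}.
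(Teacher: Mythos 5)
Your proposal is correct and follows essentially the same route as the paper: back-substitution against the auxiliary basis $\hat S_m$ of \cref{lem:bases}, exploiting the pairwise distinct valuations at $(0,0)$ for termination and for the failure test, \cref{lem:sparse} for the $O(q)$ cost per elimination step, and the $O(mq)\subset O(q^4)$ conversion from $\hat S_m$ to $S_m$ at the end. Your write-up is in fact somewhat more explicit than the paper's about why a residual with a lowest-order term not matching any $\hat S_m$-valuation certifies non-membership, but the argument is the same.
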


\bibliographystyle{ieeetr}
\bibliography{bibtex}

\end{document}